\newif\ifdraft\draftfalse
\newif\ifshort\shortfalse
\newcommand\os[1]{\todo[inline,size=\scriptsize,backgroundcolor=PaleTurquoise]{#1 - \textbf{Olivier}}}
\newcommand\ah[1]{\todo[inline,size=\scriptsize,backgroundcolor=Yellow]{#1 - \textbf{Axel}}}
\newcommand\ac[1]{\todo[inline,size=\scriptsize,backgroundcolor=SpringGreen]{#1 - \textbf{Arnaud}}}
\newcommand{\acchanged}[1]{{\color{Green}{#1}}}
\newcommand\oschanged[1]{{\color{RoyalBlue}{#1}}}
\newcommand\vlong[1]{\todo[inline,size=\scriptsize,backgroundcolor=red, caption={2do}]{
\begin{minipage}{\textwidth-4pt}#1  - \textbf{Garder dans la version longue}\end{minipage}}}
\newcommand\os[1]{}
\newcommand\ac[1]{}
\newcommand\ah[1]{}
\newcommand\review[1]{}
\newcommand\oschanged[1]{#1}
\newcommand\acchanged[1]{#1}
\newcommand\vlong[1]{}
\newcommand{\ie}{\emph{i.e.}\xspace}
\newcommand{\aka}{a.k.a.\xspace}
\renewcommand{\epsilon}{\varepsilon}
\renewcommand{\phi}{\varphi}
\newcommand{\N}{\mathbb{N}}
\newcommand{\A}{\mathcal{A}}
\newcommand{\B}{\mathcal{B}}
\newcommand{\T}{ t }
\newcommand{\Branches}{ \{0,1\}^{\omega} }
\newcommand{\Acc}{\mathrm{Acc}}
\newtheorem{theorem}{Theorem}
\newtheorem{proposition}{Proposition}
\newtheorem{lemma}{Lemma}
\newtheorem{remark}{Remark}
\def\timenow{\@tempcnta\time
  \@tempcntb\@tempcnta
  \divide\@tempcntb60
  \ifnum10>\@tempcntb0\fi\number\@tempcntb
  \multiply\@tempcntb60
  \advance\@tempcnta-\@tempcntb
  :\ifnum10>\@tempcnta0\fi\number\@tempcnta}
\newcommand\textbfit[1]{\textbf{\em #1}}
\newcommand{\defin}[1]{\textbfit{\boldmath #1}}
\newcommand{\qini}{q_{\mathrm{ini}}}
\newcommand{\col}{\mathrm{Col}}
\newcommand{\colors}{C}
\newcommand{\nat}{\mathbb{N}}
\newcommand{\prefix}{\sqsubseteq}
\newcommand{\prefixstrict}{\sqsubset}
\newcommand{\Eloise}{\'Elo\"ise\xspace}
\newcommand{\Abelard}{Ab\'elard\xspace}
\newcommand{\Ei}{\mathbf{E}}
\newcommand{\Ai}{\mathbf{A}}
\newcommand{\arena}{\mathcal{G}}
\newcommand{\game}{\mathbb{G}}
\newcommand{\play}{\lambda}
\newcommand{\Plays}[1]{\mathrm{Plays}}
\newcommand{\Traces}[1]{\mathrm{Traces}}
\newcommand{\branch}{\pi} 
\newcommand{\run}{\rho}
\newcommand{\Evei}{\Ei}
\newcommand{\Adami}{\Ai}
\newcommand{\LRejAtMostCount}[1]{L^{\mathrm{Rej}}_{\leq \mathrm{Count}}(#1)}
\newcommand{\LAccInf}[1]{L^{\mathrm{Acc}}_{\infty}(#1)}
\newcommand{\LAccUnc}[1]{L^{\mathrm{Acc}}_{Uncount}(#1)}
\newcommand{\LRejAtMostFin}[1]{L^{\mathrm{Rej}}_{\mathrm{Fin}}(#1)}
\newcommand{\gameRejFin}[1]{\game^{\mathrm{Rej} \mathrm{Fin}}_{#1}}
\newcommand{\gameRejCount}[1]{\game^{\mathrm{Rej}\leq \mathrm{Count}}_{#1}}
\newcommand{\gameRejCountB}[1]{\game'^{\mathrm{Rej}\leq \mathrm{Count}}_{#1}}
\newcommand{\gameAccUnc}[1]{\game^{\mathrm{Acc}\, \mathrm{Uncount}}_{#1}}
\newcommand{\gameAccInf}[1]{\game^{\mathrm{Acc}\, \infty}_{#1}}
\newcommand{\gameAccUnck}[1]{\widetilde{\game}^{\mathrm{Acc}\, \mathrm{Uncount}}_{#1}}
\newcommand{\VE}{V_{\Evei}}
\newcommand{\VA}{V_{\Adami}}
\newcommand{\WC}{\Omega}
\newcommand{\eg}{\emph{e.g.}\xspace}
\newcommand{\resp}{resp.\xspace} 
\newcommand{\set}[1]{\{#1\}}
\newcommand{\Brchs}{\mathcal{\B}}
\newcommand{\LLarge}[1]{L^{\mathrm{Acc}}_{\mathrm{Large}}(#1)}
\newcommand{\gameAccLarge}[1]{\game^{\mathrm{Acc}\, Large}_{#1}}
\title{Counting Branches in Trees Using Games}
\author[1]{Arnaud Carayol}
\author[2]{Axel Haddad}
\author[3]{Olivier Serre}
\affil[1]{LIGM (CNRS \& Université Paris Est)}
\affil[2]{Université de Mons}
\affil[3]{LIAFA (CNRS \& Université Paris Diderot -- Paris 7)}
\begin{document}

\maketitle

\begin{abstract}
{
We study finite automata running over infinite binary trees. \acchanged{A run of such an automaton is usually said to be accepting if all its branches are accepting. In this article, we relax the notion of accepting run by allowing a certain quantity of rejecting branches.}  More precisely we study the following criteria for a run to be accepting: 
\begin{enumerate}[(i)]
\item it contains at most finitely (\resp countably) many rejecting branches;
\item it contains infinitely (\resp uncountably) many accepting branches;
\item the set of accepting branches is topologically “big”.
\end{enumerate}
In all situations we provide a simple acceptance game that later permits to prove that the languages accepted by automata with cardinality constraints are always $\omega$-regular.
In the case (ii) where one counts accepting branches it leads to new proofs (without appealing to logic) of an old result of Beauquier and Niwi\'nski.}
\end{abstract}


\section{Introduction}

There are several natural ways of describing sets of infinite trees. One is \emph{logic} where, with any formula, is associated the set of all trees for which the formula holds. Another option is using \emph{finite automata}. 
Finite automata on infinite trees (that extends both automata on infinite words and on finite trees) were originally introduced by Rabin in \cite{Rabin69} to prove the decidability of the monadic second order logic (MSOL) over the full binary tree. Indeed, Rabin proved that for any MSOL formula, one can construct a tree automaton such that it accepts a non empty language if and only if the original formula holds at the root of the full binary tree. 
{These automata were also successfully used by Rabin in \cite{Rabin72} to solve Church's synthesis problem \cite{Church62}, that  asks for constructing a circuit based on a formal specification (typically expressed in MSOL) describing the desired input/output behaviour. }
His approach was to represent the set of all possible behaviours of a circuit by an infinite tree (directions  code the inputs while node labels along a branch code the outputs) and to reduce the synthesis problem to emptiness of a tree automaton accepting all those trees coding circuits satisfying the specification.
Since then, automata on infinite trees and their variants have been intensively studied and found many applications, in particular in logic. Connections between automata on infinite trees and logic are discussed \eg in the excellent surveys \cite{Thomas97,VW07}.

Roughly speaking a finite automaton on infinite trees is a finite memory machine that takes as input an infinite node-labelled binary tree and processes it in a top-down fashion as follows. It starts at the root of the tree in its initial state, and picks (possibly nondeterministically) two successor states, one per son, according to the current control state, the letter at the current node and the transition relation. Then the computation proceeds in parallel from both sons, and so on. Hence, a run of the automaton on an input tree is a labelling of this tree by control states of the automaton, that should satisfy the local constrains imposed by the transition relation. A branch in a run is accepting if the $\omega$-word obtained by reading the states along the branch satisfies some acceptance condition (typically an $\omega$-regular condition such as a B\"uchi or a parity condition). Finally, a tree is accepted by the automaton if {there exists} a run over this tree in which \emph{every} branch is accepting. An $\omega$-regular tree language is a tree language accepted by some tree automaton equipped with a parity condition.

A fundamental result of Rabin is that $\omega$-regular tree languages form a Boolean algebra \cite{Rabin69}.   \acchanged{The main technical difficulty in establishing this result is to show the closure under complementation.}
Since the publication of this result in 1969, it has been a challenging problem to simplify \acchanged{this} proof. A much simpler one was obtained by Gurevich and Harrington in \cite{GurevichH82} making use of two-player perfect information games for checking membership of a tree in the language accepted by the automaton (note that the idea of using games to prove this result was already proposed by Büchi in \cite{Buchi77}):
 \Eloise (\aka \emph{Automaton}) builds a run on the input tree while \Abelard (\aka \emph{Pathfinder}) tries to exhibit a rejecting branch in the run. 
Another fruitful connection between automata and games is for  emptiness checking. 
In a nutshell the emptiness problem for an automaton on infinite trees can be modelled as a game where \Eloise builds an input tree together with a run while \Abelard tries to exhibit a rejecting branch in the run. 
Hence, the emptiness problem for tree automata can be reduced to solving a two-player parity game played on a \emph{finite} graph. 
Beyond these results, the tight connection between automata and games is one of the main tools in automata theory \cite{Thomas97,GWT02,Loeding11}. 

There are several levers on which one can act to define alternative families of tree automata / classes of tree languages. A first lever is \emph{local} with respect to the run: it is the condition required for a branch to be accepting, the reasonable options here being all classical $\omega$-regular conditions (reachability, Büchi, parity\ldots). A second one has to do with the set of runs. The usual definition is existential: a tree is accepted if there exists an accepting run on that tree. Other popular approaches are universality, alternation or probabilistic. A third lever is \emph{global} with respect to the run: it is the condition required for a run to be accepting. The usual definition is that \emph{all} branches must be accepting for the run to be accepting but one could  relax this condition by specifying \emph{how many} branches should be accepting/rejecting. One can do this either by \emph{counting} the number of accepting branches (\eg infinitely many, uncountably many)  or by counting the number of rejecting branches (\eg finitely many, at most countably many): this leads to the notion of automata with cardinality constraints \cite{BeauquierNN91,BN95}. {As these properties can be expressed in MSOL \cite{BKR10}, the classes of languages accepted under these various restrictions are always $\omega$-regular. However, this logical approach does not give a tractable transformation to standard parity or Büchi automata.}   {Another option is to use a notion of topological “bigness” and to require for a run to be accepting that the set of accepting branches is big}~\cite{VVK05,ACV10}.  Yet another option considered in \cite{CHS11,FPS13} is to \emph{measure} (in the usual sense of measure theory) the set of accepting branches and to put a constraint on this measure (\eg positive, equal to one).

The idea of allowing a certain amount of rejecting branches in a run was first considered by Beauquier, Nivat and Niwi{\'n}ski in \cite{BeauquierNN91,BN95}, where it was required that the number of accepting branches in a run belongs to a specified set of cardinals $\Gamma$. In particular, they proved that if $\Gamma$ consists of all cardinals greater than some $\gamma$, then one obtains an $\omega$-regular tree language. Their approach was based on logic (actually they proved that a tree language defined by such an automaton can be defined by a $\Sigma_1^1$ formula hence, can also be defined by a \emph{Büchi} tree automaton) while the one we develop here is based on designing acceptance games. 
There is also work on the logical side with decidable results but that do not lead to efficient algorithms~\cite{BKR10}.

Our main contributions are to introduce (automata with cardinality constraints on the number of rejecting branches; automata with topological bigness constraints) or revisit (automata with cardinality constraints on the number of accepting branches) variants of tree automata where acceptance for a run allows a somehow negligible set of rejecting branches. For each model, we provide a game counterpart by mean of an equivalent acceptance game and this permits to retrieve the classical (and fruitful) connection between automata and game. It also permit to argue that languages defined by those classes are always $\omega$-regular. Moreover, in the case where one counts accepting branches we show that the  languages that we obtain are always accepted by a \emph{Büchi} automaton, which contrasts with the case where one counts rejecting branches where we exhibit a counter-example for that property. 

The paper is organised as follows. Section~\ref{sec:def} recalls classical concepts while Section~\ref{sec:autodef} introduces the main notions studied in the paper, namely automata with cardinality constraints and automata with topological bigness constraints. Then, Section~\ref{sec:results-rejecting} studies those languages obtained by automata with cardinality constraints on the number of rejecting branches while Section~\ref{sec:results-accepting} is devoted to those languages obtained by automata with cardinality constraints on the number of accepting branches. Finally, Section~\ref{sec:results-topo} considers automata with topological bigness constraints.


\section{Preliminaries}\label{sec:def}

\subsection{Words and Trees}

An \defin{alphabet} $A$ is a (possibly infinite) set of letters. In the sequel $A^*$ denotes the set of finite words over $A$, and $A^\omega$ the set of infinite words (or $\omega$-words) over $A$. The \defin{empty word} is written $\epsilon$. The length of a word $u$ is denoted by $|u|$. For any $k\geq 0$, we let $A^k=\{u\mid |u|=k\}$, $A^{\leq k}=\{u\mid |u|\leq k\}$ and $A^{\geq k}=\{u\mid |u|\geq k\}$. We let $A^+=A^*\setminus\{\epsilon\}$.

 Let $u$ be a finite word and $v$ be a (possibly infinite) word. Then $u\cdot v$ (or simply $uv$) denotes the \defin{concatenation} of $u$ and $v$; the word $u$ is a \defin{prefix} of $v$, denoted $u\prefix v$, if there exists a word $w$ such that $v=u\cdot w$. We denote by $u \prefixstrict v$ the fact that $u$ is a strict prefix of $v$ (\ie\ $u \prefix v$ and $u \not=v$). For some word $u$ and some integer $k\geq 0$, we denote by $u^k$ the word obtained by concatenating $k$ copies of $u$ (with the convention that $u^0=\epsilon$).

In this paper we consider full binary node-labelled trees. Let $A$ be an alphabet, then an \defin{$A$-labelled tree} $\T$ is a (total) function from  $\{0,1\}^{*}$ to $A$. In this context, an element $u\in\{0,1\}^*$ is called a \defin{node}, and the node $u\cdot 0$ (\emph{resp.} $u\cdot 1$) is the \defin{left son} (\emph{resp.} \defin{right son}) of $u$. The node $\varepsilon$ is called the \defin{root}.
The letter $t(u)$ is called the \defin{label} of $u$ in $\T$. 

A \defin{branch} is an infinite word $\branch\in\{0,1\}^\omega$ and a node $u$ belongs to a branch $\branch$ if $u$ is a prefix of $\branch$. For an $A$-labelled  tree $\T$ and a branch $\branch=\branch_0 \branch_1\cdots$ we define the \defin{label} of $\branch$ as the $\omega$-word $\T(\branch)=\T(\epsilon)\T(\branch_0)\T(\branch_0\branch_1)\T(\branch_0\branch_1\branch_2)\cdots$.


\subsection{Two-Player Perfect Information Turn-Based Games on Graphs}\label{sec:prelim:games}

A \defin{graph} is a pair $G=(V,E)$ where $V$ is a (possibly infinite) set of
\defin{vertices} and $E\subseteq V\times V$ is a set of \defin{edges}. For a vertex $v$ we let $E(v)=\{v'\mid (v,v')\in E\}$ and in the rest of the paper (hence, this is implicit from now on), we only consider graphs that have no dead-end, \ie such that $E(v)\neq\emptyset$ for all $v$.


An \defin{arena} is a triple $\arena=(G,\VE,\VA)$ where $G=(V,E)$ is a
graph and $V=\VE\uplus\VA$ is a partition of the vertices among two
players, \Eloise and \Abelard.

\Eloise and \Abelard play in $\arena$ by moving a pebble along edges. A
\defin{play} from an initial vertex $v_0$ proceeds as
follows: the player owning $v_0$ (\ie \Eloise if $v_0\in \VE$,
\Abelard otherwise) moves the pebble to a vertex $v_1\in E(v_0)$. Then the player owning $v_1$ chooses a successor
$v_2\in E(v_1)$ and so on. As we assumed that there is no dead-end, a play is an
infinite word $v_0v_1v_2\cdots \in V^\omega$ such that for all $0\leq i$ one has $v_{i+1}\in E(v_i)$. A \defin{partial play} is a prefix of a play, \ie, it is a finite word $v_0v_1\cdots v_\ell \in V^*$  such that for all $0\leq i<\ell$ one has $v_{i+1}\in E(v_i)$.

A \defin{strategy} for \Eloise is a function $\phi:V^*V_\Ei\rightarrow V$ assigning, to every partial play
ending in some vertex $v\in \VE$, a vertex $v'\in E(v)$. Strategies of \Abelard are defined likewise, and usually denoted $\psi$.
In a given play $\lambda=v_0v_1\cdots$ we say that \Eloise (\resp \Abelard) \defin{respects a strategy} $\phi$ (\resp $\psi$) if whenever $v_i\in V_\Ei$ (\resp $v_i\in V_\Ai$) one has $v_{i+1} = \phi(v_0\cdots v_i)$ (\resp $v_{i+1} = \psi(v_0\cdots v_i)$).

A \defin{winning condition} is a subset $\WC\subseteq V^\omega$ and a (two-player perfect information) \defin{game} is a pair $\game=(\arena,\WC)$ consisting of an arena and a winning condition.

A play $\lambda$ is \defin{won} by \Eloise if and only if $\lambda\in \WC$; otherwise $\lambda$ is won by \Abelard. A strategy $\phi$ is \defin{winning} for \Eloise in $\game$ from a vertex $v_0$ if any play starting from $v_0$ where \Eloise respects $\phi$ is won by her. Finally, a vertex $v_0$ is \defin{winning} for \Eloise in $\game$ if she has a winning strategy $\phi$ from $v_0$. Winning strategies and winning vertices for \Abelard are defined likewise.


We now define three classical winning conditions.
\begin{itemize}
\item A \defin{Büchi} winning condition is of the form 
{$(V^*F)^\omega$}
for a set $F\subseteq V$ of final vertices, 
\ie winning plays are those that infinitely often visit vertices in $F$.
\item A \defin{co-B\"uchi} condition is of the form $V^{*}(V\setminus F)^{\omega}$  for a set $F\subseteq V$ of forbidden vertices, 
\ie winning plays are those that visit finitely often forbidden vertices.
\item A \defin{parity} winning condition is defined by a \emph{colouring} function $\col$ that is a mapping 
$\col: V \rightarrow \colors \subset \N$ where $\colors$ is a \emph{finite} set of \defin{colours}. The parity winning condition associated with $\col$ is the set 
$$\WC_\col = \{v_0 v_1 \cdots \in V^\omega \mid \liminf (\col(v_i))_{i \geq 0} \text{ is even}\}$$ \ie a play is winning if and only if the smallest colour infinitely often visited is even.
\end{itemize}


Finally, a Büchi (\resp co-Büchi, parity) game is one equipped with a Büchi (\resp co-Büchi, parity) winning condition. For notation of such games we often replace the winning condition by the object that is used to defined it (\ie $F$ or $\col$).

%
%

\subsection{Tree Automata, Regular Tree Languages and Acceptance Game}\label{ssection:TARTLAG}

A \defin{tree automaton} $\A$ is a tuple $\langle A, Q , q_{ini},\Delta,\Acc \rangle$ where $A$ is the \defin{input alphabet}, $Q$ is the finite set of \defin{states}, $\qini\in Q$ is the \defin{initial state}, $\Delta \subseteq Q\times A \times Q \times Q$ is the \defin{transition relation} and $\Acc \subseteq Q^{\omega}$ is the \defin{acceptance condition}. An automaton is \defin{complete} if, for all $q\in Q$ and $a\in A$ there is at least one pair $(q_0,q_1)\in Q^2$ such that $(q,a,q_0,q_1)\in \Delta$. In this work we always assume that the automata are complete and this is implicit from now.

Given an $A$-labelled tree $\T$, a \defin{run} of $\A$ over $\T$ is a $Q$-labelled tree $\run$ such that 
\begin{enumerate}[(i)]
\item
 the root is labelled by the initial state, \emph{i.e.} $\run(\epsilon)=\qini $;
\item
 for all nodes $u$, $(\run(u),\T(u),\run(u\cdot0),\run(u\cdot1))\in\Delta$.
\end{enumerate}

A branch $\branch \in \Branches$ is \defin{accepting} in the run $\run$ if $\run(\branch)\in \Acc$, otherwise it is \defin{rejecting}. A run $\run$ is \defin{accepting} if \emph{all} its branches are accepting. Finally, a tree $\T$ is \defin{accepted} if there exists an accepting run of $\A$ over $\T$. The set of all trees accepted by $\A$ (or the language \emph{recognised} by $\A$) is denoted $L(\A)$. 

In this work we consider the following three classical acceptance conditions: 
 \begin{itemize}
 \item A \defin{B\"uchi} condition is given by a subset $F\subseteq Q$ of final states by letting $\Acc=Buchi(F) = (Q^{*}F)^{\omega}$, \emph{i.e.} a branch is accepting if it contains infinitely many final states.
 \item A \defin{co-B\"uchi} condition is given by a subset $F\subseteq Q$ of forbidden states by letting $coBuchi(F) = Q^{*}(Q\setminus F)^{\omega}$, \emph{i.e.} a branch is accepting it contains finitely many forbidden states.
 \item A \defin{parity} condition is given by a colouring mapping $\col: Q\rightarrow \nat$ by letting $$\Acc=Parity = \{q_0q_1q_2\cdots \mid \liminf (\col(q_i))_i \text{ is even}\}$$ \emph{i.e.} a branch is accepting if the smallest colour appearing infinitely often is even.
 \end{itemize}
They all are examples of $\omega$-regular acceptance conditions, \ie\ $\Acc$ is an $\omega$-regular set of $\omega$-words over alphabet $Q$ (see \eg \cite{PP04} for a reference book on infinite words languages). 

{
\begin{remark}
\label{rem:parity-eq-omega-reg}
The parity condition is expressive enough to capture the
  general case of an arbitrary $\omega$-regular
  condition $\Acc$. Indeed, it is well known that $\Acc$ is
    accepted by a \emph{deterministic} parity word automaton. By
    taking the synchronised product of this automaton with the tree
    automaton, we obtain a parity tree automaton accepting the same
    language (see \emph{e.g.} \cite{PP04}).
  \end{remark}
}

When it is clear from the context, we may replace, in the description of $\mathcal{A}$, $Acc$ by $F$ for Büchi/co-Büchi condition (\resp $\col$ for a parity condition), and we shall refer to the automaton as a Büchi/co-Büchi (\resp parity) tree automaton. A set $L$ of trees is an \defin{$\omega$-regular} tree language if there exists a parity tree automaton $\A$ such that $L=L(\A)$.   
The class of $\omega$-regular tree languages is robust, as illustrated by the following famous statement.

\begin{theorem}{\cite{Rabin69}}\label{theo:regLang}
The class of $\omega$-regular tree languages is a Boolean algebra.
\end{theorem}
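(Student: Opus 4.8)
The plan is to establish the three closure properties making up a Boolean algebra: closure under union, under intersection, and under complementation. The first two are routine and I would dispatch them quickly with product constructions. For union, given parity tree automata $\A_1$ and $\A_2$, I take the disjoint union of their state sets together with a fresh initial state whose transitions nondeterministically mimic, already at the root, either an initial transition of $\A_1$ or one of $\A_2$; every run then stays entirely inside one copy, the colouring being inherited copy-wise, so $L(\A_1)\cup L(\A_2)$ is recognised. For intersection, I take the synchronised product on $Q_1\times Q_2$, with transitions required to be legal in both components, so that a branch must satisfy the two parity conditions simultaneously. The resulting acceptance condition on $(Q_1\times Q_2)^\omega$ is the conjunction of two parity conditions, hence still $\omega$-regular, so by Remark~\ref{rem:parity-eq-omega-reg} it can be turned back into a parity condition; this yields a parity tree automaton for $L(\A_1)\cap L(\A_2)$.

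The whole difficulty, as stressed in the introduction, is closure under complementation: from a parity tree automaton $\A$ I must build a parity tree automaton recognising $\overline{L(\A)}$. The route I would follow is the game-theoretic one. First I introduce the acceptance game $\game_{\A,\T}$ in which $\Eloise$, starting at the root in state $\qini$, repeatedly chooses a transition of $\Delta$ compatible with the current state and the current label, while $\Abelard$ answers by picking one of the two directions; a play thus walks down a single branch $\branch$ and spells out the sequence of states assigned along it, $\Eloise$ winning iff this $\omega$-word lies in $\Acc$. The first step is the key equivalence $\T\in L(\A)$ iff $\Eloise$ has a winning strategy in $\game_{\A,\T}$: a positional winning strategy for $\Eloise$ propagates states top-down in a consistent way and therefore is exactly an accepting run $\run$, and conversely an accepting run yields such a strategy.

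Since $\game_{\A,\T}$ is a parity game it is determined, and even positionally so; hence $\T\notin L(\A)$ iff $\Abelard$ has a positional winning strategy. The crucial observation is that such a strategy is \emph{local}: it amounts to choosing, at each node and for each transition $\Eloise$ may play there, a direction, so a positional strategy of $\Abelard$ is nothing but a labelling of the tree by the finite alphabet $\{0,1\}^{\Delta}$. I would therefore design the complement automaton $\B$ so that it guesses such a labelling while reading $\T$ and checks that it is winning for $\Abelard$, namely that \emph{no} sequence of transitions chosen by $\Eloise$ and routed according to the guessed labelling ever produces a branch whose state sequence lies in $\Acc$. This last verification is a universal $\omega$-regular requirement ranging over all plays consistent with the guessed strategy, and this is exactly where the real work lies: to check it with finite memory one must determinise the word condition $\Acc$ (equivalently, complement it into a parity condition, with colours shifted by one, and track deterministically the state sequences reachable along each branch). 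I expect this determinisation and strategy-recognition step to be the main obstacle, as it is the genuine technical heart of Rabin's theorem. Once it is in place, $\B$ is a parity tree automaton, the equivalence $L(\B)=\overline{L(\A)}$ follows from the determinacy used above, and together with the union and intersection constructions this shows that $\omega$-regular tree languages form a Boolean algebra.
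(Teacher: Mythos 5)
First, a point of reference: the paper does not prove this statement at all --- it is quoted from \cite{Rabin69}, and the introduction merely recalls that the modern simple proofs (Gurevich--Harrington \cite{GurevichH82}, following Büchi's idea) go through acceptance games. Your proposal follows exactly that game-theoretic route, and its skeleton is sound: union and intersection are handled correctly (for intersection, reducing the conjunction of two parity conditions back to a single parity condition via Remark~\ref{rem:parity-eq-omega-reg} is legitimate), the equivalence $t\in L(\A)$ iff \Eloise wins $\game_{\A,t}$ is the paper's own acceptance-game theorem, and the reduction of $\overline{L(\A)}$ to ``guess a positional \Abelard strategy, i.e.\ a labelling of $t$ over the finite alphabet $\{0,1\}^{\Delta}$, and verify it is winning'' is the standard argument. (One ingredient you use silently deserves a flag: you need \emph{positional} determinacy of parity games over \emph{infinite} arenas, a theorem of Emerson--Jutla and Mostowski, strictly stronger than the Borel determinacy the paper invokes elsewhere.)

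The genuine gap is the verification step, which you explicitly leave as a black box --- and the one-line sketch you give for it would fail as stated. Checking that \emph{no} play consistent with the guessed strategy has its state sequence in $\Acc$ is a universal condition over all \emph{threads} of states routed along each branch, and ``track deterministically the state sequences reachable along each branch'' suggests a powerset construction; but the subset construction does not suffice here, precisely because it forgets which thread is which: threads merge and split, and the parity (or complemented-parity) condition must be evaluated on each individual infinite thread, not on the sequence of reachable sets. Making this finite-state is exactly McNaughton/Safra determinisation (or, alternatively, an induction on the number of colours à la Emerson--Jutla/Zielonka, or rank/signature arguments), and it is not a routine appendix one can defer: it is the entire technical content of Rabin's theorem. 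You correctly identify it as ``the genuine technical heart,'' but identifying the obstacle is not the same as overcoming it, so as a proof the proposal is an accurate and well-organised reduction of complementation to determinisation of $\omega$-word automata, not a complete argument. Since the paper itself only cites \cite{Rabin69}, this does not clash with anything in the text; but to stand alone your write-up would need either to carry out the determinisation-based check of \Abelard's strategy or to cite it (e.g.\ via \cite{GurevichH82} or the surveys \cite{Thomas97,GWT02}) rather than re-prove it.
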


%
%

Fix an automaton $\A=\langle A, Q , \qini,\Delta,\Acc \rangle$ and a tree $t$ and define an acceptance game $\game_{\A,t}$, \ie a game where \Eloise wins if and only if there exists an accepting run of $\A$ on $t$, as follows. 

Intuitively, a play in $\game_{\A,t}$ consists in moving a pebble along a branch of
$t$ in a top-down manner: to the pebble is attached a state, and in a
node $u$ with state $q$ \Eloise picks a transition $(q,t(u),q_0,q_1) \in \Delta$, and then
{\Abelard} chooses to move down the pebble either to $u\cdot 0$ 
(and update the state to $q_0$) or to $u\cdot 1$ 
(and update the state to $q_1$). 

Formally (see Figure~\ref{fig:acceptance-game} for an illustration\footnote{In pictures, we always depict by circles (\resp squares) the vertices controlled by \Eloise (\resp \Abelard).}), let $G_{\A,t} = (V_\Ei \uplus V_\Ai,E)$ 
with $V_\Ei = Q\times \set{0,1}^*$,  $$V_\Ai = \set{(q,u,q_0,q_1) \mid u \in \set{0,1}^* \text{ and } (q,t(u),q_0,q_1) \in \Delta}\subseteq Q\times\set{0,1}^*\times Q\times Q$$ and
$$E = \set{((q,u),(q,u,q_0,q_1)) \mid (q,u,q_0,q_1) \in V_\Ai)}  \cup\set{((q,u,q_0,q_1),(u \cdot x,q_x)) \mid x \in \set{0,1} \text{ and } (q,u,q_0,q_1) \in V_\Ai)}$$
Then let $\arena_{\A,t} = (G_{\A,t},\VE,\VA)$ and 
extend $\col$ on $\VE\cup\VA$ by letting $\col((q,u))=\col((q,u,q_0,q_1))=\col(q)$. 
Finally define $\game_{\A,t}$ as the parity game $(\arena_{\A,t},\col)$. 

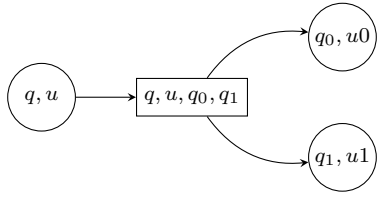
\begin{figure}
\begin{center}
\begin{tikzpicture}[scale=1,transform shape]
\tikzset{>=stealth}
  \node[draw,circle,inner sep=0mm,minimum size=9mm] (q) at (-2,0) {\footnotesize$q,u$};  
  \node[draw,inner sep=1mm,minimum size=5mm] (p) at (0,0) {\footnotesize$q,u,q_0,q_1$};  
  \node[draw,circle,inner sep=0mm,minimum size=9mm] (p0) at (2,0.8) {\footnotesize $q_{0},u0$};  
  \node[draw,circle,inner sep=0mm,minimum size=9mm] (p1) at (2,-0.8) {\footnotesize $q_{1},u1$};  

\draw[->] (q) --  (p);
\draw[->] (p) to [bend left] (p0);\draw[->] (p)  to [bend right] (p1);
\node (l) at (-0.5,-1.5) {\footnotesize for any $(q,t(u),q_{0},q_{1})\in\Delta$};  
\end{tikzpicture}
\caption{Local structure of the arena of the acceptance game $\game_{\mathcal{A},t}$.\label{fig:acceptance-game}}
\end{center}
\end{figure}

The next theorem is well-known (see \eg \cite{GurevichH82,GWT02}) and its proof is obtained by noting that strategies for \Eloise in $\game_{\A,t}$ are in bijection with runs of $\A$ on $t$.

\begin{theorem}{}
One has $t\in L(\A)$ if and only if \Eloise wins in $\game_{\A,t}$ from $(\qini,\epsilon)$.
\end{theorem}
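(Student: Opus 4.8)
The plan is to turn the remark preceding the statement into a precise correspondence between runs of $\A$ on $t$ and strategies of \Eloise in $\game_{\A,t}$, set up so that it carries the notion of an accepting run onto the parity winning condition. First I would invoke Remark~\ref{rem:parity-eq-omega-reg} to assume that $\A$ is a parity automaton, so that $\Acc$ is exactly the parity condition given by the colouring $\col$ used in the construction of $\game_{\A,t}$. The single bookkeeping fact I would isolate up front is that, in any partial play from $(\qini,\epsilon)$, the node components of the successive vertices form a chain $\epsilon \prefix u_{[1]} \prefix u_{[2]} \prefix \cdots$ descending along one branch, \Abelard's choice of direction at step $i$ being precisely the $i$-th letter of that branch. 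Consequently, once \Eloise's strategy is fixed, the partial play reaching a prescribed node $u$ is \emph{uniquely} determined, since \Abelard is forced to play the letters of $u$; this is what will let me read a run off a strategy and vice versa.

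For the implication $t\in L(\A)\Rightarrow$ \Eloise wins, I would start from an accepting run $\run$ and define the positional strategy $\phi_\run$ which, at a vertex $(q,u)$ reached with $q=\run(u)$, plays the transition $(\run(u),t(u),\run(u0),\run(u1))\in\Delta$. A one-line induction on depth gives the invariant that every $\phi_\run$-consistent play attaches state $\run(u)$ to node $u$: it holds at the root since $\run(\epsilon)=\qini$, and is preserved because \Abelard's move in direction $x$ lands on $(\run(ux),ux)$. Then a play in which \Abelard realises a branch $\branch$ visits exactly the state sequence $\run(\branch)$, each colour seen twice (once on an \Eloise vertex, once on its \Abelard successor, as $\col$ carries $\col(\run(u))$ on both), and doubling leaves $\liminf$ unchanged; so the play is won iff $\run(\branch)\in\Acc$, i.e.\ iff $\branch$ is accepting. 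Since $\run$ is accepting this holds for every branch, whence $\phi_\run$ is winning.

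For the converse, I would take a winning strategy $\phi$ (no longer assumed positional) and define, for each node $u$, the state $\run_\phi(u)$ carried by the unique $\phi$-consistent partial play that reaches $u$. This is well defined, satisfies $\run_\phi(\epsilon)=\qini$, and, because at $(\run_\phi(u),u)$ the strategy must select a legal transition $(\run_\phi(u),t(u),q_0,q_1)\in\Delta$ whose successors are exactly $(q_0,u0)$ and $(q_1,u1)$, yields $\run_\phi(ux)=q_x$ and hence condition~(ii) of a run. For any branch $\branch$, letting \Abelard realise $\branch$ produces a $\phi$-consistent (hence winning) play whose colour sequence is the doubling of $\run_\phi(\branch)$, so $\run_\phi(\branch)\in\Acc$ for every $\branch$; thus $\run_\phi$ is accepting and $t\in L(\A)$.

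The step I expect to be the main obstacle is precisely the passage from a history-dependent winning strategy to a well-defined run: it hinges entirely on the observation that the node component strictly descends along a single branch, so that the history leading to each node $u$ is forced once $\phi$ is fixed. The remaining ingredients are routine, namely that \Abelard realises \emph{every} branch because he chooses each direction freely (this is what matches the universal quantifier ``all branches accepting'' in the definition of an accepting run), and that the $\col$-doubling does not affect the parity outcome. Finally I would remark that $\run\mapsto\phi_\run$ and $\phi\mapsto\run_\phi$ are mutually inverse on positional strategies, witnessing the announced bijection, while the outcome-preservation computed above is what identifies accepting runs with winning strategies.
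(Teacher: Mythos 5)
Your proposal is correct and takes exactly the route the paper indicates: the paper proves this theorem only by the one-line observation that strategies of \Eloise in $\game_{\A,t}$ are in bijection with runs of $\A$ on $t$, and your argument is precisely that bijection worked out in full (run $\mapsto$ positional strategy, winning strategy $\mapsto$ run via the uniqueness of the $\phi$-consistent partial play reaching each node, plus the harmless colour doubling). Your preliminary reduction via Remark~\ref{rem:parity-eq-omega-reg} is also consistent with the paper, since the game is defined through a colouring $\col$ and hence implicitly assumes a parity automaton.
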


%
%
%
%
%

\section{Automata with Cardinality Constraints and Automata with Topological Bigness Constraints}\label{sec:autodef}

We now introduce the main notions studied in the paper, namely automata with cardinality constraints (studied in Section~\ref{sec:results-rejecting} and Section~\ref{sec:results-accepting}) and automata with topological bigness constraints (studied in Section~\ref{sec:results-topo}).

\subsection{Automata with Cardinality Constraints}

We now relax the criterion for a run to be accepting. Recall that classically, a run is accepting if \emph{every} branch in it is accepting. For a given automaton $\mathcal{A}$, we define the following four criteria (two for the case where one counts the number of accepting branches and two for the case where one counts the number of rejecting branches) for a run to be accepting.  Note that the case where one counts accepting branches was already considered in \cite{BeauquierNN91,BN95}.
\begin{itemize}
\item \textit{There are finitely many rejecting branches in the run.} A tree $t\in \LRejAtMostFin{\mathcal{A}}$ if and only if there is a run of $\mathcal{A}$ on $t$ satisfying  the previous condition.
\item \textit{There are at most a countably many rejecting branches in the run.} A tree $t\in \LRejAtMostCount{\mathcal{A}}$ if and only if there is a run of $\mathcal{A}$ on $t$ satisfying  the previous condition.
\item \textit{There are infinitely many accepting branches in the run.} A tree $t\in \LAccInf{\mathcal{A}}$ if and only if there is a run of $\mathcal{A}$ on $t$ satisfying  the previous condition.
\item \textit{There are uncountably many accepting branches in the run.} A tree $t\in \LAccUnc{\mathcal{A}}$ if and only if there is a run of $\mathcal{A}$ on $t$ satisfying the previous condition.
\end{itemize}

\subsection{Automata with Topological Bigness Constraints}

A notion of topological “bigness” and “smallness” is given by \emph{large} and \emph{meager} sets respectively (see \cite{VolzerV12,Graedel08} for a survey of the notion).
The idea is to see the set of branches in a tree as a topological space by taking as basic open sets the \emph{cones}. For a node $u\in\{0,1\}^*$, the cone $\mathrm{Cone}(u)$ is defined as $\{\pi\in\{0,1\}^\omega\mid u\prefix\pi\}$.
A set of branches $B\subseteq \{0,1\}^\omega$ is \defin{nowhere dense} if for all node $u$, there exists  $v\in\{0,1\}^*$ such that no branch of $B$ has $uv$ as a prefix.
It is \defin{meagre} if it is the countable union of nowhere dense sets. Finally it is \defin{large} if it is the complement of a meagre set.

For a given automaton $\mathcal{A}$, we define the following acceptance criterion: a run is accepting if and only if the set of accepting branches in it is large. Note that this is equivalent to require that the set of rejecting branches is meagre. 

Finally, a tree $t\in \LLarge{\mathcal{A}}$ if and only if there is a run of $\mathcal{A}$ on $t$ satisfying  the previous condition.

\section{Counting Rejecting Branches}\label{sec:results-rejecting}

For the classes of automata where acceptance is defined by a constraint on the number of rejecting branches we show that the associated languages are $\omega$-regular. For this, we adopt the following roadmap: first we design an acceptance game and then we note that it can be transformed into another \emph{equivalent} game that turns out to be the (usual) acceptance game for some tree automaton. 

{Fix, for this section, a parity tree automaton $\A=\langle A, Q , q_{ini},\Delta,\col \rangle$ and recall that a tree $t$ is in $\LRejAtMostCount{\A}$ (\resp in $\LRejAtMostFin{\A}$) if and only if there is a run of $\A$ on $t$ in which there are at most countably  (\resp finitely) many rejecting branches.}


\subsection{The Case of Languages $\LRejAtMostCount{\A}$}

Fix a tree $t$ and define an acceptance game for $\LRejAtMostCount{\A}$ as follows.
 In this game the two players move a pebble along a branch of $t$ in a top-down manner: to the pebble is attached a state whose colour gives the colour of the configuration. Hence, (\Eloise's main) configurations in the game are elements of $Q\times\{0,1\}^*$.
See Figure~\ref{fig:acceptance-game:RejCount} for the local structure of the arena.
 In a node $u$ with state $q$ \Eloise picks a transition $(q,t(u),q_0,q_1) \in \Delta$, and then
\Abelard has two possible options: \begin{enumerate}[(i)]
\item he chooses a direction $0$ or $1$; or
\item he lets \Eloise choose a direction $0$ or $1$.
\end{enumerate}
Once the direction $i\in\{0,1\}$ is chosen, the pebble is moved down to $u\cdot i$ and the state is updated to $q_i$. 
A play is won by \Eloise if one of the following two situations occurs: either the parity condition is satisfied or \Abelard has not let \Eloise infinitely often choose the direction.
Call this game $\gameRejCount{\A,t}$. 

The next theorem states that it is an acceptance game for language $\LRejAtMostCount{\A}$. 

\begin{figure}[tb]
\begin{center}
\begin{tikzpicture}[scale=1,transform shape]
\tikzset{>=stealth}
  \node[draw,circle,inner sep=0mm,minimum size=9mm] (q) at (-2,0) {\footnotesize$q,u$};  
  \node[draw,inner sep=1mm,minimum size=5mm] (p) at (0,0) {\footnotesize$q,u,q_0,q_1$};  
  \node[draw,circle,inner sep=0mm,minimum size=9mm] (p0) at (2,0.8) {\footnotesize $q_{0},u0$};  
  \node[draw,circle,inner sep=0mm,minimum size=9mm] (p1) at (2,-0.8) {\footnotesize $q_{1},u1$};  
  \node[draw,inner sep=1mm,minimum size=5mm,ellipse] (pE) at (4,0) {\footnotesize$q,u,q_0,q_1$};  

\draw[->] (q) --  (p);
\draw[->] (p) to [in=180,out=70] (p0);\draw[->] (p) to [in=180,out=-70] (p1);
\draw[->] (pE) to [in=0,out=110] (p0);\draw[->] (pE) to [in=0,out=-110] (p1);
\draw[->] (p) -- (pE);
\node (l) at (-1,-1.5) {\footnotesize for any $(q,t(u),q_{0},q_{1})\in\Delta$};  
\end{tikzpicture}
	\caption{Local structure of $\gameRejCount{\A,t}$.\label{fig:acceptance-game:RejCount}}
\end{center}
\end{figure}

\begin{theorem}
One has $t\in \LRejAtMostCount{\A}$ if and only if \Eloise wins in $\gameRejCount{\A,t}$ from $(\qini,\epsilon)$.
\end{theorem}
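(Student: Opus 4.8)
The plan is to prove the two implications separately. Throughout, write $D$ for the set of \emph{deferral} vertices (the elliptic vertices that \Abelard enters exactly when he declines to pick a direction himself), so that a play is won by \Eloise precisely when it satisfies the parity condition given by $\col$ \emph{or} it visits $D$ only finitely often.

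For the implication $t\in\LRejAtMostCount{\A}\Rightarrow$ \Eloise wins, I would start from a run $\run$ of $\A$ on $t$ with at most countably many rejecting branches and enumerate them as $r_1,r_2,\dots$ (the cases of finitely many or no rejecting branches being only easier). The strategy for \Eloise is a diagonalisation: at every node she plays the transition prescribed by $\run$, so that whatever directions are subsequently taken the state sequence along the play is exactly the $\run$-labelling of the branch followed; and the $k$-th time \Abelard defers to her, from the current node $p$ she moves to the child of $p$ that is \emph{not} a prefix of $r_k$ if $p\prefix r_k$, and otherwise moves arbitrarily. For the analysis: if \Abelard defers only finitely often the play visits $D$ finitely often and \Eloise wins by the second clause; if he defers infinitely often then for every $k$ the branch is forced to differ from $r_k$, so the branch followed is not a rejecting branch of $\run$, whence its $\run$-labelling satisfies the parity condition and \Eloise wins by the first clause. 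This direction needs no descriptive set theory.

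The converse, \Eloise wins $\Rightarrow t\in\LRejAtMostCount{\A}$, is the substantial one. First I would observe that \Eloise's objective is a Rabin condition, namely the disjunction of the parity condition (itself Rabin) with the co-Büchi condition ``visit $D$ finitely often''; hence she has a \emph{memoryless} winning strategy $\phi$. I then extract a run $\run$ from $\phi$: at the node $u$ reached with state $q$, the strategy $\phi$ prescribes a transition $(q,t(u),q_0,q_1)\in\Delta$, and I set the labels of $u\cdot 0$ and $u\cdot 1$ accordingly. The key point is that, $\phi$ being memoryless, this choice depends only on $(q,u)$ and \emph{not} on \Abelard's past direction/deferral choices, so $\run$ is a well-defined run of $\A$ on $t$ and, along any play respecting $\phi$, the states are exactly the $\run$-labelling of the branch followed. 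It remains to show that $\run$ has at most countably many rejecting branches. Suppose not and let $R\subseteq\Branches$ be the uncountable set of rejecting branches. Since the labelling map $\pi\mapsto\run(\pi)$ is continuous and the set of non-accepting $\omega$-words is Borel, $R$ is Borel; being uncountable it contains a perfect set by the perfect set property, which I present as a Cantor scheme of nodes $(v_s)_{s\in\{0,1\}^*}$ with $v_{s\cdot 0},v_{s\cdot 1}$ incomparable extensions of $v_s$ and every branch of the scheme lying in $R$. Against $\phi$, \Abelard plays so as to stay on this scheme: on the forced portions between $v_s$ and its two scheme-children he chooses directions himself (option (i)), while at each branching node $v_s$ he defers (option (ii)) and then follows into whichever scheme-child \Eloise's direction selects. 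The resulting play follows a branch of the scheme, hence a rejecting branch of $\run$ (so the parity condition fails), and it visits $D$ infinitely often (so the second clause fails), so \Eloise loses — contradicting that $\phi$ is winning. Therefore $R$ is countable and $t\in\LRejAtMostCount{\A}$.

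I expect the main obstacle to be exactly this converse direction, for two linked reasons. The first is the well-definedness of the extracted run: because \Eloise commits to a transition \emph{before} \Abelard decides whether to defer, a strategy with memory could make her transition at a node depend on the deferral history, and then ``the rejecting branches of the run'' would not be a meaningful object; this is what forces the use of a history-independent (memoryless) winning strategy, justified by positional determinacy of Rabin games. The second is the passage from ``uncountably many rejecting branches'' to a concrete binary-branching family of rejecting branches that \Abelard can navigate while still deferring infinitely often; here I rely on the perfect set property, after the (routine) check that $R$ is Borel. If one wished to avoid invoking the perfect set theorem, the Cantor scheme can instead be built directly by a Cantor–Bendixson-style analysis of the nodes $u$ with $R\cap\cone{u}$ uncountable, but some care is then needed to ensure the scheme's limit branches genuinely lie in $R$ rather than merely in its closure.
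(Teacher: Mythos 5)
Your proof is correct in substance, but it takes a genuinely different route from the paper in \emph{both} directions. For $t\in\LRejAtMostCount{\A}\Rightarrow{}$\Eloise wins, the paper argues by contraposition: if \Eloise does not win, then by Borel determinacy \Abelard has a winning strategy $\psi$, and playing against $\psi$ the family of strategies $\phi_\alpha$ ($\alpha\in\{0,1\}^\omega$) that copy a fixed run and answer the successive deferrals according to $\alpha$ produces uncountably many distinct rejecting branches in that run. Your diagonalisation against an enumeration of the rejecting branches proves the same implication \emph{directly} and is more elementary, needing no determinacy at all. For the converse, the paper's proof is lighter than yours: it works with an \emph{arbitrary} winning strategy $\phi$, extracting a run by fixing one canonical partial play $\play_u$ per node $u$ --- at each step, the child \Eloise would pick upon a deferral is declared \emph{marked} and is reached in $\play_{u\cdot i}$ via a deferral, while the other child is reached via \Abelard choosing the direction himself --- so that every node has exactly one marked child. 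A rejecting branch yields a play failing the parity condition, hence (since $\phi$ is winning) a play with finitely many deferrals, i.e.\ finitely many marked nodes; the map sending a rejecting branch to its last marked node is injective, because at the first divergence of two branches exactly one side is marked. Countability then follows from countability of the set of nodes, with no appeal to positional determinacy of Rabin games nor to the perfect set property. In particular your stated obstacle --- that a strategy with memory makes ``the rejecting branches of the run'' ill-defined --- is circumvented in the paper by this canonical choice of plays: memorylessness is sufficient for your argument but not necessary for the theorem. Combining your first direction with the paper's marked-node argument would give a proof of the full equivalence using no determinacy-type result whatsoever, which is arguably the nicest mix of the two.

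Two details in your converse deserve attention. First, your appeal to memoryless strategies is legitimate --- the objective (parity, or co-Büchi on the deferral vertices $D$) is a Rabin condition over a finitely coloured, infinite arena, and the Rabin player has positional winning strategies even on infinite arenas (Klarlund, Zielonka) --- but be aware this is a substantially heavier tool than anything the paper invokes, as is the perfect set property for Borel sets. Second, as written your Cantor scheme does not quite support \Abelard's navigation: if $v_{s0}$ and $v_{s1}$ are merely incomparable extensions of $v_s$, they may both extend the \emph{same} child of $v_s$, and at the deferral placed at $v_s$ \Eloise may answer with the other direction and exit the scheme. This is repaired by replacing each $v_s$ with the longest common prefix of $v_{s0}$ and $v_{s1}$ (relabelling the scheme accordingly), so that the two scheme-children lie in different cones at the deferral node; your closing remark that limit branches stay in $R$ because the perfect set is closed then applies verbatim.
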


\begin{proof}
Assume that \Eloise has a winning strategy $\phi$ in $\gameRejCount{\A,t}$ from $(\qini,\epsilon)$. With $\phi$ we associate a run $\run$ of $\A$ on $t$ as follows. We inductively associate with any node $u$ a partial play $\play_u$ where \Eloise respects $\phi$. For this we let $\play_\epsilon = (\qini,\epsilon)$. Now assume that we have defined $\play_u$ for some node $u$ and let $(q,t(u),q_0,q_1)$ be the transition \Eloise plays from $\play_u$ when she respects $\phi$. Then let $i$ be the direction \Eloise would choose (again playing according to $\phi$) if \Abelard lets her pick the direction right after she played $(q,t(u),q_0,q_1)$: one defines $\play_{u\cdot i}$ as the partial  play obtained by extending $\play_u$ by \Eloise choosing transition $(q,t(u),q_0,q_1)$, followed by \Abelard letting her choose the direction and \Eloise choosing direction $i$; and one defines $\play_{u\cdot (1-i)}$ as the partial play obtained by extending $\play_u$ by \Eloise choosing transition $(q,t(u),q_0,q_1)$, followed by \Abelard choosing direction $(1-i)$. Note that for $j\in\{0,1\}$, $\play_{u\cdot j}$ ends with the pebble on $u\cdot j$ with state $q_j$ attached to it, equivalently in configuration $(q_j,uj)$. In the previous construction, we also refer to node $u\cdot i$ {(\ie the node that \Eloise has picked)} as \emph{marked}: note that any node has exactly one child that is marked (by convention the root is marked).

The run $\run$ is defined by letting $\run(u)$ be the state attached to the pebble in the last configuration of $\play_u$. By construction, $\run$ is a valid run of $\A$ on $t$ and moreover with any branch $\pi$ in $\run$ one can associate a play $\play_\pi$ in $\gameRejCount{\A,t}$ from $(\qini,\epsilon)$ where \Eloise respects $\phi$ (one simply considers the limit of the \emph{increasing} sequence of partial plays $\play_u$ where $u$ ranges over those nodes along branch $\pi$). By construction $\pi$ is rejecting if and only if $\play_\pi$ does not fulfil the parity condition. 

Now consider a rejecting branch $\pi$. As $\play_\pi$ does not fulfil the parity condition and as $\phi$ is winning so does $\play_\pi$ hence, it means that in $\play_\pi$ \Abelard does not let \Eloise choose infinitely often the direction. Equivalently, $\pi$ contains finitely many marked nodes (marked nodes corresponding precisely to those steps where \Eloise chooses the direction). Hence, with any rejecting branch $\pi$, one can associate the last marked node $u_\pi$ in it. And if $\pi\neq \pi'$ one has $u_\pi\neq u_{\pi'}$: indeed, at the point where $\pi$ and $\pi'$ first differs one of the node is marked from the property that every node has exactly one child that is marked. Hence, the number of rejecting branches is countable as the map $\pi\mapsto u_\pi$ is injective and as the number of nodes in a tree is countable. This permits to conclude that $\run$ is an accepting run –~in the sense of  $\LRejAtMostCount{\A}$~– of $\A$ on $t$.

{Conversely, assume that \Eloise has no winning strategy. It follows from Borel determinacy \cite{Martin75} that  \Abelard has a winning strategy $\psi$ in $\gameRejCount{\A,t}$ from $(\qini,\epsilon)$. Let us prove that any run $\run$ of $\A$ on $t$ contains  uncountably many rejecting branches.} For this, fix a run $\run$ of $\A$ on $t$.
With any sequence $\alpha=\alpha_1\alpha_2\cdots \in\{0,1\}^\omega$ we associate a strategy $\phi_\alpha$ of \Eloise in $\gameRejCount{\A,t}$. The strategy $\phi_\alpha$ of \Eloise consists in describing the run $\run$ and to propose direction $\alpha_i$ when it is the $i$-th time that \Abelard lets her choose the direction. More formally, when the pebble is on node $u$ with state $q$ (we will trivially have $q=\run(u)$ as an invariant) she picks the transition $(\run(u),t(u),\run(u0),\run(u1))$; moreover if \Abelard lets her choose the direction, she picks $\alpha_{i+1}$ where $i$ is the number of time \Abelard let her choose the direction from the beginning of the play.

As we assumed that $\psi$ is winning, the (unique) play obtained when she plays $\phi_\alpha$ and when he plays $\psi$ is loosing for \Eloise: such a play defines a branch $\pi_\alpha$ in $\run$, and this branch is a rejecting one. Now, for any $\alpha\neq \alpha'$ one has $\pi_\alpha\neq \pi_{\alpha'}$: indeed, at some points $\alpha$ and $\alpha'$ differs and, as infinitely often \Abelard lets \Eloise chooses the directions, the branches $\pi_\alpha$ and $\pi_{\alpha'}$ will differ as well. But as there are uncountably many different sequences $\alpha$, it leads an uncountable number of rejecting branches in $\run$. Hence, $\run$ is rejecting.
\end{proof}

Consider game $\gameRejCount{\A,t}$ and modify it so that \Eloise is now announcing in advance which direction she would choose if \Abelard let her do so. This new game is equivalent to the previous one (meaning that she has a winning strategy in one game if and only if she also has one in the other game). As this new game can easily be modified to obtain an \emph{equivalent} acceptance game for the classical acceptance condition (as described in Section~\ref{ssection:TARTLAG}) one concludes that the languages of the form $\LRejAtMostCount{\A}$ are always $\omega$-regular.

\begin{theorem}
Let $\A=\langle A, Q , q_{ini},\Delta,\col \rangle$ be a parity tree automaton using $d$ colours. Then there exists a parity tree automaton $\A'=\langle A, Q' , q'_{ini},\Delta',\col' \rangle$ such that $ \LRejAtMostCount{\A}=L(\A')$. Moreover $|Q'|=\mathcal{O}(d|Q|)$ and $\A'$ uses $d+1$ colours.
\end{theorem}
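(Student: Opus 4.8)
The plan is to make precise the two-step reduction sketched just before the statement: first pass from $\gameRejCount{\A,t}$ to the equivalent game $\gameRejCountB{\A,t}$ in which \Eloise commits in advance to the direction she would take whenever \Abelard defers the choice to her, and then read off from $\gameRejCountB{\A,t}$ a parity tree automaton $\A'$ whose classical acceptance game is exactly $\gameRejCountB{\A,t}$.

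For the first step I would check that committing early costs \Eloise nothing: at the moment she plays a transition from a node $(q,u)$ she already knows the whole history up to that point, and the only extra information she could gain by waiting is that \Abelard decided to defer, which tells her nothing about the future. Hence any strategy in $\gameRejCount{\A,t}$ can be turned into one in $\gameRejCountB{\A,t}$ (and back) preserving the winner of every play, so the two games have the same winner from $(\qini,\epsilon)$; combined with the previous theorem this gives $t\in\LRejAtMostCount{\A}$ iff \Eloise wins $\gameRejCountB{\A,t}$. In $\gameRejCountB{\A,t}$ a move of \Eloise now consists of a transition $(q,t(u),q_0,q_1)\in\Delta$ together with a \emph{marked} direction $i$, after which \Abelard merely picks a direction to descend; descending into the marked child is an \Eloise-choice (an $L$-step), descending into the other child an $A$-step. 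This is precisely the shape of the acceptance game of Section~\ref{ssection:TARTLAG}, the only difference being the winning condition: \Eloise wins a branch iff the parity condition holds \emph{or} only finitely many $L$-steps occur along it.

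The crux is therefore to re-encode the branch condition ``parity, or finitely many $L$-steps'' as a genuine parity condition by adding a bounded memory to the states, and this is the step I expect to be the main obstacle. I would equip $\A'$ with states of the form $(q,m,A)$ and $(q,\mu,L)$ with $q\in Q$ and $m,\mu\in\{0,\dots,d-1\}$ (plus a symbol $\bot$ for an empty block), where $m$ records the least $\col$-colour seen since the last $L$-step. Writing $\mathbf{e}$ for the least even integer that is at least $d$, I set $\col'(q,m,A)=\mathbf{e}$ and $\col'(q,\mu,L)=\mu$. On an $A$-step the running minimum is updated and the colour $\mathbf{e}$ is emitted; on an $L$-step the current block closes, its minimum $M$ is reported as the colour of the reached $L$-state while a fresh block is started at the marked child, whose running minimum is recovered from its $Q$-component. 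Since after an $L$-step the running minimum is determined by the $Q$-component, the memory stays of size $O(d)$, so $\A'$ has $O(d|Q|)$ states and uses the $d+1$ colours $\{0,\dots,d-1\}\cup\{\mathbf{e}\}$, as required.

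It remains to verify $L(\A')=\LRejAtMostCount{\A}$, i.e. that along every branch $p_0p_1\cdots$ the emitted colours have even $\liminf$ exactly when \Eloise wins that branch in $\gameRejCountB{\A,t}$. If finitely many $L$-steps occur, the emitted colours are eventually constantly $\mathbf{e}$, so the $\liminf$ is $\mathbf{e}$, which is even, and \Eloise wins as wanted. If infinitely many $L$-steps occur, the reported block-minima are emitted infinitely often and all lie in $\{0,\dots,d-1\}$, which is below $\mathbf{e}$, so the $\liminf$ of the emitted colours equals the $\liminf$ of the block-minima; the key combinatorial observation, which I would isolate as a small lemma, is that this equals $\liminf_j\col(p_j)$ (every colour occurrence lies in exactly one finite block, so the least colour occurring infinitely often is also the least block-minimum occurring infinitely often). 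Hence the emitted $\liminf$ is even iff the original parity condition holds, matching the branch condition of $\gameRejCountB{\A,t}$. Thus the classical acceptance game of $\A'$ coincides with $\gameRejCountB{\A,t}$, and the chain of equivalences yields $t\in L(\A')\iff t\in\LRejAtMostCount{\A}$, completing the proof.
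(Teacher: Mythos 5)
Your proposal is correct and follows essentially the same route as the paper's proof: first pass to the equivalent game $\gameRejCountB{\A,t}$ where \Eloise announces her direction in advance (your marked direction is the paper's starred state), then fold into the automaton's states the smallest colour seen since the last such announcement, giving the deferred-choice positions the stored colour and all others a large even colour. Your explicit block-minimum lemma (that with finite blocks partitioning the colour occurrences, the least block-minimum occurring infinitely often equals $\liminf_j\col(p_j)$) is exactly the argument the paper leaves implicit, and your state and colour counts match the claimed $\mathcal{O}(d|Q|)$ and $d+1$.
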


\begin{proof}
Define $\gameRejCountB{\A,t}$ as the game obtained from $\gameRejCount{\A,t}$ by asking \Eloise to say which direction she would choose before \Abelard possibly lets her this option. \Eloise has a winning strategy in $\gameRejCount{\A,t}$ if and only if she has a winning strategy in $\gameRejCountB{\A,t}$ (strategies being essentially the same in both games). The way she indicates the direction can be encoded in the control state: just duplicate the control states (with a classical version and a starred version of each state) and when she wants to pick transition \eg $(q,t(u),q_0,q_1)$ and direction $1$, she just moves to configuration $(q,u,q_0,q_1^*)$ in the new game. Now the winning condition can be rephrased as either the parity condition is satisfied or finitely many configuration of the form $(q^*,u)$ are visited. Now this later game can be transformed into a standard acceptance game for $\omega$-regular language (as defined in Section~\ref{ssection:TARTLAG}) by the following trick. One adds to states an integer where one stores the smallest colour seen since the last starred state was visited (this colour is easily updated); whenever a starred state is visited the colour is reseted to the colour of the state. Now unstarred states are given an even colour that is greater than all colour previously used (hence, it ensures that if finitely many starred states are visited \Eloise wins) and starred states are given the colour that was stored (hence, if infinitely many starred states are visited we retrieve the previous parity condition).
It should then be clear that the later game is a classical acceptance game, showing that $\LRejAtMostCount{\A}$ is $\omega$-regular. 

The construction of $\A'$ is immediate from the final game and the size is linear in $d|Q|$ due to the fact that one needs to compute the smallest colour visited between two starred states.
\end{proof}

\subsection{The Case of Languages $\LRejAtMostFin{\A}$}


The following lemma (whose proof is straightforward) characterises \emph{finite} sets of branches by noting that for such a set there is a finite number of nodes belonging to at least two branches in the set.

\begin{lemma}\label{lemma:finite}
Let $\Pi$ be a set of branches. Then $\Pi$ is finite if and only if the set $W=\{u\in \{0,1\}^*\mid\exists \pi_0\neq\pi_1\in\Pi \text{ s.t. } u\prefix\pi_0  \text{ and } u\prefix\pi_1\}$ is finite.
Equivalently, $\Pi$ is finite if and only if there exists some $\ell\geq 0$ such that for all $u\in \{0,1\}^{\geq \ell}$ there is at most one $\pi\in\Pi$ such that $u\prefix \pi$.
\end{lemma}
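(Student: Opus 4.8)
The plan is to prove the two displayed equivalences by establishing that all three statements — $\Pi$ finite, $W$ finite, and the existence of a threshold $\ell$ as in the ``Equivalently'' clause — are equivalent, using the threshold formulation as a convenient intermediary. The crucial elementary observation is a reformulation of membership in $W$: by definition a node $u$ lies in $W$ precisely when at least two distinct branches of $\Pi$ admit $u$ as a prefix, so that $u\notin W$ if and only if \emph{at most one} branch $\pi\in\Pi$ satisfies $u\prefix\pi$. With this rephrasing, the threshold condition ``there exists $\ell$ such that every $u$ with $|u|\geq\ell$ is the prefix of at most one branch'' reads simply as ``$W$ contains no node of length $\geq\ell$'', i.e. ``$W$ has bounded depth''.

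First I would dispatch the equivalence between the threshold condition and the finiteness of $W$. Since $\{0,1\}^{\leq k}$ is finite for every $k$, a subset of $\{0,1\}^*$ is finite if and only if it has bounded depth. Thus $W$ is finite if and only if $W\subseteq\{0,1\}^{<\ell}$ for some $\ell$, which by the observation above is exactly the threshold condition (the degenerate case $W=\emptyset$, corresponding to $|\Pi|\leq 1$, being covered by taking $\ell=0$).

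Next I would prove that the threshold condition is equivalent to $\Pi$ being finite. For the direction from the threshold to finiteness, fix $\ell$ as given and consider the map sending each $\pi\in\Pi$ to its length-$\ell$ prefix in $\{0,1\}^\ell$: if two distinct branches shared the same length-$\ell$ prefix $u$, then $u$, which has length $\ell\geq\ell$, would be a prefix of two distinct branches, contradicting the threshold condition; hence the map is injective and $|\Pi|\leq 2^\ell<\infty$. Conversely, assuming $\Pi=\{\pi_1,\ldots,\pi_n\}$ is finite, for each pair $i\neq j$ let $d_{ij}$ be the length of the longest common prefix of $\pi_i$ and $\pi_j$, which is finite because $\pi_i\neq\pi_j$; setting $\ell=1+\max_{i\neq j}d_{ij}$ (and $\ell=0$ if $n\leq 1$), no node of length $\geq\ell$ can be a common prefix of two distinct branches, which is the threshold condition. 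Combining the two equivalences yields $\Pi$ finite $\iff$ threshold $\iff$ $W$ finite, which is exactly the pair of statements claimed.

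I do not expect a genuine obstacle here: the statement is elementary, and the whole argument reduces to the single observation that $W$ is the set of nodes through which at least two branches pass, together with the triviality that bounded-depth subsets of $\{0,1\}^*$ are finite. The only points requiring a little care are the degenerate cases ($\Pi$ empty or a singleton, equivalently $W$ empty) and choosing the threshold $\ell$ strictly above the relevant common-prefix lengths so that the injectivity argument goes through cleanly.
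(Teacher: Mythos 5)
Your proof is correct and complete: the chain $\Pi$ finite $\iff$ threshold condition $\iff$ $W$ of bounded depth $\iff$ $W$ finite is exactly right, with the injectivity of the length-$\ell$ prefix map and the choice $\ell = 1+\max_{i\neq j} d_{ij}$ both sound, and the degenerate cases handled. The paper itself gives no proof of this lemma (it is explicitly declared straightforward), and your argument is precisely the routine verification the authors had in mind, so there is nothing to compare beyond noting that you have supplied the omitted details correctly.
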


Now, fix a tree $t$ and define an acceptance game for $\LRejAtMostFin{\A}$ as follows. 
In this game (we refer the reader to Figure~\ref{fig:acceptance-game:RejAtMostFin} for the local structure of the arena for game $\gameRejFin{\A,t}$) the two players move a pebble along a branch of $t$ in a top-down manner: as in the classical case the players first select a transition and then a direction. The colour of the current state gives the colour of the configuration.
There are three modes in this game: \emph{wait} mode, \emph{path} mode and \emph{check} mode and the game starts in \emph{wait} mode. Hence, 
{(\Eloise's main)}
configurations in the game are elements of $Q\times\{0,1\}^*\times \{wait,path,check\}$.

\begin{figure}
\begin{center}
\begin{tikzpicture}[scale=.9,transform shape]
\tikzset{>=stealth}

\begin{scope}
  \node[draw,circle,inner sep=0mm,minimum size=9mm] (Wq) at (0,0.5) {\footnotesize$q,u$};  
  \node[draw,inner sep=1mm,minimum size=5mm] (WqTwc) at (3,2) {\footnotesize$q,u,q_0^w,q_1^c,0$};  
  \node[draw,inner sep=1mm,minimum size=5mm] (WqTcw) at (3,-1) {\footnotesize$q,u,q_0^c,q_1^w,0$};  
  \node[draw,inner sep=1mm,minimum size=5mm] (WqTww) at (2,0.5) {\footnotesize$q,u,q_0^w,q_1^w,0$};  
  \node[draw,inner sep=1mm,minimum size=5mm] (WqTcp) at (3,-4) {\footnotesize$q,u,q_0^c,q_1^p,0$};  
  \node[draw,inner sep=1mm,minimum size=5mm] (WqTpc) at (3,-3) {\footnotesize$q,u,q_0^p,q_1^c,0$};  
  \node[draw,inner sep=1mm,minimum size=5mm] (WqTcc) at (3,-2) {\footnotesize$q,u,q_0^c,q_1^c,0$};  
  \node[draw,circle,inner sep=0mm,minimum size=9mm] (Wq0) at (4,1.2) {\footnotesize $q_{0},u0$};  
  \node[draw,circle,inner sep=0mm,minimum size=9mm] (Wq1) at (4,-.2) {\footnotesize $q_{1},u1$};  
\draw[->] (Wq) |-  (WqTwc);
\draw[->] (Wq) --  (WqTww);
\draw[->] (Wq) to [bend right]  (WqTcw);
\draw[->] (Wq) to [bend right]  (WqTcc);
\draw[->] (Wq) to [bend right]  (WqTpc);
\draw[->] (Wq) |-  (WqTcp);
\draw[->] (WqTww) to [bend left]  (Wq0);
\draw[->] (WqTww) to [bend right]  (Wq1);
\draw[->] (WqTwc) to [out=0,in=30]  (Wq0);
\draw[->] (WqTcw) to [out=0,in=-30]  (Wq1);
\node[rotate=90] (l) at (-1,-0.5) {\footnotesize for any $(q,t(u),q_{0},q_{1})\in\Delta$};  
\node (path) at (2,3) {\footnotesize \emph{\underline{wait mode}}};  
\draw[dotted] (5,2.5) -- (5,-4.5);
\end{scope}

\begin{scope}[xshift=6.5cm]
  \node[draw,circle,inner sep=0mm,minimum size=9mm] (Cq) at (2,-1.5) {\footnotesize$q,u$};  
  \node[draw,inner sep=1mm,minimum size=5mm] (CqTcc) at (0,-1.5) {\footnotesize$q,u,q_0^c,q_1^c,0$};  
  \node[draw,circle,inner sep=0mm,minimum size=9mm] (Cq0) at (0,-3.5) {\footnotesize $q_{0},u0$};  
  \node[draw,circle,inner sep=0mm,minimum size=9mm] (Cq1) at (0,.5) {\footnotesize $q_{1},u1$};  
\draw[->] (Cq) --  (CqTcc);
\draw[->] (CqTcc) to  (Cq0);
\draw[->] (CqTcc) to  (Cq1);
\node (l) at (0.75,1.5) {\footnotesize for any $(q,t(u),q_{0},q_{1})\in\Delta$};  
\node (path) at (1,3) {\footnotesize \emph{\underline{check mode}}};  
\draw[dotted] (3,2.5) -- (3,-4.5);
\end{scope}
\draw[->] (WqTwc) to [out=10,in=165] (Cq1);
\draw[->] (WqTcw) to [out=-10,in=165] (Cq0);
\draw[->] (WqTpc) to [out=0,in=200] (Cq1);
\draw[->] (WqTcp) to [out=0,in=200] (Cq0);
\draw[->] (WqTcc) to [out=-10,in=180] (Cq0);
\draw[->] (WqTcc) to [out=10,in=180] (Cq1);

\begin{scope}[xshift=11cm]
  \node[draw,circle,inner sep=0mm,minimum size=9mm] (Pq) at (2,-1.5) {\footnotesize$q,u$};  
  \node[draw,inner sep=1mm,minimum size=5mm] (PqTpc) at (0,-1) {\footnotesize$q,u,q_0^p,q_1^c,0$};  
  \node[draw,inner sep=1mm,minimum size=5mm] (PqTcp) at (0,-2) {\footnotesize$q,u,q_0^c,q_1^p,0$};  
  \node[draw,circle,inner sep=0mm,minimum size=9mm] (Pq0) at (0,.5) {\footnotesize $q_{0},u0$};  
  \node[draw,circle,inner sep=0mm,minimum size=9mm] (Pq1) at (0,-3.5) {\footnotesize $q_{1},u1$};  
\draw[->] (Pq) to [out=170,in=0]  (PqTpc);
\draw[->] (Pq) to [out=190,in=0] (PqTcp);
\draw[->] (PqTpc) to [out=90,in=270]  (Pq0);
\draw[->] (PqTcp) to [out=270,in=90]  (Pq1);

\node (l) at (0.75,1.5) {\footnotesize for any $(q,t(u),q_{0},q_{1})\in\Delta$};  
\node (path) at (0.5,3) {\footnotesize \emph{\underline{path mode}}};  

\end{scope}

\draw[->] (WqTpc) to [out=-5,in=235] (Pq0);
\draw[->] (WqTcp) to [out=-10,in=200] (Pq1);
\draw[->] (PqTpc) to [out=100,in=0]  (Cq1);
\draw[->] (PqTcp) to [out=260,in=0]  (Cq0);

\end{tikzpicture}
\caption{Local structure of the arena of the acceptance game $\gameRejFin{\A,t}$. We use superscript to indicate which modes have been proposed by \Eloise.\label{fig:acceptance-game:RejAtMostFin}}
\end{center}
\end{figure}
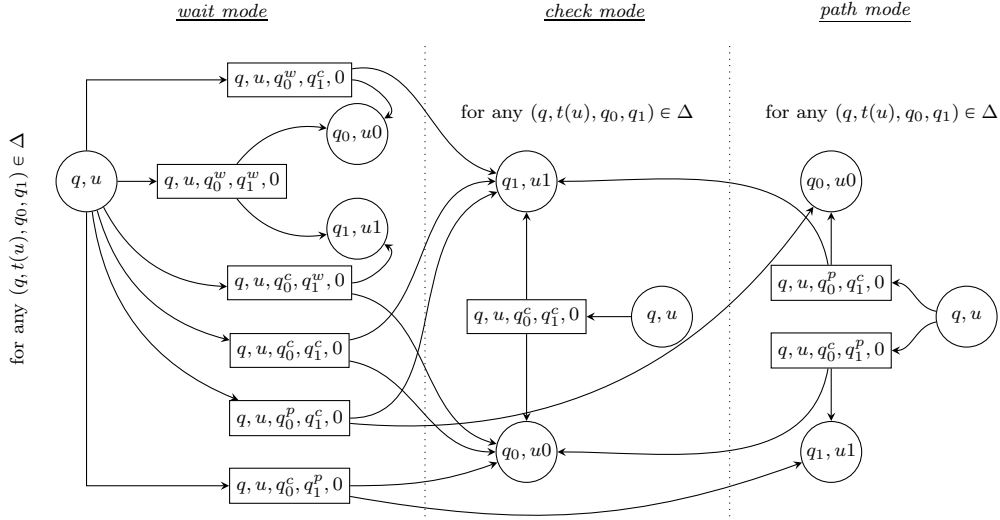

Regardless of the mode, in a node $u$ with state $q$ \Eloise picks a transition $(q,t(u),q_0,q_1) \in \Delta$, and for each direction in $i\in\{0,1\}$ she proposes the next mode $m_i$ in $\{wait,path,check\}$ (we describe below what are the possible options depending on the current mode). Then \Abelard chooses a direction $j\in\{0,1\}$, the pebble is moved down to $u\cdot j$, the state is updated to $q_{j}$ and the mode changes to $m_j$. 
The possible modes that \Eloise can propose depend on the current mode in the following manner.
\begin{itemize}
\item In \emph{wait} mode she can propose any modes $m_i$ in $\{wait,path,check\}$ but if one proposed mode $m_i$ is \emph{path} then the other mode $m_{1-i}$ must be \emph{check}.
\item In \emph{check} mode the proposed modes must be \emph{check} (\ie once the mode is \emph{check} it no longer changes).
\item In \emph{path} mode one proposed mode must be \emph{path} and the other must be \emph{check}.
\end{itemize}

A play is won by \Eloise if one of the two following situation occurs.
\begin{itemize}
\item The \emph{wait} mode is eventually left and the parity condition is satisfied.
\item The mode is eventually always equal to \emph{path}. 
\end{itemize}

In particular a play in which the mode is \emph{wait} forever is lost by \Eloise.
{Note that the latter winning condition can easily be reformulated as a parity condition.}
Call this game $\gameRejFin{\A,t}$. 

The next theorem states that it is an acceptance game for language $\LRejAtMostFin{\A}$.

\begin{theorem}\label{theo:RejAtMostFin_AcceptanceGame}
One has $t\in \LRejAtMostFin{\A}$ if and only if \Eloise wins in $\gameRejFin{\A,t}$ from $(\qini,\epsilon,wait)$.
\end{theorem}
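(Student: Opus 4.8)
The plan is to prove the two implications separately, in both cases by reading the three modes as a labelling of the whole tree. Throughout I use that, exactly as for the acceptance games of Section~\ref{ssection:TARTLAG}, a strategy of \Eloise (which always commits to a transition \emph{before} \Abelard picks a direction) induces a well-defined run $\run$ on $t$: to each node $u$ corresponds the unique partial play in which \Abelard chooses the directions spelling $u$, and this play also assigns to $u$ a well-defined mode in $\{wait,path,check\}$.

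For the implication from the game to the language I start from a winning strategy of \Eloise, take the induced run $\run$, and analyse the mode labelling. The nodes in \emph{wait} mode form a subtree containing the root, since by the transition rules a node can be in \emph{wait} only if its father is. This subtree has no infinite branch, for such a branch would be a play staying in \emph{wait} forever, which is losing and contradicts that the strategy is winning; hence by K\"onig's lemma the \emph{wait} region is finite and has only finitely many exit nodes, each entered in \emph{path} or \emph{check}. The rules force that \emph{check} is absorbing and that a \emph{path} node always has exactly one \emph{path} child and one \emph{check} child, so each exit node entered in \emph{path} generates a single infinite branch that is in \emph{path} forever, and there are finitely many of these. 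I then check that every rejecting branch is of this kind: a branch eventually in \emph{check} cannot satisfy the second winning clause, so it is won through the parity clause and is therefore accepting, whereas only a branch that stays in \emph{path} forever is allowed to be rejecting. As there are finitely many such branches, $\run$ has finitely many rejecting branches and $t\in\LRejAtMostFin{\A}$.

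For the converse I take a run $\run$ whose set $\Pi=\{\pi_1,\dots,\pi_k\}$ of rejecting branches is finite and let \Eloise always play the transitions of $\run$, using the modes to peel off the rejecting branches one at a time. Writing $r(u)$ for the number of branches of $\Pi$ through $u$ (non-increasing along every branch), the strategy stays in \emph{wait} while $r(u)\ge 2$; at a \emph{wait} node with $r(u)=1$ it sends the child carrying the unique rejecting branch to \emph{path} and the sibling (which has $r=0$) to \emph{check}; in \emph{path} mode it keeps following that branch, routing the off-path child to \emph{check}; and any subtree with $r=0$ is put entirely in \emph{check}. The rule that \emph{path} can be entered from \emph{wait} only when the sibling is a \emph{check} subtree is exactly why one must wait until $r$ drops to $1$, \ie until a rejecting branch has separated from all the others, and this is where Lemma~\ref{lemma:finite} enters: it guarantees that $W=\{u\mid r(u)\ge 2\}$ is finite, so the \emph{wait} region (contained in $W$, its children, and the root) is finite and no play stays in \emph{wait}. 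One then verifies that every play is won: the $k$ branches of $\Pi$ are the \emph{path}-forever branches, won by the second clause; every other branch eventually enters a $r=0$ check subtree, hence is accepting and won by the first clause.

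The delicate point, on which I would spend the most care, is this last matching in the converse: one has to argue that each $\pi_i$ is eventually isolated --- its value $r$ equals $1$ from some node on, necessarily reached while still inside the finite \emph{wait} region --- so that it is captured by exactly one \emph{path} segment, and simultaneously that every \emph{check} subtree really is free of rejecting branches. Both facts rest on the finiteness of $\Pi$ via Lemma~\ref{lemma:finite}; once this separation structure is in place the remaining mode bookkeeping is routine.
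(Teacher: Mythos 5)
Your proof is correct and follows essentially the same route as the paper's: both directions translate between strategies and mode-labelled runs via the partial plays $\play_u$, the forward direction bounds the \emph{wait} region by K\"onig's lemma and observes that only the finitely many \emph{path}-forever branches may reject, and the converse builds the strategy from a run with finite $\Pi$ using Lemma~\ref{lemma:finite}. The only (cosmetic) difference is in the converse: you exit \emph{wait} adaptively, as soon as the branch-count $r(u)$ drops to $1$ (using the first formulation of Lemma~\ref{lemma:finite}, finiteness of $W$), whereas the paper exits uniformly at the depth $\ell$ given by the lemma's second formulation --- both variants are sound and rest on the same lemma.
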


\begin{proof}
Assume that \Eloise has a winning strategy $\phi$ in $\gameRejFin{\A,t}$ from $(\qini,\epsilon,wait)$.
With $\phi$ we associate a run $\run$ of $\A$ on $t$ as follows. We inductively define for any node $u\in \{0,1\}^*$ a partial play $\play_u$ where \Eloise respects $\phi$. For this we let $\play_\epsilon = (\qini,\epsilon,wait)$. Now assume that we defined $\play_u$ for some node $u\in \{0,1\}^*$ and let $(q,t(u),q_0,q_1)$ be the transition \Eloise plays from $\play_u$ when she respects $\phi$. Then for each $i\in\{0,1\}$ one defines $\play_{u\cdot i}$ as the partial play obtained by extending $\play_u$ by \Eloise choosing transition $(q,t(u),q_0,q_1)$, followed by \Abelard choosing direction $i$ (we update the mode accordingly to the choice of \Eloise when respecting $\phi$ in $\play_u$).

The run $\run$ is defined by letting, for any $u\in \{0,1\}^*$, $\run(u)$ be the state attached to the pebble in the last configuration of $\play_u$. By construction, $\run$ is a run of $\A$ on $t$. Moreover with any branch $\pi$ one can associate a play $\play_\pi$ in $\gameRejFin{\A,t}$ from $(\qini,\epsilon)$ where \Eloise respects $\phi$ (one simply considers the limit of the \emph{increasing} sequence of partial plays $\play_u$ where $u$ ranges over nodes along branch $\pi$). 

First, note that there exists some $\ell\geq 0$ such that, for all $u\in\{0,1\}^{\geq \ell}$, $\play_u$ ends in a vertex where the mode is not \emph{wait}. Indeed, if this was not the case, one could construct an infinite branch $\pi$ such that, for all node $u$ in $\pi$, $\play_u$ ends with a vertex in mode \emph{wait} (recall that the only way to be in \emph{wait} mode is to be in that mode from the very beginning) and therefore the corresponding play $\play_\pi$ would be loosing, which contradicts the fact that \Eloise respects her winning strategy $\phi$ in play $\play_\pi$. Now, consider some node $u\in\{0,1\}^\ell$. If the final vertex in $\play_u$ is in mode \emph{check} one easily verifies that any branch that goes through $u$ is accepting (because the corresponding play is winning hence, satisfies the parity condition). If the final vertex in $\play_u$ is in mode \emph{path} one easily checks that among all branches that goes through $u$, there is exactly one branch $\pi$ such that $\play_\pi$ eventually stays in mode \emph{path} forever (and this branch may not satisfy the parity condition) while all other branches eventually stay in mode \emph{check} forever (and satisfy the parity condition). Therefore, the number of rejecting branches is finite. 

Conversely, assume that there is a run $\run$ of $\mathcal{A}$ on $t$ that contains finitely many rejecting branches. Call $\Pi$ this set of branches. Thanks to Lemma~\ref{lemma:finite},  there exists some $\ell\geq 0$ such that for all $w\in \{0,1\}^{\geq \ell}$ there is at most one $\pi\in\Pi$ such that $w\prefix \pi$.
Using $\run$ we define a strategy $\phi$ for \Eloise in  $\gameRejFin{\A,t}$ as follows. 
In any configuration $(q,u)$ (regardless of the mode) the strategy is to play transition $(q,t(u),\run(u 0),\run(u 1))$. Then there are several cases for determining how the mode is updated.
\begin{itemize}
\item  In some configuration $(q,u)$ with $u$ of length strictly smaller than $\ell$ the mode remains in \emph{wait}. 
\item In some configuration $(q,u)$ with $u$ of length equal to $\ell$ the strategy  proposes to update the mode to \emph{path} for direction $i\in\{0,1\}$ such that $u\cdot i\prefix \pi$ for some branch $\pi\in\Pi$, and to \emph{check} otherwise. Note that due to the definition of $\ell$, there is at most one direction $i$ in which the mode becomes \emph{path}.
\item In some configuration $(q,u)$ with $u$ of length strictly greater than $\ell$ if the mode is \emph{check} it will remain to \emph{check} in both direction. Otherwise (\ie the mode is \emph{path}) the strategy  proposes to update the mode to \emph{path}  for direction $i\in\{0,1\}$ such $u\cdot i\prefix \pi$ for some branch $\pi\in\Pi$, and to \emph{check} otherwise. Note that in the latter case, there is exactly one direction $i$ in which the mode is \emph{path}.
\end{itemize}
Remark that no play where \Eloise respects $\phi$ stays in \emph{wait} mode forever. Moreover, with any $\play$ where \Eloise respects $\phi$ one can associate a branch in the run $\run$ and this branch is rejecting if and only if $\play$ stays eventually in mode \emph{path} forever. Hence, any play where the mode is not infinitely often \emph{path} satisfies the parity condition (because the corresponding branch in $\run$ does so). Hence, $\phi$ is winning.
\end{proof}

From Theorem~\ref{theo:RejAtMostFin_AcceptanceGame} and the local structure of the arena of game $\gameRejFin{\A,t}$ one easily concludes that any language of the form $\LRejAtMostFin{\A}$ is $\omega$-regular.

\begin{theorem}\label{theo:RejAtMostFin_regular}
Let $\A=\langle A, Q , q_{ini},\Delta,\col \rangle$ be a parity tree automaton using $d$ colours. Then, there exists a parity tree automaton $\A'=\langle A, Q' , q'_{ini},\Delta',\col' \rangle$ such that $\LRejAtMostFin{\A}=L(\A')$. Moreover, $|Q'|=\mathcal{O}(|Q|)$ and $\A'$ uses $d$ colours.
\end{theorem}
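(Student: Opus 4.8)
The plan is to observe that the acceptance game $\gameRejFin{\A,t}$ constructed above is, up to a reformulation of its winning condition, already a classical acceptance game of the kind described in Section~\ref{ssection:TARTLAG}, and to read off the automaton $\A'$ directly from it. First I would record the structural fact, visible on Figure~\ref{fig:acceptance-game:RejAtMostFin}, that a move of \Eloise from a main configuration $(q,u,m)$ consists in choosing a transition $(q,t(u),q_0,q_1)\in\Delta$ together with a mode annotation $(m_0,m_1)$ for the two sons, after which \Abelard selects a direction $j\in\{0,1\}$ and the play proceeds to $(q_j,u\cdot j,m_j)$. This is exactly the local shape of a standard acceptance game, where \Eloise's set of choices is enriched from $\Delta$ to pairs consisting of a transition and an admissible mode annotation. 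Hence I set $Q'=Q\times\{wait,path,check\}$, $q'_{ini}=(\qini,wait)$, and let $\Delta'$ contain $((q,m),a,(q_0,m_0),(q_1,m_1))$ exactly when $(q,a,q_0,q_1)\in\Delta$ and $(m_0,m_1)$ is an admissible pair of successor modes for the current mode $m$ (any pair in \emph{wait} mode subject to the path/check constraint; $(check,check)$ in \emph{check} mode; $(path,check)$ or $(check,path)$ in \emph{path} mode). With this definition the classical acceptance game $\game_{\A',t}$ and the arena of $\gameRejFin{\A,t}$ coincide.

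The only remaining point is to realise the disjunctive winning condition of $\gameRejFin{\A,t}$ as a genuine parity condition through a colouring $\col'$ of $Q'$, which is where the real work lies. I would exploit the fact that along any branch the mode is eventually constant: it is either \emph{wait} forever, or it stabilises in \emph{check}, or it stabilises in \emph{path} (once \emph{check} is reached it never changes, and from \emph{path} the play either keeps the mode \emph{path} or switches once and for all to \emph{check}). Accordingly I colour the new states by $\col'((q,check))=\col(q)$, by the fixed even colour $\col'((q,path))=0$, and by the fixed odd colour $\col'((q,wait))=1$. Since $\liminf$ ignores finite prefixes, the parity outcome under $\col'$ of a play whose mode stabilises in \emph{check} is exactly the original parity outcome under $\col$ (such a play is won iff it satisfies the original parity condition), a play stabilising in \emph{path} has $\liminf\col'=0$ and is won by \Eloise, and a play staying in \emph{wait} forever has $\liminf\col'=1$ and is lost. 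Comparing these three cases with the stated winning condition of $\gameRejFin{\A,t}$ shows that they match exactly, so \Eloise wins $\gameRejFin{\A,t}$ from $(\qini,\epsilon,wait)$ if and only if she wins the parity game $\game_{\A',t}=(\arena_{\A',t},\col')$ from $((\qini,wait),\epsilon)$.

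Chaining this equivalence with Theorem~\ref{theo:RejAtMostFin_AcceptanceGame} (applied to $\A$) and with the characterisation of Section~\ref{ssection:TARTLAG} (applied to $\A'$) yields $t\in\LRejAtMostFin{\A}$ iff \Eloise wins $\gameRejFin{\A,t}$ iff $t\in L(\A')$, hence $\LRejAtMostFin{\A}=L(\A')$. For the complexity bounds, $|Q'|=3|Q|=\mathcal{O}(|Q|)$, and $\col'$ takes values in $\{0,\dots,d-1\}$: the even colour $0$ and the odd colour $1$ are already available as soon as $d\geq 2$, while the degenerate case $d=1$ makes every branch accepting and $\LRejAtMostFin{\A}$ trivial, so $\A'$ uses $d$ colours. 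The step I expect to be the main obstacle is precisely this colouring argument: one must check carefully that the eventual-mode trichotomy holds for every branch and that reusing the existing colours $0$ and $1$ for the \emph{path} and \emph{wait} modes never corrupts the parity evaluation of a branch whose tail lives in \emph{check} mode (this is harmless precisely because the mode of a branch is eventually constant, so its tail never mixes \emph{path}/\emph{wait} colours with \emph{check} colours). It is exactly this reuse that permits the economical bound of $d$ colours here, in contrast with the $d+1$ colours required in the countable case.
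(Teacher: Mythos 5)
Your proposal is correct and takes essentially the same route as the paper: the paper also reads $\A'$ directly off the arena of $\gameRejFin{\A,t}$, taking one state per pair in $Q\times\{wait,path,check\}$ with $\Delta'$ derived from the mode-update rules, and colouring \emph{wait} states with a fixed odd colour, \emph{path} states with a fixed even colour, and \emph{check} states with their original colour, relying (as you do) on the mode being eventually constant along any play so that the reused small colours never corrupt the parity evaluation of a \emph{check} tail. One cosmetic slip only: in the degenerate case $d=1$ with an odd colour every branch is \emph{rejecting} rather than accepting, but then $\LRejAtMostFin{\A}=\emptyset=L(\A)$, so taking $\A'=\A$ (exactly as the paper's parenthetical does) still settles that case.
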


\begin{proof}
Consider the local structure of game $\gameRejFin{\A,t}$ as described in Figure~\ref{fig:acceptance-game:RejAtMostFin}. It is then fairly simple how one defines $\A'$: for any state on $\A$ and any mode, one gets a new state in $\A'$, and the transition function $\Delta'$ directly follows from the way we update the modes. The states in \emph{wait} mode all get the same odd minimal colour (hence, if they are never left \Eloise looses), the states in path mode all get the same even minimal colour (hence, if they are never left \Eloise wins), and the states in check mode get the colour they had in $\A$.
Hence, we do not need to add any extra colour (except in the case where $\A$ uses only one colour but in this very degenerated case one can simply take $\A'=\A$).
\end{proof}

\subsection{Languages $\LRejAtMostCount{\A}$ and $\LRejAtMostFin{\A}$ \emph{vs} Büchi Tree Languages}

One can wonder, as it will be later the case (see Section~\ref{sec:results-accepting}) for languages of the form $\LAccInf{\A}$ or $\LAccUnc{\A}$ , whether a Büchi condition is enough to accept (with the classical semantics) a language of the form $\LRejAtMostCount{\A}$ (\resp $\LRejAtMostFin{\A}$). The next Proposition answers negatively.

\os{J'ai relu la preuve mais pas de près}
\ac{J'ai fait des modifications. Il y avait des petits bugs dans les détails.}
\begin{proposition}\label{proposition:noBuchi}
There is a co-Büchi deterministic tree automaton $\A$ such that for any \emph{Büchi} tree automaton $\A'$,  $ \LRejAtMostCount{\A}\neq L(\A')$ and $ \LRejAtMostFin{\A}\neq L(\A')$.
\end{proposition}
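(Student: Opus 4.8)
The plan is to exhibit a single co-Büchi deterministic automaton $\A$ whose languages $\LRejAtMostCount{\A}$ and $\LRejAtMostFin{\A}$ are identical (this will happen for free if the witnessing runs can only have finitely many rejecting branches anyway) and yet cannot be recognised by any Büchi tree automaton. The natural candidate exploits the classical fact that Büchi tree automata cannot recognise the language ``all branches carry finitely many $\b$'s'' (equivalently the co-Büchi-over-all-branches language). So first I would fix a two-letter alphabet $A=\{\a,\b\}$ and design $\A$ so that $\LRejAtMostFin{\A}$ (and $\LRejAtMostCount{\A}$) captures exactly those trees in which, roughly, only finitely many branches are ``bad'', where ``bad'' is itself a co-Büchi branch property. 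Concretely I would take $\A$ deterministic with a state remembering the last letter read, a branch being \emph{rejecting} precisely when it sees infinitely many $\b$'s (a co-Büchi condition $F=\{q_\b\}$), so that $t\in\LRejAtMostFin{\A}$ iff $t$ has only finitely many branches with infinitely many $\b$'s, and similarly with ``countably'' for the other language.

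Second I would argue the two languages coincide and give them a clean combinatorial description. Because $\A$ is deterministic there is a unique run, so membership in $\LRejAtMostFin{\A}$ resp. $\LRejAtMostCount{\A}$ is simply a property of $t$: the set of branches with infinitely many $\b$'s must be finite, resp. countable. The key step here is a topological observation: if the set of ``bad'' branches in a deterministic run is countable, then by a König/perfect-set argument it cannot contain a perfect subtree, which will let me show the countable and finite cases describe the very same trees (or at least that both languages fail to be Büchi for the same reason). I would make this precise by reducing to the well-known non-Büchi language $L_{\mathrm{fin}}=\{t\mid \text{every branch has finitely many }\b\}$ and its relatives.

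Third, the heart of the proof is the lower bound: no Büchi tree automaton $\A'$ recognises these languages. The standard technique is a pumping/diagonalisation argument for Büchi tree automata, exactly the one used to separate Büchi from parity (or to show $L_{\mathrm{fin}}$ is not Büchi): assuming $L(\A')=\LRejAtMostCount{\A}$, one builds from the Büchi acceptance a sequence of nested accepting subruns over finite tree-pieces, each forced to revisit a final state, then assembles a tree that should be accepted by $\A'$ but in which one can plant uncountably (or infinitely) many bad branches by splicing, contradicting membership. I would carry this out by choosing, for each $n$, a finite ``stem'' on which $\A'$ must pass through a final state, iterate the pumping lemma for regular tree languages to duplicate these segments, and thereby force a tree on the boundary between the two languages that $\A'$ misclassifies.

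The main obstacle I anticipate is precisely this last step: making the Büchi pumping argument airtight while simultaneously controlling the \emph{number} of rejecting branches created by the splicing. One must ensure the spliced tree genuinely lands outside $\LRejAtMostCount{\A}$ (i.e. really has uncountably many bad branches) while $\A'$, by its Büchi acceptance on the pumped structure, is forced to accept it; conversely one needs a companion tree inside the language that $\A'$ rejects. Balancing these two cardinality requirements against the finite-memory Büchi constraint—and doing it uniformly for both $\LRejAtMostCount{\A}$ and $\LRejAtMostFin{\A}$ with a single $\A$—is where the real care is needed, and I would isolate it as a separate combinatorial lemma about Büchi tree automata before assembling the final contradiction.
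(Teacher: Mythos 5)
Your outline points in the right direction (it is essentially Rabin's counterexample, which is also the paper's choice), but as written it has one false intermediate claim and one genuine gap. The false claim is that $\LRejAtMostFin{\A}$ and $\LRejAtMostCount{\A}$ coincide for your automaton $\A$. They do not: label with $b$ exactly the nodes in $1^*0^+$ and with $a$ all others; since $\A$ is deterministic there is a unique run, and the rejecting branches are exactly those of the form $1^n0^\omega$, a countably \emph{infinite} set. This tree lies in $\LRejAtMostCount{\A}$ but not in $\LRejAtMostFin{\A}$, so no perfect-set argument can merge the two languages. Fortunately, equality is not needed for the proposition: the paper treats both languages by a single argument because its witness trees sit on the extremes --- the pumped-from tree $t$ has \emph{no} rejecting branch (hence is in both languages) and the spliced tree $t'$ has \emph{uncountably many} (hence is in neither). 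Also, your ``companion tree inside the language that $\A'$ rejects'' is unnecessary; one misclassified tree gives the contradiction.

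The genuine gap is the step you defer to ``a separate combinatorial lemma'': that lemma \emph{is} the proof, and the standard Büchi pumping you invoke does not suffice. The classical argument that no Büchi automaton accepts $\{t \mid \text{every branch has finitely many } b\text{'s}\}$ only manufactures a single rejecting branch in the spliced tree, which is useless here since $\LRejAtMostCount{\A}$ tolerates countably many rejecting branches. To land outside the language you must splice so as to create \emph{uncountably} many bad branches, which requires a binary-branching repetition: three nodes $u$, $uv$, $uw$ with $v$ and $w$ incomparable, $\rho(u)=\rho(uv)=\rho(uw)$ a single \emph{final} state, and at least one $b$ on each of the two segments; iterating the resulting two-hole context yields an uncountable family of branches each seeing infinitely many $b$'s, while the run stays Büchi-accepting. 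Extracting such a triple from finite memory is where the paper does real work: it tailors the input tree so that $t(u)=b$ iff $u\in(1^+0)^k$ with $1\leq k\leq n=|Q|$ (at most $n$ left turns), defines for each node $u_i$ the set $W(u_i)$ of final states occurring at two incomparable descendants reachable with more left turns, and shows that if $\rho(u_i)\notin W(u_i)$ for every $i$ then $n>|W(u_0)|>\cdots>|W(u_{n-1})|$, forcing $W(u_{n-1})=\emptyset$ and contradicting acceptance of the branches $u_i1^m01^\omega$. Your proposal correctly anticipates this cardinality-control difficulty but supplies neither the bounded-left-turn tree nor the descent argument, so the contradiction never actually closes.
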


\begin{proof}
We choose for $\A$ the same automaton that was used by Rabin in \cite{Rabin72} to derive a similar statement where one replaces $\LRejAtMostCount{\A}$ by $L(\A)$ and we generalise the proof of this result as given in \cite[Example~6.3]{Thomas97}.

Let $L$ be the set of $\{a,b\}$-labeled trees such that the number of branches that contain infinitely many $b$'s is at most countable. Obviously there is a \emph{deterministic} co-Büchi automaton $\A$ such that $L=\LRejAtMostCount{\A}$. Indeed, consider an automaton $\A$ with two states, one forbidden and the other one non-forbidden, and that from any state, goes (for both sons) in the forbidden state whenever he was in a $b$-labeled node and otherwise goes (for both sons) in the non-forbidden state.

Assume, by contradiction, that there is some B\"uchi tree automaton $\A'=\langle \{a,b\}, Q, q_{ini},\Delta,{F} \rangle$ such that $ \LRejAtMostCount{\A}= L(\A')$. {Note that we will not treat the case where $ \LRejAtMostFin{\A}= L(\A')$ as it is identical}. Let \acchanged{$n = |Q|$} and let $t$ be the $\{a,b\}$-labeled tree such that $t(u)=b$ if and only if $u\in (1^+0)^k$ for some $1\leq k\leq n$, \ie label $b$ occurs when a left successor is taken after a sequence of right successors, however allowing at most $n$ left turns. Clearly, $t\in L$ as every branch contains finitely many $b$-labeled nodes. Let $\run$ be an accepting run of $\A'$ on $t$.

The goal is to exhibit three nodes $u$, $uv$ and $uw$ such that:
\begin{enumerate}
\item $uv$ is not a prefix of $uw$ and \emph{vice versa};
\item $\run(u)=\run(uv)=\run(uw)$ is a final state;
\acchanged{
\item $t(u)=t(uv)=t(uw)=a$;
}
\item on the path segment \acchanged{from $u$ to $uv$}  there is at least one node labeled $b$;
\item on the path segment \acchanged{from $u$ to $uw$}  there is at least one node labeled $b$.
\end{enumerate}

Once this is done we can form a new tree $t'$ (and an associated run $\run'$ of $\A'$ on $t'$) by iterating the finite path segment from $u$ (\acchanged{inclusive}) to $uv$ (\acchanged{exclusive}) and from $u$ (\acchanged{inclusive}) to $uw$ (\acchanged{exclusive}) indefinitely, copying also the subtrees which have their roots on these path segments. \acchanged{More formally, consider the two-hole context $C_t[\bullet,\bullet]$ (\resp $C_\rho[\bullet,\bullet]$) obtained by placing holes at
$uv$ and $uw$ in $t$ (\resp in $\rho$) and the two-hole context $D_t[\bullet,\bullet]$ (\resp $D_\rho[\bullet,\bullet]$) obtained by placing holes at $v$ and $w$ in the subtree $t_{/u}$ of $t$ rooted at $u$  (\resp $\rho_{/u}$) . The tree $t'$ is equal to $C_t[t'',t'']$ where $t''$ is the unique tree satisfying  the equation $t''=D_t[t'',t'']$. Similarly $\rho'$ is equal is equal to $C_\rho[\rho'',\rho'']$ where $\rho''$ is the unique tree satisfying  the equation $\rho''=D_\rho[\rho'',\rho'']$.}

This process leads to naturally exhibit a binary tree like structure\footnote{Actually it is what we refer to as an \emph{accepting-pseudo binary tree} in Section~\ref{section:PBT}.} inside $t'$/$\run'$ such that any branch in $t'$ contains infinitely many $b$-labeled nodes, hence $t'\notin L$ (indeed, there will be uncountably many branches in this binary tree like structure). But this leads a contradiction as $\run'$ is easily seen to be accepting while being a run on $t'$.

We now explain why we can find nodes $u$, $uv$ and $uw$ as claimed above.

For all $k$, we denote by $L_{k}$ the set $1^{+}(0 1^{+})^{k}$ of nodes 
that can be reached by making $k$ left-turns. We write $L_{<k}$ for the union
$\bigcup_{0\leq i<k} L_{i}$ and we write $L_{\geq k}$ for the union $\bigcup_{i\geq k} L_{i}$.

For all $k<n$ and all node $u \in L_{k}$, we denote by $W(u)$ the set of 
final states $q \in F$ such that there exist $v$ and \acchanged{$w \in L_{>k} \cap L_{\leq n} $} where $v$ is not prefix of $w$ and $w$ is not a prefix of $v$ and $\rho(uv)=\rho(uw)=q$. The set $W(u)$ is not empty as the run $\run$ is accepting. Moreover, for $u \in L_{k}$ and $u' \in L_{k'}$ with $k\leq k'<n$, if $u \prefix u'$ then $W(u) \supseteq W(u')$.

Clearly, it is enough to find a $u$ in $L_{<n}$ such that
$\rho(u) \in F$ and $\rho(u) \in W(u)$. 

We construct by induction an increasing sequence  of nodes $u_{0},\ldots,u_{n-1}$
such that:
\begin{itemize}
\item for all $i \in [0,n-1]$, $u_{i} \in L_{i}$,
\item $\rho(u_{0}) \in F$,
\item for all $i \in [1,n-1]$, $\rho(u_{i}) \in W(u_{i-1})$.
\end{itemize}

For $u_{0}$, we pick any node in $1^{*}$ labeled by a final state. Such a node must exist as the branch $1^{\omega}$ is accepted by the automaton.

Assume that $u_{i}$ has been constructed, we pick for $u_{i+1}$ a node 
in $u_{i} 1^{+}01^{+}$ labeled by a state in $W(u_{i})$. As all branches
of the form $u_{i} 1^{m} 0 1^{\omega}$ are accepted,  there exists a final state
$q$ such that $\rho(u_{i} 1^{m_{1}} 0 1^{m_{1}'})=\rho(u_{i} 1^{m_{2}} 0 1^{m_{2}'})=q$ for $m_{1} \neq m_{2}$. This implies that $q$ belongs to $W(u_{i})$
and we take for instance $u_{i+1}=u_{i} 1^{m_{1}} 0 1^{m_{1}'}$.

There must exist $i<n$ such that $\rho(u_{i})$ belongs to $W({u_{i}})$. 
Assume toward a contradiction that it is not the case. Then we have $n>|W(u_{0})|$ as $\rho(u_{0})$ does not belong to $W(u_{0})$. For all $0 \leq i <n$, $W(u_{i}) \supsetneq W(u_{i+1})$ as $W(u_{i}) \supseteq W(u_{i+1})$ and  $q_{i+1}$ belongs to $W(u_{i})$ but not to $W(u_{i+1})$.
Hence, we have:
\[
n>|W(u_{0})|> \cdots > |W(u_{n-1})|
\]
It follows that $W(u_{n-1})= \emptyset$ which brings the contradiction.
\end{proof}


\section{Counting Accepting Branches}\label{sec:results-accepting}

We now consider the case where acceptance is defined by a constraint on the number of accepting branches and we show that the associated languages are $\omega$-regular. It leads to new proofs, that rely on games rather than on logic, of the results in \cite{BN95}.

{Fix, for this section, a parity tree automaton $\A=\langle A, Q , q_{ini},\Delta,\col \rangle$ and recall that a tree $t$ is in $\LAccInf{\A}$ (\resp $\LAccUnc{\A}$) if and only if there is a run of $\A$ on $t$ that contains infinitely (\resp uncountably) many accepting branches.}

\subsection{The Case of Languages $\LAccInf{\A}$}

The key idea behind defining an acceptance game for $\LAccInf{\A}$ for some tree $t$ is to exhibit a \emph{pseudo comb} in a run of $\A$ over $t$. In a nutshell,
a pseudo comb consists of an infinite branch $U$ and a collection $V$ of accepting branches each of them sharing some prefix with $U$. One easily proves that a run contains infinitely many accepting branches if and only if it contains a pseudo comb.

\begin{figure}[htb]
\begin{center}
\begin{tikzpicture}

\begin{scope}[scale = 0.6]
\fill[bottom color=LimeGreen!100!black!20,
top color=LimeGreen!100!black!70] (0,0) -- (-3,-6) -- (3,-6) -- cycle ;
\draw[LimeGreen,thick] (0,0) -- (-3,-6);
\draw[LimeGreen,thick] (0,0) -- (3,-6);
\draw[Crimson,double,thick] (0,0) -- (-0.5,-2) -- (0,-3) -- (-1,-4) -- (-.5,-5) -- (-1,-6);
\draw[Blue,thick] (-0.25,-1) -- (1,-3) -- (1.5,-6);
\draw[Blue,thick] (-0.5,-2) -- (-2.5,-6);
\draw[Blue,thick] (-1,-4) -- (-1.5,-5) -- (-1.2,-5.5) -- (-1.5,-6);
\draw[Blue,thick] (-.5,-5) -- (-.2,-5.5) -- (-.5,-6);
\end{scope}

\end{tikzpicture}
\caption{A pseudo comb $(\textcolor{Crimson}{\mathbb{U}},\textcolor{Blue}{V})$}\label{sfigure:Comb}
\end{center}
\end{figure}

More formally, a \defin{pseudo comb} (see Figure~\ref{sfigure:Comb} for an illustration) is a pair of subset $(U,V)$ of nodes with $U,V\subseteq \{0,1\}^*$ such that:
\begin{itemize}
\item $U$ and $V$ are disjoint.
\item $U$ is a branch: $\epsilon\in U$ and for all $u\in U$ one has $|\{u0,u1\}\cap U|=1$.
\item $V$ is a set of nodes such that \begin{enumerate}[(i)]
	\item for all $v\in V$ one has $|\{v0,v1\}\cap V|=1$;
	\item for all $v\in V$, $v\in (U\cup V)\cdot\{0,1\}$.
\end{enumerate}
\item For infinitely many $u\in U$ there exists some $v\in V$ such that either $v=u0$ or $v=u1$.
\end{itemize}

The following folklore lemma characterises infinite sets of branches in the full binary tree.

\begin{lemma}\label{lemma:infinite}
Let $\Pi$ be a set of branches. Then $\Pi$ is infinite if and only if the set $W=\{w\mid\exists \pi\in\Pi \text{ s.t. } w \text{ belongs to } \pi\}$ contains a pseudo comb $(U,V)$, \ie $U\cup V\subseteq W$.
\end{lemma}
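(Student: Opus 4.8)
The plan is to prove the two implications separately, the forward direction (comb $\Rightarrow$ infinite) being easy and the converse being the crux. Throughout I would use that $W$ is exactly the prefix-closure of $\Pi$: a node $w$ lies in $W$ iff $w\prefix\pi$ for some $\pi\in\Pi$. In particular every $w\in W$ has at least one child in $W$ (the next prefix of a witnessing branch), so $W$ has no leaf; I call $w\in W$ a \emph{branching node} if both $w0$ and $w1$ belong to $W$, and I write $[W]$ for the set of branches $\pi$ all of whose prefixes lie in $W$, so that $\Pi\subseteq[W]$.

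First I would show that a pseudo comb $(U,V)\subseteq W$ forces $\Pi$ to be infinite. Since $U$ is a branch its nodes are totally ordered by $\prefix$, and at each $u\in U$ exactly one child lies in $U$; call the other one the \emph{off-spine child}. The last clause of the definition of a pseudo comb provides infinitely many $u\in U$ whose off-spine child lies in $V$, hence in $W$. These off-spine children are pairwise $\prefix$-incomparable: if $u\prefixstrict u'$ are two spine nodes then $u'$ extends the spine-child of $u$, so the off-spine child of $u$ is incomparable with $u'$ and a fortiori with the off-spine child of $u'$. For each such node $w$, being in $W$, pick a branch $\pi_w\in\Pi$ with $w\prefix\pi_w$; two incomparable nodes cannot both be prefixes of one branch, so the $\pi_w$ are pairwise distinct, yielding infinitely many distinct branches in $\Pi$.

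For the converse I would build a comb inside $W$ from the single assumption that $\Pi$, hence $[W]$, is infinite, the heart of the matter being the construction of the spine $U$. Consider $W_\infty=\{w\in W : \text{infinitely many branches of } [W] \text{ pass through } w\}$. The root lies in $W_\infty$ because $[W]\supseteq\Pi$ is infinite, and whenever $w\in W_\infty$ the infinitely many branches through $w$ split between its two children, so at least one child again lies in $W_\infty$; choosing such a child at every step defines an infinite branch $U\subseteq W_\infty$. I then claim $U$ meets infinitely many branching nodes of $W$: otherwise, past some $w_0\in U$ no node of $U$ is branching, so the subtree of $W$ below $w_0$ reduces to the single path $U$, whence only one branch of $[W]$ passes through $w_0$, contradicting $w_0\in W_\infty$. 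At each branching node $u\in U$ the off-spine child $w=u(1-i)$ (with $ui\in U$) lies in $W$ and, $W$ having no leaf, extends to an infinite path in $W$; I take that path as the tooth at $u$ and let $V$ be the union of all these teeth. Distinct teeth start at pairwise incomparable nodes, hence live in disjoint subtrees all disjoint from $U$, which gives at once that $U$ and $V$ are disjoint and that every $v\in V$ has exactly one child in $V$ (condition (i)); every tooth node is a child of the spine node that launched it or of its tooth-predecessor, giving $v\in(U\cup V)\cdot\{0,1\}$ (condition (ii)); and the infinitely many branching nodes on $U$ supply infinitely many $u\in U$ with a child in $V$. Thus $(U,V)$ is a pseudo comb contained in $W$.

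I expect the main obstacle to be exactly this spine construction: one must resist taking $U$ to be an actual branch of $\Pi$, which need not carry infinitely many teeth, and instead exploit the self-similarity of $W_\infty$, which both produces the branch and, via the single-path contradiction, forces infinitely many branching nodes along it. The verification of the comb axioms for the teeth is then routine, relying only on the disjointness of the subtrees rooted at the incomparable off-spine children.
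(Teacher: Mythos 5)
Your proof is correct and follows essentially the same route as the paper's: the hard direction is the same K\"onig-style construction, descending through nodes carrying infinitely many branches to build the spine and peeling off witnesses at splitting points to form the teeth (the paper directly picks an increasing sequence of nodes $u_i$ carrying infinitely many branches of $\Pi$ with both children hit by $\Pi$ and uses deviating branches $V_i\in\Pi$ as teeth, while you work in the prefix-closure tree via $W_\infty$ and take arbitrary infinite paths of $W$ as teeth --- an immaterial variation since the lemma only requires $U\cup V\subseteq W$). You additionally spell out the easy direction (comb $\Rightarrow$ infinite) via the pairwise-incomparable off-spine children, which the paper's proof leaves implicit.
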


\begin{proof}
There exists an increasing sequence (for the prefix relation $\prefix$)  $(u_i)_{i\geq 0}$ of nodes such that from all $i\geq 0$ infinitely many branches in $\Pi$ go through $u_i$ and moreover at least one branch in $\Pi$ goes through $u_i\cdot 0$ (\resp through $u_i\cdot 1$). The existence of this sequence is by an immediate induction. 

Define $U$ as the set of prefixes of elements in the sequence $(u_i)_{i\geq 0}$: $U$ is a branch as the sequence $(u_i)_{i\geq 0}$ is increasing.

For all $i$, pick a branch $V_i$ that goes through $u_i$ but not from $u_{i+1}$ (it exists by definition of $u_i$). Then to obtain a pseudo comb $(U,V)$ such that $U\cup V\subseteq W$, it suffices to define $V=(\bigcup_{i\geq 0} V_i)\setminus U$.
\end{proof}

\begin{figure}
\begin{center}
\begin{tikzpicture}
\tikzset{>=stealth}
  \node[draw,circle,inner sep=0mm,minimum size=9mm] (q) at (0,0) {\footnotesize$q,u$};  
  \node[draw,inner sep=1mm,minimum size=5mm,ellipse] (qp0) at (2,.8) {\footnotesize$q,u,q_0,q_1,0$};  
  \node[draw,inner sep=1mm,minimum size=5mm,ellipse] (qp1) at (2,-.8) {\footnotesize$q,u,q_0,q_1,1$};  
  \node[draw,circle,inner sep=0mm,minimum size=9mm] (p0) at (4.2,.8) {\footnotesize $q_{0},u0$};  
  \node[draw,circle,inner sep=0mm,minimum size=9mm] (p1) at (4.2,-.8) {\footnotesize $q_{1},u1$};  
  \node[draw,inner sep=1mm,minimum size=5mm] (p0A) at (4.2,2) {\footnotesize$q,u,q_0,q_1,0$};  
  \node[draw,inner sep=1mm,minimum size=5mm] (p1A) at (4.2,-2) {\footnotesize$q,u,q_0,q_1,1$};  
  \node[draw,circle,inner sep=0mm,minimum size=9mm] (CMp1) at (7,2) {\footnotesize $q_{1},u1$};  
  \node[draw,circle,inner sep=0mm,minimum size=9mm] (CMp0) at (7,-2) {\footnotesize $q_{0},u0$};  
  \node[draw,circle,inner sep=0mm,minimum size=9mm] (CMq) at (9.5,0) {\footnotesize$q,u$};  
  \node[draw,inner sep=1mm,minimum size=5mm,ellipse] (CMp) at (7,0) {\footnotesize$q,u,q_0,q_1,0$};  
\draw[->] (q) to [in=180,out=90]  (qp0);\draw[->] (q) to [in=180,out=-90]  (qp1);
\draw[->] (qp0) -- (p0);\draw[->] (qp1) -- (p1);
\draw[->] (qp0) to [in=180,out=90] (p0A);\draw[->] (qp1) to [in=180,out=-90] (p1A);
\draw[->] (p0A) -- (p0);\draw[->] (p1A) -- (p1);
\draw[->] (CMp) --  (CMp0);\draw[->] (CMp) --  (CMp1);
\draw[->] (CMq) --  (CMp);
\draw[->] (p0A) -- (CMp1);\draw[->] (p1A) -- (CMp0);

\node (l) at (3,-3) {\footnotesize for any $(q,t(u),q_{0},q_{1})\in\Delta$};  
\node (path) at (3,3) {\footnotesize \emph{\underline{path mode}}};  
\node (check) at (8,3) {\footnotesize \emph{\underline{check mode}}};  
\node (l) at (8,-3) {\footnotesize for any $(q,t(u),q_{0},q_{1})\in\Delta$};  
\draw[dotted] (5.5,3.3) -- (5.5,-3.3);

\end{tikzpicture}
	\caption{Local structure of the arena of the acceptance game $\gameAccInf{\A,t}$.\label{fig:acceptance-game:AccInf}}
\end{center}
\end{figure}

Now, fix a tree $t$ and define an acceptance game for $\LAccInf{\A}$. 
There are two modes in the game (See Figure~\ref{fig:acceptance-game:AccInf} for the local structure of the arena): 
\emph{path} mode and \emph{check} mode and the game starts in \emph{path} mode. Hence, (\Eloise's) configurations in the game are elements of $Q\times\{0,1\}^*\times \{path,check\}$.
In \emph{path} mode, in a node $u$ with state $q$ \Eloise picks a transition $(q,t(u),q_0,q_1) \in \Delta$, and she chooses a direction $i\in \{0,1\}$.
Then \Eloise has two options. Either she moves down the pebble to $u\cdot i$ and updates the state to be $q_i$. Or she proposes \Abelard to change to \emph{check} mode: if he accepts, the pebble is moved down to $u\cdot (1-i)$ and the state is updated to $q_{(1-i)}$; if he refuses, the pebble is moved down to $u\cdot i$ and the state is updated to $q_i$

In \emph{check} mode \Eloise plays alone: in a node $u$ with state $q$ she picks a transition $(q,t(u),q_0,q_1) \in \Delta$, and she chooses a direction $i\in \{0,1\}$; then the pebble is moved down to $u\cdot i$ and the state is updated to $q_i$. Note that there is no possible switch  from \emph{check} mode back to \emph{path} mode.

A play is won by \Eloise if one of the two following situations occurs. 
\begin{itemize}
	\item Eventually the players have switched to \emph{check} mode and the parity condition is satisfied.
	\item \Eloise proposed infinitely often \Abelard to switch the mode but he always refused. 
\end{itemize}
Call this game $\gameAccInf{\A,t}$. 

The next theorem states that it is an acceptance game for language $\LAccInf{\A}$.

\begin{theorem}\label{thm:LAccInf:acceptance}
One has $t\in \LAccInf{\A}$ if and only if \Eloise wins in $\gameAccInf{\A,t}$ from $(\qini,\epsilon,path)$.
\end{theorem}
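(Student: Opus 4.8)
The plan is to prove both implications directly, translating between runs with infinitely many accepting branches and winning strategies for \Eloise, and using Lemma~\ref{lemma:infinite} to organise the accepting branches into a pseudo comb. The point of the mode structure of $\gameAccInf{\A,t}$ is that \emph{path} mode traces the spine $U$ of such a comb, while a switch to \emph{check} mode commits \Eloise to following one tooth to infinity; the two winning conditions then assert exactly that either some tooth is entered and is accepting, or else the spine sprouts infinitely many teeth. Note that no determinacy argument is needed, since both directions construct the desired object explicitly.

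For the implication from left to right, suppose $t\in\LAccInf{\A}$ and fix a run $\run$ with infinitely many accepting branches. Applying Lemma~\ref{lemma:infinite} to the set $\Pi$ of accepting branches of $\run$, I would extract an infinite spine $U$ together with infinitely many accepting branches of $\run$, each leaving $U$ at a distinct node $w_0, w_1, \dots$ (the teeth of the resulting pseudo comb). I then let \Eloise play along $U$ in \emph{path} mode, always choosing the transition prescribed by $\run$, and \emph{propose} a switch exactly at each $w_i$, orienting her chosen direction so that a refusal keeps the pebble on $U$ and an acceptance sends it onto the tooth leaving $U$ at $w_i$. If \Abelard ever accepts, the play enters \emph{check} mode along an accepting branch of $\run$, the parity condition holds, and \Eloise wins by the first condition; if \Abelard refuses forever, the play stays on $U$ and \Eloise has proposed infinitely often, so she wins by the second condition.

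For the converse, suppose \Eloise has a winning strategy $\phi$ and consider the unique play in which \Abelard refuses every proposal; it traces an infinite branch $U$. Since remaining in \emph{path} mode while proposing only finitely often is losing, $\phi$ must propose at infinitely many nodes $w_0, w_1, \dots$ of $U$. At each $w_i$, the alternative play in which \Abelard instead \emph{accepts} forces a switch to \emph{check} mode, after which \Eloise plays alone; as $\phi$ is winning and the play is eventually in \emph{check} mode, the first winning condition forces the parity condition, so this play describes an accepting branch $\pi_i$ leaving $U$ at $w_i$. These branches are pairwise distinct, since they diverge from $U$ at distinct nodes, so there are infinitely many of them. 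It then remains to assemble an actual run: the transition $\phi$ selects at each visited node already fixes the states of \emph{both} children, so $U$, all the $\pi_i$, and their immediate off-branch children are consistently labelled, and every remaining subtree is completed arbitrarily using the completeness of $\A$. The resulting tree $\run$ is a valid run of $\A$ on $t$ containing the infinitely many accepting branches $\pi_i$, whence $t\in\LAccInf{\A}$.

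I expect the main obstacles to be two book-keeping points. First, in the forward direction, one must check that the teeth produced by (the proof of) Lemma~\ref{lemma:infinite} really are accepting branches of $\run$ and not merely nodes lying on accepting branches, and that \Eloise can consistently orient her proposals so that refusals stay on $U$ while acceptances enter a tooth. Second, in the backward direction, the subtlety is that a strategy in $\gameAccInf{\A,t}$ prescribes \Eloise's moves only along the spine and the check branches; the work is to verify that a single chosen transition determines both children's labels and that completeness lets one extend this partial labelling to a genuine run without disturbing the accepting branches already built.
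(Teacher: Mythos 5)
Your proposal is correct, and it takes a genuinely (if mildly) different route from the paper's proof, in two respects. For the direction ``\Eloise wins $\Rightarrow t\in\LAccInf{\A}$'', the paper extracts from the winning strategy a full pseudo comb $(U,V)$ inside a run and then invokes Lemma~\ref{lemma:infinite}; you bypass the comb there and directly exhibit infinitely many pairwise distinct accepting branches $\pi_i$, one per proposal node $w_i$ on the all-refusals spine. The consistency worry you raise is handled exactly as you suggest: all the check-mode plays share their path-mode prefix with the spine play, so the transitions chosen by $\phi$ label $U$, the $\pi_i$ and their off-branch children coherently, and the $\pi_i$ live in pairwise disjoint subtrees, so completeness fills in the rest. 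More significantly, for the converse the paper argues contrapositively via Borel determinacy (Martin's theorem), obtaining a winning strategy $\psi$ for \Abelard and reaching a contradiction by playing a comb-based strategy against $\psi$; your observation that determinacy is dispensable here is right, since the comb-based strategy is outright winning --- every play consistent with it either stays on the spine with infinitely many refused proposals (second winning condition) or enters \emph{check} mode along an accepting tooth, satisfying the parity condition. This makes your argument more elementary in its logical resources (unlike the analogous proof for $\LRejAtMostCount{\A}$, where determinacy is genuinely needed). One caveat applies equally to you and to the paper: Lemma~\ref{lemma:infinite} as stated only guarantees $U\cup V\subseteq W$, i.e.\ that the teeth consist of nodes \emph{lying on} accepting branches, whereas what is used is the stronger property --- visible in the lemma's proof, where the teeth are tails of branches $V_i\in\Pi$ --- that every branch following $U$ and then staying in $V$ is itself accepting. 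You flag this yourself, so nothing is missing, but in a final write-up you should either strengthen the statement of the lemma or appeal explicitly to its proof, which is what the paper does implicitly.
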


\begin{proof}

Assume that \Eloise has a winning strategy $\phi$ in $\gameAccInf{\A,t}$ from configuration $(\qini,\epsilon,path)$.
With $\phi$ we associate a run $\run$ of $\A$ on $t$ and a pseudo comb $(U,V)$ as follows. We inductively associate with any node $u\in U$ a partial play $\play_u$ where \Eloise respects $\phi$ and that is always in \emph{path} mode; and we inductively associate with any node $v\in V$ a partial play $\play_v$ where \Eloise respects $\phi$ and where the mode has eventually been switched to \emph{check} mode. For this we let $\epsilon\in U$ and $\play_\epsilon = (\qini,\epsilon,path)$. Now assume that we defined $\play_u$ for some node $u\in U$ and let $(q,t(u),q_0,q_1)$ be the transition and let $i$ be the direction \Eloise plays from $\play_u$ when she respects $\phi$. Then we have two possible situations depending whether, right after playing $(q,t(u),q_0,q_1)$ and still respecting $\phi$, \Eloise proposes \Abelard to switch to \emph{check} mode. 
\begin{itemize}
\item If she does so we let $u\cdot (1-i)$ belongs to $V$ and we define $\play_{u\cdot (1-i)}$ as the partial play obtained by extending $\play_u$ by \Eloise choosing transition $(q,t(u),q_0,q_1)$ and direction $i$, followed by \Eloise proposing \Abelard to switch the mode and \Abelard accepting (hence, moving down the pebble in direction $(1-i)$ and attaching state $q_{(1-i)}$ to it). We let $u\cdot i$ belongs to $U$ and we define $\play_{u\cdot i}$ as the partial play obtained by extending $\play_u$ by \Eloise choosing transition $(q,t(u),q_0,q_1)$ and direction $i$, followed by \Eloise proposing \Abelard to switch the mode and \Abelard refusing (hence, moving down the pebble in direction $i$ and attaching state $q_{i}$ to it).
\item 
If \Eloise does not propose \Abelard to switch the mode we do not let $u\cdot (1-i)$ belongs to $V$. And we let $u\cdot i$ belongs to $U$ and we define $\play_{u\cdot i}$ as the partial play obtained by extending $\play_u$ by \Eloise choosing transition $(q,t(u),q_0,q_1)$ and direction $i$, followed by \Eloise not proposing \Abelard to switch the mode (hence, moving down the pebble in direction $i$ and attaching state $q_{i}$ to it).
\end{itemize}

The run $\run$ is defined by letting, for any $w\in U\cup V$, $\run(w)$ be the state attached to the pebble in the last configuration of $\play_w$. For those $w\notin U\cup V$ we define $\run(w)$ so that the resulting run is valid, which is always possible as we only consider complete automata. By construction, $\run$ is a run of $\A$ on $t$ and $(U,V)$ is a pseudo comb. Moreover with any branch $\pi$ that can be built as an initial sequence of nodes in $U$ followed by an infinite sequence of nodes in $V$ one can associate a play $\play_\pi$ in $\gameAccInf{\A,t}$ from $(\qini,\epsilon,path)$ where \Eloise respects $\phi$ (one simply considers the limit of the \emph{increasing} sequence of partial plays $\play_v$ where $v$ ranges those nodes nodes in $V$ along branch $\pi$). By construction $\pi$ is accepting as $\play_\pi$ fulfils the parity condition. Hence, by Lemma~\ref{lemma:infinite} we conclude that $\run$ contains infinitely many accepting branches, meaning that $t\in \LAccInf{\A}$.

Conversely, assume that \Eloise does not have a winning strategy in $\gameAccInf{\A,t}$ from $(\qini,\epsilon,path)$. By Borel determinacy, \Abelard has a winning strategy $\psi$ in $\gameAccInf{\A,t}$ from $(\qini,\epsilon,path)$. By contradiction, assume that there is a run $\run$ of $\A$ on $t$ that contains infinitely many accepting branches. By Lemma~\ref{lemma:infinite}, it follows that $\run$ contains a pseudo comb $(U,V)$ such that any branch that can be built as an initial sequence of nodes in $U$ followed by an infinite sequence of nodes in $V$ is an accepting branch. From $\run$ and $(U,V)$ we define a strategy $\phi$ of \Eloise in $\gameAccInf{\A,t}$ from $(\qini,\epsilon,path)$ as follows. Strategy $\phi$ uses as a memory either a node $u\in U$ if the play is in \emph{path} mode or a node $v\in V$ if the play is in \emph{check} mode; initially the memory is $u=\epsilon$. Now assume that the pebble is in some node $u\in U$ with state $q$ attached to it (one will inductively check that $\run(u)=q$). Then there are two possibilities.
\begin{itemize}
\item Both $u0$ and $u1$ belong to $U\cup V$: strategy $\phi$ indicates that \Eloise chooses transition $(q,t(u),\run(u0),\run(u1))$ and direction $i$ where $ui\in U$ and proposes \Abelard to switch to \emph{check} mode. Then the memory is updated to $u\cdot j$ where $j=i$ if the mode is unchanged and $j=1-i$ otherwise.
\item $u0$ (\resp $u1$) belong to $U$ but $u1$ (\resp $u0$) does not belong to $V$: strategy $\phi$ indicates that \Eloise chooses transition $(q,t(u),\run(u0),\run(u1))$ and chooses direction $0$ (\resp $1$) and does not propose \Abelard to switch to \emph{check} mode. Then the memory is updated to $u0$ (\resp $u1$).
\end{itemize}

Now consider a play $\play$ where \Eloise respects her strategy $\phi$ while \Abelard respects his strategy $\psi$. First, as $(U,V)$ is a pseudo comb and by definition of $\phi$, it follows that $\play$ only goes through nodes in $U\cup V$, and if $\play$ only goes through nodes in $U$ then \Eloise proposes infinitely often to \Abelard to switch to \emph{check} mode. Hence, as $\psi$ is winning for \Abelard one concludes that eventually the mode is switched in $\play$ and that the resulting play does not fulfil the parity condition. Now, it is easily seen that with $\play$ one associates a branch $\pi$ in the run $\run$ and that this branch can be built as an initial sequence of nodes in $U$ followed by an infinite sequence of nodes in $V$. As this branch is rejecting in $\run$ it leads a contradiction.
\end{proof}

One can modify $\gameAccInf{\A,t}$ so that to obtain an \emph{equivalent} game that has the form of a classical acceptance game. From this follows the fact that the languages of the form $\LAccInf{\A}$ are indeed $\omega$-regular. As the new game can be seen to be obtained from a Büchi automaton, this also permits to lower the acceptance condition.

\begin{theorem}\label{thm:LAccInf:reg}
Let $\A=\langle A, Q , q_{ini},\Delta,\col \rangle$ be a parity tree automaton using $d$ colours. Then there exists a \emph{Büchi} tree automaton $\A'=\langle A, Q' , q'_{ini},\Delta',\col' \rangle$ such that $\LAccInf{\A}=L(\A')$. Moreover $|Q'|=\mathcal{O}(d|Q|)$.
\end{theorem}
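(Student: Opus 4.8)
The plan is to reuse the roadmap of the preceding theorems: turn the acceptance game $\gameAccInf{\A,t}$ of Theorem~\ref{thm:LAccInf:acceptance} into an \emph{equivalent} game that is literally a classical acceptance game $\game_{\A',t}$ in the sense of Section~\ref{ssection:TARTLAG}, for a suitable Büchi automaton $\A'$, and then invoke the bijection between strategies of \Eloise and runs of $\A'$. First I would have \Eloise announce all of her choices inside the transition she picks, so that the only choice left to \Abelard is the direction $j\in\{0,1\}$, which is exactly the move structure of a classical acceptance game. The states of $\A'$ carry the state of $\A$ together with a mode in $\{path,check\}$ (plus the bookkeeping described below), and each transition fixes the states of \emph{both} children. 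A \emph{spawn} move in \emph{path} mode (spine-direction $i$, proposal to switch) becomes an $\A'$-transition whose child $i$ has mode \emph{path} and whose child $1-i$ has mode \emph{check}: \Abelard choosing direction $i$ corresponds to refusing, direction $1-i$ to accepting. A \emph{no-spawn} move becomes a transition whose child $i$ stays in \emph{path} mode while child $1-i$ is sent to a distinguished \emph{sink} state that is trivially accepting; likewise, in \emph{check} mode the child not followed by \Eloise is sent to the sink. Since the sink is winning for \Eloise, \Abelard never benefits from entering it, so these sink children faithfully encode the fact that in $\gameAccInf{\A,t}$ the off-comb subtrees are irrelevant, while still respecting the classical-game requirement that \emph{every} branch of the run be accepting.

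With this encoding, the winning condition splits into three regimes along branches of the $\A'$-run. A branch that stays in \emph{path} mode forever should win iff it spawns infinitely often; I capture this by marking the \emph{path}-mode states reached through a spawn and demanding they be visited infinitely often, which is a Büchi condition. A branch ending in the \emph{sink} wins trivially, so the sink is a final self-loop. The delicate regime is a branch that eventually enters \emph{check} mode: it must satisfy the \emph{parity} condition of $\A$ on its check suffix, and I must express this by a \emph{Büchi} condition. The observation that makes this possible is that, thanks to the sink trick, each check branch is a single non-branching path fully controlled by \Eloise (equivalently, by the run), and distinct check branches occupy \emph{disjoint} check-mode subtrees; hence verifying the parity condition along a check branch is purely \emph{existential}, and I may replace it by the standard nondeterministic Büchi word automaton for the parity language.

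Concretely I would let \emph{check}-mode states carry, besides the state of $\A$, a guess component in $\{\bot\}\cup\{\text{even colours}\}$: in the uncommitted phase ($\bot$) the automaton merely follows $\A$; at some nondeterministically chosen moment it commits to an even colour $c$ (its guess for the least colour occurring infinitely often), thereafter failing if a colour smaller than $c$ is ever seen and marking as final every visit of colour $c$. Soundness is the usual fact that a nondeterministic Büchi automaton captures any parity (hence $\omega$-regular) word condition, and since distinct check branches are independent, each may carry its own guess without conflict. The final states of $\A'$ are then the union of the spawn-marked \emph{path} states, the final \emph{check} states, and the sink, which is a single Büchi set. For the size, the \emph{path} part and the sink contribute $\mathcal{O}(|Q|)$ states, while the \emph{check} part is a copy of $Q$ for each of the $\mathcal{O}(d)$ values of the guess component, giving $\mathcal{O}(d|Q|)$ overall. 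The modified game is equivalent to $\gameAccInf{\A,t}$ (the strategies are essentially the same), and being a classical acceptance game for $\A'$ it yields $\LAccInf{\A}=L(\A')$. I expect the main obstacle to be precisely the \emph{check}-mode step: one must justify carefully that the parity-to-Büchi replacement is sound \emph{in the tree setting}, that is, that it does not spuriously constrain sibling subtrees, which is exactly what the sink trick and the disjointness of the check subtrees are designed to guarantee.
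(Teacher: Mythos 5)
Your proposal is correct and follows essentially the same route as the paper's proof: you turn $\gameAccInf{\A,t}$ into a classical acceptance game by having \Eloise announce direction and mode switch inside the transition (your trivially accepting sink playing exactly the role of the paper's dummy winning configuration for unfollowed directions), and you handle the residual parity requirement on \emph{check} branches just as the paper does, by a nondeterministic commitment to the least even colour seen infinitely often, sound precisely because \Eloise plays alone in \emph{check} mode so distinct check branches are independent. The Büchi marking of spawn moves on the \emph{path} branch and the $\mathcal{O}(d|Q|)$ size analysis (one copy of $Q$ per guessed colour) also coincide with the paper's argument.
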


\begin{proof}
Start from game $\gameAccInf{\A,t}$ and observe that if one duplicates the control states (with a classical version and a starred version of each state) and add a Boolean flag \Eloise can indicate the direction she wants to follow and whether she proposes to switch to \emph{check} mode: \eg if she wants to choose transition $(q,t(u),q_0,q_1)$ and direction $1$ and not change the mode, she just moves to configuration $(q,u,q_0,q_1^*,\bot)$ in the new game; if she wants to choose transition $(q,t(u),q_0,q_1)$ and direction $0$ and offer \Abelard the option to switch to \emph{check} mode she moves to configuration $(q,u,q_0^*,q_1,\top)$. Now, we allow \Abelard to choose any direction but if the corresponding state is not starred then either one goes to a dummy winning configuration for \Eloise if the Boolean was $\bot$ and otherwise one changes the mode to \emph{check}. We indicate the \emph{check} mode in the control state and we use the same trick to let \Eloise impose the choice of the branch (if \Abelard does not follow her choice one ends up in the previous dummy configuration). It should be clear that the resulting game is an equivalent acceptance game when equipped with the following winning condition: \Eloise wins if either the dummy configuration is reached, or infinitely many configuration with Boolean $\top$ are visited but the play is always in \emph{path} mode or the play is eventually in \emph{check} mode and the parity condition holds. 
Now, the two first criteria are Büchi criteria while the third one is \emph{a priori} a parity condition. But as \Eloise plays alone in \emph{check} mode, she can indicate at some point that the smallest infinitely visited colour will be some (even) integer and that no other smallest colour will latter be visited: hence, if one stores the colour, go to a final state whenever it is visited and to a rejecting state if some smallest colour occurs, then one obtains a Büchi condition.
All together (combining the Büchi conditions in the usual way) one obtains an equivalent {Büchi} classical acceptance game, showing that $\LRejAtMostCount{\A}$ is $\omega$-regular and accepted by a Büchi automaton. 

The construction of $\A'$ is immediate from the final game and the size is linear in $d|Q|$ due to the fact that one needs to remember the smallest colour for the \emph{check} mode.
\end{proof}

\subsection{The Case of Languages $\LAccUnc{\A}$}

We now discuss the case of languages of the form $\LAccUnc{\A}$. For this we start with some key objects (\emph{accepting pseudo binary trees} and \emph{$k$-pseudo binary trees}) that are used to characterise runs with uncountably many accepting branches. Then, we describe two acceptance games: the first one is very simple while the second one is more involved but later permits to lower the acceptance condition to Büchi when showing that the  languages $\LAccUnc{\A}$ are accepted by tree automata with the classical semantics

\subsubsection{Accepting-Pseudo Binary Tree \& $k$-Pseudo Binary Tree}\label{section:PBT}

The key idea behind defining an acceptance game for $\LAccUnc{\A}$ for some tree $t$ is to exhibit an \emph{accepting pseudo binary tree} in a run of $\A$ over $t$. In a nutshell,
 an accepting  pseudo binary tree is an infinite set $U$ of nodes with a tree-like structure between them and such that any branch that {has infinitely many prefixes in $U$} is accepting.

We now formally define \emph{accepting-pseudo binary trees} and \emph{$k$-pseudo binary trees} that characterise those runs that contains uncountably many accepting branches (Lemma~\ref{lemma:uncountable} and Lemma~\ref{lemma:uncountable_weak} below).

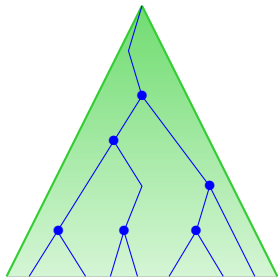
\begin{figure}[htb]
\begin{center}
\begin{tikzpicture}

\begin{scope}[scale = 0.6,]
\tikzset{>=stealth}

\fill[bottom color=LimeGreen!100!black!20,
top color=LimeGreen!100!black!70] (0,0) -- (-3,-6) -- (3,-6) -- cycle ;
\draw[LimeGreen,thick] (0,0) -- (-3,-6);
\draw[LimeGreen,thick] (0,0) -- (3,-6);
\draw[Blue] (0,0) -- (-0.3,-1)-- (0,-2);
\draw[Blue] (0,-2)-- (-2.5,-6);
\draw[Blue] (0,-2)-- (1.5,-4)-- (1.2,-5) -- (1.8,-6);
\draw[Blue] (1.5,-4) -- (2.5,-6);
\draw[Blue] (1.2,-5) -- (.6,-6);
\draw[Blue] (-.625,-3) -- (0,-4) -- (-.4,-5) -- (-.1,-6);
\draw[Blue] (-.4,-5) -- (-.7,-6);
\draw[Blue] (-1.875,-5) -- (-1.25,-6);
\node at (0,-2) {\textcolor{Blue}{$\bullet$}};
\node at (1.5,-4) {\textcolor{Blue}{$\bullet$}};
\node at (1.2,-5) {\textcolor{Blue}{$\bullet$}};
\node at (-.625,-3) {\textcolor{Blue}{$\bullet$}};
\node at (-.4,-5) {\textcolor{Blue}{$\bullet$}};
\node at (-1.85,-5) {\textcolor{Blue}{$\bullet$}};
\end{scope}

\end{tikzpicture}
\caption{An accepting-pseudo binary tree U: nodes in $U$ are marked by symbol $\textcolor{Blue}{\bullet}$ and all blue branches are accepting.}\label{sfigure:PBT}
\end{center}
\end{figure}

Let $\A=\langle A, Q , q_{ini},\Delta,\col \rangle$ be a parity tree automaton and let $\run$ be a run of $\A$ on some tree $t$.  

An \defin{accepting-pseudo binary tree} in $\run$ (see Figure~\ref{sfigure:PBT} for an illustration) is a subset $U\subseteq \{0,1\}^*$ of nodes such that
\begin{enumerate}[(i)] 
\item for all $u\in U$ there are $v,w\in U$ such that $v=u0v'$ and $w=u1w'$
for some $v'$ and $w' \in \{0,1\}^{*}$;
%
\item for all $v,w\in U$ the largest common prefix $u$ of $v$ and $w$ belongs to $U$;
\item any branch $\pi$ that goes through infinitely many nodes in $U$ is accepting.
\end{enumerate}

\begin{figure}[htb]
\begin{center}
\begin{tikzpicture}

\begin{scope}[scale = 0.6,]
\tikzset{>=stealth}

\fill[bottom color=LimeGreen!100!black!20,
top color=LimeGreen!100!black!70] (0,0) -- (-3,-6) -- (3,-6) -- cycle ;
\draw[LimeGreen,thick] (0,0) -- (-3,-6);
\draw[LimeGreen,thick] (0,0) -- (3,-6);
\draw[Grey,dotted, thick] (0,0) -- (-0.3,-1)-- (0,-2);
\draw[Grey,dashed, thick] (0,-2)-- (-2.5,-6);
\draw[Grey,dashed, thick] (0,-2)-- (1.5,-4)-- (1.2,-5) -- (1.8,-6);
\draw[Grey,dashed, thick] (1.5,-4) -- (2.5,-6);
\draw[Grey,dashed, thick] (1.2,-5) -- (.6,-6);
\draw[Grey,dashed, thick] (-.625,-3) -- (0,-4) -- (-.4,-5) -- (-.1,-6);
\draw[Grey,dashed, thick] (-.4,-5) -- (-.7,-6);
\draw[Grey,dashed, thick] (-1.875,-5) -- (-1.25,-6);
\node at (0,-2) {\textcolor{Blue}{$\bullet$}};
\node at (1.5,-4) {\textcolor{Blue}{$\bullet$}};
\node at (1.2,-5) {\textcolor{Blue}{$\bullet$}};
\node at (-.625,-3) {\textcolor{Blue}{$\bullet$}};
\node at (-.4,-5) {\textcolor{Blue}{$\bullet$}};
\node at (-1.85,-5) {\textcolor{Blue}{$\bullet$}};
\node(min) at (-3,-2) {$\min = k$};
\draw[->] (min) to [out=-90,in=135] (-0.4,-2.5);
\draw[->] (min) to [out=90,in=35] (0.8,-3);
\draw[->] (min) to [out=-110,in=155] (-1.35,-4);
\draw[->] (min) to [out=-100,in=150] (-0.1,-4);
\draw[->] (min) .. controls (-4,2) and  (6,-2) .. (1.35,-4.7);
\node (shift) at (-6,0) {};
\end{scope}

\end{tikzpicture}
\caption{A $k$-pseudo binary tree $U$: nodes in $U$ are marked by symbol $\textcolor{Blue}{\bullet}$.}\label{sfigure:kPBT}
\end{center}
\end{figure}
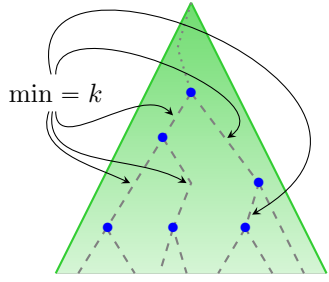

We now give a stronger notion than accepting-pseudo binary tree. For this, let $k$ be some even colour. A \defin{$k$-pseudo binary tree} in $\run$ (see Figure~\ref{sfigure:kPBT} for an illustration) is a subset $U\subseteq \{0,1\}^*$ of nodes such that
\begin{enumerate}[(i)] 
\item for all $u\in U$ there are $v,w\in U$ such that $v=u0v'$ and $w=u1w'$
for some $v'$ and $w' \in \{0,1\}^{*}$;
\item for all $v,w\in U$ the largest common prefix $u$ of $v$ and $w$ belongs to $U$;
\item for all $u,v\in U$ such that $u\prefixstrict v$, one has $\min\{\col(\run(w))\mid u\prefix w\prefix v\} = k$.
\end{enumerate}

The following lemma characterises runs that contains an uncountable sets of accepting branches. Its proof is a direct consequence of \cite[Lemma~2]{BN95}

\begin{lemma}\label{lemma:uncountable}
Let $\run$ be a run. Then $\run$ contains uncountably many accepting branches if and only if it contains a $k$-pseudo binary tree for some even colour $k$.
\end{lemma}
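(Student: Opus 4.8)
The plan is to prove the two implications separately, the converse being the substantial one.

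\textbf{From a $k$-pseudo binary tree to uncountably many accepting branches.} Here I would argue directly. Suppose $\run$ contains a $k$-pseudo binary tree $U$ for some even colour $k$. Condition~(i) lets me read off an embedding of the full binary tree into $U$: starting from the $\prefix$-least node of $U$ (which lies in $U$ by condition~(ii)), each already-selected node $u\in U$ admits, by~(i), descendants $u0v',u1w'\in U$, so to every $x=x_0x_1\cdots\in\{0,1\}^\omega$ I associate an increasing sequence of nodes of $U$ whose limit is a branch $\pi_x$ having infinitely many prefixes in $U$. Distinct $x$ yield branches that eventually diverge, so $x\mapsto\pi_x$ is injective and the set of such branches is uncountable. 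It remains to see that each $\pi_x$ is accepting: writing $u_0\prefixstrict u_1\prefixstrict\cdots$ for its prefixes in $U$, applying condition~(iii) to the pair $u_0\prefixstrict u_i$ shows that no colour smaller than $k$ occurs strictly below $u_0$ on $\pi_x$, while applying it to each pair $u_i\prefixstrict u_{i+1}$ shows that colour $k$ occurs on every segment, hence infinitely often. Thus the smallest colour occurring infinitely often along $\pi_x$ is $k$, which is even, so $\pi_x$ is accepting.

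\textbf{From uncountably many accepting branches to a $k$-pseudo binary tree.} For the converse I would first reduce to a single colour. Since $\Acc$ is a parity condition, for each colour the events ``it occurs infinitely often'' and ``it occurs finitely often'' are Borel, so for every even $k$ the set $B_k$ of accepting branches whose smallest infinitely-occurring colour equals $k$ is Borel. As there are finitely many colours and uncountably many accepting branches, some $B_k$ with $k$ even is uncountable. Next I would stabilise the low colours: to every $\pi\in B_k$ I associate the $\prefix$-least node $u_\pi$ on $\pi$ with $\col(\run(w))\ge k$ for all $w$ satisfying $u_\pi\prefix w\prefix\pi$, which is well defined because colours below $k$ occur only finitely often on $\pi$. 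Since $u_\pi$ ranges over the countable set $\{0,1\}^*$, some value $u^\star$ is attained by an uncountable (and still Borel) subset $S\subseteq B_k$; every branch of $S$ passes through $u^\star$ and carries only colours $\ge k$ at and below $u^\star$.

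\textbf{Extraction.} By the perfect set property for Borel sets, $S$ contains a non-empty perfect set $P$, whose associated pruned subtree supplies a set of branching nodes that is closed under largest common prefixes and in which each node has two incomparable descendants, all lying at or below $u^\star$; adjoining $u^\star$ yields conditions~(i) and~(ii). To force condition~(iii) I would thin this binary scheme so that between $u^\star$ and the first retained node, and between any two consecutive retained nodes, a node of colour $k$ is passed; this is possible because along every branch of $P\subseteq B_k$ colour $k$ occurs infinitely often while branching nodes occur cofinally, so the next retained branching node can always be pushed past a fresh occurrence of $k$. In the resulting set $U$ every node lies at or below $u^\star$, so for any $u,v\in U$ with $u\prefixstrict v$ all colours on the segment $u\prefix w\prefix v$ are $\ge k$, and by the thinning that segment contains a node of colour $k$; hence the minimum is exactly $k$ and $U$ is a $k$-pseudo binary tree.

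The delicate points are the descriptive-set-theoretic extraction of the perfect set $P$ and, above all, the simultaneous control of a \emph{single} colour $k$ on all segments of the retained scheme; this combinatorial core is precisely the content of \cite[Lemma~2]{BN95}, on which the statement may alternatively be based directly.
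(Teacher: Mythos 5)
Your proof is essentially correct, but it is worth noting that the paper does not prove this lemma at all: it is dispatched in one line as ``a direct consequence of \cite[Lemma~2]{BN95}''. What you have produced is therefore a genuine addition, namely a self-contained reconstruction of the Beauquier--Niwi\'nski argument. Your easy direction (embedding the full binary tree into $U$ and reading off from condition~(iii) that along every limit branch the least colour seen infinitely often is exactly $k$) is exactly what one would expect and is correct, including the small unstated point that $U$ has a $\prefix$-least element (for fixed $v\in U$ the meets $v\wedge w$, $w\in U$, are finitely many prefixes of $v$, so the least of them is the global minimum and lies in $U$ by~(ii)). Your hard direction is sound: pigeonhole over the finitely many even colours to get an uncountable Borel class $B_k$, pigeonhole over the countably many candidate stabilisation nodes to get $u^\star$, the perfect set property for Borel sets to extract $P$, and then the thinning that pushes each retained branching node past a fresh occurrence of colour $k$ (legitimate because colour $k$ recurs along every branch of $P\subseteq B_k$ while branching nodes of $P$ are cofinal along every branch by perfectness). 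What the citation buys the paper is brevity; what your route buys is a complete proof from standard descriptive set theory, at the modest cost of invoking the perfect set property, which is a ZFC fact for Borel sets.

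One blemish: ``adjoining $u^\star$ yields conditions~(i) and~(ii)'' is wrong in a corner case. If $P$ does not branch at $u^\star$ itself --- say every branch of $P$ passes through $u^\star 0$ --- then $u^\star\in U$ has no element of $U$ extending $u^\star 1$, and condition~(i) fails for $u^\star$. The fix is simply to omit $u^\star$ from $U$: the definition of a $k$-pseudo binary tree does not require any particular root to belong to $U$, the thinned set of branching nodes is already meet-closed (two retained nodes diverging below a common retained ancestor on different sides have that ancestor as their largest common prefix) and hence has a least element automatically, and since every branching node of $P$ extends $u^\star$, all segments between retained nodes carry only colours $\geq k$ while your thinning guarantees a colour-$k$ node on each consecutive segment, so condition~(iii) holds as you argued. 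With that one-line correction the proof is complete.
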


As any $k$-pseudo binary tree is an accepting-pseudo binary tree, we directly have the following Lemma from  Lemma~\ref{lemma:uncountable}.

\begin{lemma}\label{lemma:uncountable_weak}
Let $\run$ be a run. Then $\run$ contains uncountably many accepting branches if and only if it contains an accepting-pseudo binary tree.
\end{lemma}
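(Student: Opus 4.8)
The plan is to derive the equivalence from Lemma~\ref{lemma:uncountable} by inserting the single bridging observation that every $k$-pseudo binary tree (for an even colour $k$) is in particular an accepting-pseudo binary tree. Conditions (i) and (ii) of the two definitions coincide, so only condition (iii) of the $k$-pseudo binary tree needs to be shown to entail condition (iii) of the accepting-pseudo binary tree. First I would take a branch $\branch$ passing through infinitely many nodes of a $k$-pseudo binary tree $U$, listed as an increasing chain $u_0\prefixstrict u_1\prefixstrict\cdots$. Between two consecutive nodes $u_i\prefixstrict u_{i+1}$ the minimum colour is exactly $k$, so beyond $u_0$ every colour along $\branch$ is at least $k$ while $k$ itself is visited infinitely often; hence $\liminf$ of the colours along $\branch$ equals $k$, which is even, and $\branch$ is accepting.

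With this observation in hand, the ``only if'' direction is immediate: if $\run$ has uncountably many accepting branches then, by Lemma~\ref{lemma:uncountable}, it contains a $k$-pseudo binary tree for some even $k$, and that tree is an accepting-pseudo binary tree.

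For the ``if'' direction I would argue directly, independently of colours. Given an accepting-pseudo binary tree $U$, conditions (i) and (ii) let me embed the complete binary tree into $U$: for $u\in U$ define its left (\resp right) child as the shortest node of $U$ lying below $u0$ (\resp $u1$), which exists by (i) and is unique by (ii) — two distinct candidates of minimal length would have a strictly shorter common prefix that still lies in $U$ and still extends $u0$, contradicting minimality. Iterating these child maps from any fixed node of $U$, each $\alpha\in\Branches$ determines an increasing chain of nodes of $U$ whose limit is a branch $\branch_\alpha$ meeting $U$ infinitely often, hence accepting by (iii). The map $\alpha\mapsto\branch_\alpha$ is injective, since at the first coordinate where $\alpha$ and $\alpha'$ differ the two chains pass respectively below $r0$ and $r1$ for the common node $r\in U$ reached at that stage, and these extensions are incompatible. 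As $\Branches$ is uncountable, $\run$ has uncountably many accepting branches.

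The only genuinely substantive content is hidden in Lemma~\ref{lemma:uncountable}, which I assume; the remaining work is routine. The step most worth writing carefully is the $k$-pseudo-to-accepting implication, where one must check that ``minimum colour $k$ on every segment'' really forces $\liminf=k$ on the whole branch — the colours occurring on the finite prefix before $u_0$ being irrelevant to the $\liminf$. The well-definedness and uniqueness of the child maps used in the embedding is the only other point needing a short verification.
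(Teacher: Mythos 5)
Your proof is correct and follows essentially the same route as the paper, which likewise deduces the lemma from Lemma~\ref{lemma:uncountable} via the single observation that every $k$-pseudo binary tree is an accepting-pseudo binary tree. You merely make explicit two steps the paper's one-line proof leaves implicit: the $\liminf$ computation behind that observation, and the Cantor-style injection of $\{0,1\}^\omega$ into the accepting branches for the ``if'' direction.
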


\subsubsection{The Acceptance Game $\gameAccUnc{\A,t}$}

\begin{figure}[htb]
\begin{center}
\begin{tikzpicture}
\tikzset{>=stealth}
  \node[draw,circle,inner sep=0mm,minimum size=9mm] (q) at (-2.5,0) {\footnotesize$q,u$};  
  \node[draw,inner sep=1mm,minimum size=5mm,ellipse] (p) at (0,0) {\footnotesize$q,u,q_0,q_1$};  
  \node[draw,circle,inner sep=0mm,minimum size=9mm] (p0) at (2,0.8) {\footnotesize $q_{0},u0$};  
  \node[draw,circle,inner sep=0mm,minimum size=9mm] (p1) at (2,-0.8) {\footnotesize $q_{1},u1$};  
  \node[draw,inner sep=1mm,minimum size=5mm] (pE) at (4,0) {\footnotesize$q,u,q_0,q_1$};  

\draw[->] (q) --  (p);
\draw[->] (p) to [in=180,out=70] (p0);\draw[->] (p) to [in=180,out=-70] (p1);
\draw[->] (pE) to [in=0,out=110] (p0);\draw[->] (pE) to [in=0,out=-110] (p1);
\draw[->] (p) -- (pE);
\node (l) at (-1.5,-1) {\footnotesize for any $(q,t(u),q_{0},q_{1})\in\Delta$};  
\end{tikzpicture}
	\caption{Local structure of the arena of the acceptance game $\gameAccUnc{\A,t}$.\label{fig:acceptance-game:AccUnc}}
\end{center}
\end{figure}

Fix a tree $t$ and define an acceptance game for $\LAccUnc{\A}$. In this game (see Figure~\ref{fig:acceptance-game:AccUnc} for the local structure of the arena) 
 the two players move a pebble along a branch of $t$ in a top-down manner: to the pebble is attached a state, and the colour of the state gives the colour of the configuration. Hence, (\Eloise's main) configurations in the game are elements of $Q\times\{0,1\}^*$. In a node $u$ with state $q$ \Eloise picks a transition $(q,t(u),q_0,q_1) \in \Delta$, and then
\Eloise has two possible options. Either she chooses a direction $0$ or $1$ or she lets \Abelard choose a direction $0$ or $1$.
Once the direction $i\in\{0,1\}$ is chosen, the pebble is moved down to $u\cdot i$ and the state is updated to $q_i$. 
A play is won by \Eloise if and only if \begin{enumerate}[(1)] \item the parity condition is satisfied and \item \Eloise lets \Abelard infinitely often choose the direction during the play. \end{enumerate} Call this game $\gameAccUnc{\A,t}$. 

The next theorem states that it is an acceptance game for language $\LAccUnc{\A}$.

\begin{theorem}\label{thm:LAccUnc:acceptance}
\Eloise wins in $\gameAccUnc{\A,t}$ from $(\qini,\epsilon)$ if and only if $t\in \LAccUnc{\A}$.
\end{theorem}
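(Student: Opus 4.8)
The plan is to prove both implications by exhibiting explicit strategies for \Eloise, so that---unlike the games of the previous section---no appeal to Borel determinacy is needed: both directions of the equivalence are ``positive'' for \Eloise.

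For the implication ``$t\in\LAccUnc{\A}\Rightarrow$ \Eloise wins'', I would start from a run $\run$ of $\A$ on $t$ with uncountably many accepting branches and invoke Lemma~\ref{lemma:uncountable_weak} to extract an accepting-pseudo binary tree $U$ in $\run$. \Eloise's strategy is to play according to $\run$ (always selecting the transition $(q,t(u),\run(u0),\run(u1))$) while maintaining a ``current'' node of $U$. She first navigates from the root down to some fixed node $u_\star\in U$, choosing the directions herself; thereafter, at a current node $u\in U$ she lets \Abelard choose the immediate direction. Whatever direction $i$ he picks, property~(i) of the pseudo binary tree guarantees a node of $U$ below $u\cdot i$, so she navigates down to such a node, again choosing the directions herself, and updates the current node. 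Every play consistent with this strategy then passes through infinitely many nodes of $U$, so \Abelard is offered the choice infinitely often (condition (2)) and, by property~(iii), the resulting branch is accepting, i.e.\ the parity condition holds (condition (1)); hence \Eloise wins.

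For the converse I would fix a winning strategy $\phi$ and first observe that, since reaching a node $u$ fixes the whole history of a $\phi$-consistent play, $\phi$ determines a single run $\run$ of $\A$ on $t$ (with $\run(u)$ the state attached to the pebble at $u$). Let $U$ be the set of \emph{choice nodes}: those nodes reached by a $\phi$-consistent play at which \Eloise defers the direction to \Abelard. Because $\phi$ is winning, condition (2) forces every infinite $\phi$-consistent play to visit infinitely many choice nodes; consequently, from any $u\in U$ the unique \Eloise-forced continuation in direction $0$ (\resp $1$) must reach a first choice node $c_0(u)$ (\resp $c_1(u)$). The map $u\mapsto(c_0(u),c_1(u))$ thus embeds the full infinite binary tree into $U$, and for each $\alpha\in\{0,1\}^\omega$ the play $\pi_\alpha$ that follows these successive branchings (with \Abelard resolving the direction at each choice node according to $\alpha$ and \Eloise's forced moves in between) is $\phi$-consistent, hence won by \Eloise, hence satisfies the parity condition (condition (1)). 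Read as a branch of $\run$, each $\pi_\alpha$ is therefore accepting, and $\alpha\mapsto\pi_\alpha$ is injective since distinct sequences diverge at some choice node; this yields uncountably many accepting branches in $\run$, so $t\in\LAccUnc{\A}$.

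The main obstacle is the bookkeeping in the converse direction. Three points require care: that $\phi$ genuinely induces a well-defined run (the history leading to each node being uniquely determined); that the \Eloise-forced segments between consecutive choice nodes are finite, which is exactly where condition (2) is used to rule out an infinite forced path carrying no choice node; and, most delicately, the claim that the branches visiting infinitely many nodes of $U$ are \emph{precisely} the plays $\pi_\alpha$, so that the parity guarantee coming from $\phi$ being winning applies to all of them. Establishing this last point amounts to checking that $U$ satisfies properties~(i)--(iii) of an accepting-pseudo binary tree, after which one could alternatively conclude via Lemma~\ref{lemma:uncountable_weak} rather than by the direct construction above.
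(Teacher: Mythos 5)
Your proof is correct and follows essentially the same route as the paper's. The forward implication (from $t\in\LAccUnc{\A}$ to \Eloise winning) is the paper's argument almost verbatim: Lemma~\ref{lemma:uncountable_weak} yields an accepting-pseudo binary tree $U$, and the strategy plays the transitions of the run, steers the pebble inside the set of prefixes of $U$, and defers to \Abelard exactly at nodes of $U$. For the converse, the paper builds from $\phi$ the set $U$ of nodes where \Eloise defers, verifies that $U$ is an accepting-pseudo binary tree, and concludes by the easy direction of Lemma~\ref{lemma:uncountable_weak}; you instead inline that easy direction as the Cantor-scheme embedding $\alpha\mapsto\pi_\alpha$ through the choice nodes $c_0(u),c_1(u)$ --- the same construction with the counting made explicit, and your use of the winning condition (2) to make the forced segments between choice nodes finite is exactly the paper's argument for why every branch of the reachable part meets $U$ infinitely often. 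As in the paper, neither direction needs determinacy. One point needs a one-line patch: contrary to your opening observation, $\phi$ does \emph{not} by itself determine a run of $\A$ on $t$, since the labels are fixed only on the prefix-closed set of nodes reachable in $\phi$-consistent plays (when \Eloise forces direction $i$ at $u$, the subtree under $u\cdot(1-i)$ is never visited). As the paper does explicitly, you must extend this partial labelling to a total run, which is possible because the automata are assumed complete; all the branches $\pi_\alpha$ stay in the reachable part, so the rest of your argument is unaffected.
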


\begin{proof}
In the following proof for a set $X\subseteq \{0,1\}^*$ we denote by $Pref(X)$ the set of prefixes of elements in $X$, \ie $\mathrm{Pref}(X)=\{u\mid \exists v\in X\text{ s.t. }u\prefix v\}$.

Assume that \Eloise has a winning strategy $\phi$ in $\gameAccUnc{\A,t}$ from $(\qini,\epsilon)$.
With $\phi$ we associate a run $\run$ of $\A$ on $t$ and an accepting-pseudo binary tree $U$ as follows. We inductively define $U$ and $\mathrm{Pref}(U)$ and associate with any node $u\in \mathrm{Pref}(U)$ a partial play $\play_u$ where \Eloise respects $\phi$. For this we let $\epsilon\in \mathrm{Pref}(U)$ and we set $\play_\epsilon = (\qini,\epsilon)$. 

Now assume that we have defined $\play_u$ for some node $u\in \mathrm{Pref}(U)$. Then let $(q,t(u),q_0,q_1)$ be the transition \Eloise plays from $\play_u$ when she respects $\phi$. Then we have two possible situations depending whether, right after playing $(q,t(u),q_0,q_1)$ and still respecting $\phi$, \Eloise chooses the direction or let \Abelard make that choice. If she chooses the direction, let $i$ be this direction: then one lets $ui\in \mathrm{Pref}(U)$ and defines $\play_{u\cdot i}$ as the partial play obtained by extending $\play_u$ by \Eloise choosing transition $(q,t(u),q_0,q_1)$, followed by \Eloise choosing direction $i$. If she lets \Abelard choose the direction, one lets $u$ belongs to $U$ and lets both $u0$ and $u1$ belongs to $\mathrm{Pref}(U)$ and defines $\play_{u\cdot i}$ for $i\in\{0,1\}$ as the partial play obtained by extending $\play_u$ by \Eloise choosing transition $(q,t(u),q_0,q_1)$, followed by \Eloise letting \Abelard choose the direction and \Abelard picking direction $i$. Note that for any $ui\in \mathrm{Pref}(U)$, $\play_{u\cdot i}$ ends with the pebble on $u\cdot i$ with state $q_i$ attached to it, equivalently in configuration $(q_i,ui)$.

The run $\run$ is defined by letting, for any $u\in \mathrm{Pref}(U)$, $\run(u)$ be the state attached to the pebble in the last configuration of $\play_u$. For those $u\notin \mathrm{Pref}(U)$ we define $\run(u)$ so that the resulting run is valid, which is always possible as we only consider complete automata. By construction, $\run$ is a run of $\A$ on $t$. Moreover, with any branch $\pi$ consisting only of nodes in $\mathrm{Pref}(U)$, one can associate a play $\play_\pi$ in $\gameAccUnc{\A,t}$ from $(\qini,\epsilon)$ where \Eloise respects $\phi$ (one simply considers the limit of the \emph{increasing} sequence of partial plays $\play_u$ where $u$ ranges over nodes along branch $\pi$). As $\play_\pi$ is winning it follows easily that $U$ is a pseudo binary tree (indeed, condition $(i)$ and $(ii)$ from the definition of an accepting-pseudo binary tree are immediate, while condition $(iii)$ follows from the fact that $\play_\pi$ is winning). Hence, from Lemma~\ref{lemma:uncountable_weak} we conclude that $\run$ contains uncountably many accepting branches, meaning that $t\in \LAccUnc{\A}$.

Conversely, assume that there is a run $\run$ of $\A$ on $t$ that contains uncountably many accepting branches. By Lemma~\ref{lemma:uncountable_weak}, it follows that $\run$ contains an accepting-pseudo binary tree $U$.

From $\run$ and $U$ we define a strategy $\phi$ of \Eloise in $\gameAccUnc{\A,t}$ from $(\qini,\epsilon)$ as follows. Strategy $\phi$ uses as a memory a node $v\in \mathrm{Pref}(U)$, and initially $v=\epsilon$. Now assume that the pebble is on some node $v$ with state $q$ attached to it (one will inductively check that $v\in \mathrm{Pref}(U)$ and that $\run(v)=q$). Then we have two possibilities.
\begin{itemize}
\item Assume $v\in U$. Both $v0$ and $v1$ belong to $\mathrm{Pref}(U)$: strategy $\phi$ indicates that \Eloise chooses transition $(q,t(v),\run(v0),\run(v1))$ and let \Abelard choose the direction, say $i$. Then the memory is updated to $v\cdot i$.
\item Assume $v\notin U$. Hence, $v\cdot i$ belong to $\mathrm{Pref}(U)$ for only one $i\in\{0,1\}$: strategy $\phi$ indicates that \Eloise chooses transition $(q,t(v),\run(v0),\run(v1))$ and chooses direction $i$. Then the memory is updated to $v\cdot i$.
\end{itemize}

Now consider a play $\play$ where \Eloise respects her strategy $\phi$. It is easily seen that with $\play$ one associates a branch $\pi$ in the run $\run$ and that this branch goes only through nodes in $\mathrm{Pref}(U)$. From this observation and from the definition of an accepting-pseudo binary tree, we conclude that $\play$ is winning for \Eloise (it satisfies the parity condition as $\pi$ does and in $\play$ \Eloise lets \Abelard choose the direction infinitely often, namely whenever her memory $v$ belongs to $V$). Hence, we conclude that strategy $\phi$ is winning from $(\qini,\epsilon)$.
\end{proof}

\subsubsection{The Acceptance Game $\gameAccUnck{\A,t}$}

One can modify $\gameAccUnc{\A,t}$ to obtain an \emph{equivalent} game that has the form of a classical acceptance game. From this follows the fact that the languages of the form $\LAccUnc{\A}$ are indeed $\omega$-regular. 
Nevertheless, using a more involved game than $\gameAccUnc{\A,t}$ one can obtain a stronger result where the acceptance condition is lowered to a Büchi condition. We now describe this game.

\begin{figure}[htb]
\begin{center}
\begin{tikzpicture}
\tikzset{>=stealth}

\begin{scope}
  \node[draw,circle,inner sep=0mm,minimum size=9mm] (CMp1) at (7,-2) {\footnotesize $q_{1},u1$};  
  \node[draw,circle,inner sep=0mm,minimum size=9mm] (CMp0) at (7,2) {\footnotesize $q_{0},u0$};  
  \node[draw,circle,inner sep=0mm,minimum size=9mm] (CMq) at (4.5,0) {\footnotesize$q,u$};  
  \node[draw,inner sep=1mm,minimum size=5mm,ellipse] (CMp) at (7,0) {\footnotesize$q,u,q_0,q_1,0$};  
\draw[->] (CMp) --  (CMp0);\draw[->] (CMp) --  (CMp1);
\draw[->] (CMq) --  (CMp);
\node (check) at (6,3) {\footnotesize \emph{\underline{wait mode}}};  
\node (l) at (6,-3) {\footnotesize for any $(q,t(u),q_{0},q_{1})\in\Delta$};  
\node (l) at (6,-3.5) {\footnotesize for any even colour $k$};  
\draw[dotted] (8.5,3.3) -- (8.5,-3.8);
\end{scope}

\begin{scope}[xshift = 12cm]
  \node[draw,circle,inner sep=0mm,minimum size=9mm] (q) at (-2.5,0) {\footnotesize$q,u$};  
  \node[draw,inner sep=1mm,minimum size=5mm,ellipse] (p) at (0,0) {\footnotesize$q,u,q_0,q_1$};  
  \node[draw,circle,inner sep=0mm,minimum size=9mm] (p0) at (2,0.8) {\footnotesize $q_{0},u0$};  
  \node[draw,circle,inner sep=0mm,minimum size=9mm] (p1) at (2,-0.8) {\footnotesize $q_{1},u1$};  
  \node[draw,inner sep=1mm,minimum size=5mm] (pE) at (4,0) {\footnotesize$q,u,q_0,q_1$};  

\draw[->] (q) --  (p);
\draw[->] (p) to [in=180,out=70] (p0);\draw[->] (p) to [in=180,out=-70] (p1);
\draw[->] (pE) to [in=0,out=110] (p0);\draw[->] (pE) to [in=0,out=-110] (p1);
\draw[->] (p) -- (pE);
\node (l) at (0,-3) {\footnotesize for any $(q,t(u),q_{0},q_{1})\in\Delta$};  
\node (check) at (0,3) {\footnotesize \emph{\underline{$check^k$ mode}}};  
\end{scope}

\draw[->] (CMp) to [out=20,in=160] (p0);
\draw[->] (CMp) to [out=-20,in=200] (p1);

\end{tikzpicture}
	\caption{Local structure of the arena of the acceptance game $\gameAccUnck{\A,t}$.\label{fig:acceptance-game:AccUnck}}
\end{center}
\end{figure}

Fix a tree $t$ and define an acceptance game for $\LAccUnc{\A}$. There are two modes in the game (See Figure~\ref{fig:acceptance-game:AccUnck} for the local structure of the arena): \emph{wait} mode and \emph{check} mode and the game starts in \emph{wait} mode. Moreover the \emph{check} mode is parametrised by a colour $k$. Again, the two players move a pebble along a branch of $t$ in a top-down manner. Hence, (main) configurations in the game are elements of $Q\times\{0,1\}^*\times \{wait,check^0,\dots,check^{2\ell}\}$ where $\{0,\dots,2\ell\}$ are the even colours used by $\A$. In \emph{wait} mode \Eloise plays alone: in a node $u$ with state $q$ she picks a transition $(q,t(u),q_0,q_1) \in \Delta$, and she chooses a direction $i\in \{0,1\}$; then the pebble is moved down to $u\cdot i$ and the state is updated to $q_i$. When moving the pebble down she can decide to switch the mode to some $check^k$ (for any \emph{even} colour $k$). Once entered $check^k$ mode the play stays in that mode forever and goes as follows.
In a node $u$ with state $q$ \Eloise picks a transition $(q,t(u),q_0,q_1) \in \Delta$, and then
she has two possible options. Either she chooses a direction $0$ or $1$ or she lets \Abelard choose a direction $0$ or $1$.
Once the direction $i\in\{0,1\}$ is chosen, the pebble is moved down to $u\cdot i$ and the state is updated to $q_i$. 
A play is won by \Eloise if and only if \begin{enumerate}[(1)] \item it eventually enters some $check^k$ mode and \item it goes infinitely often through configurations in $\{(q,u,check^k)\mid \col(q)=k\}$, \item it never visits a configuration in $\{(q,u,check^k)\mid \col(q)<k\}$, \item \Eloise lets \Abelard infinitely often choose the direction during the play, and between two such situations the smallest colour visited is always $k$.
\end{enumerate} Call this game $\gameAccUnck{\A,t}$. 

The next theorem states that it is an acceptance game for language $\LAccUnc{\A}$. Note that its proof is a refinement of the one of Theorem~\ref{thm:LAccUnc:acceptance}.

\begin{theorem}\label{thm:LAccUnck:acceptance}
One has $t\in \LAccUnc{\A}$ if and only if \Eloise wins in $\gameAccUnck{\A,t}$ from $(\qini,\epsilon,wait)$.
\end{theorem}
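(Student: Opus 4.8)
The plan is to mirror the structure of the proof of Theorem~\ref{thm:LAccUnc:acceptance}, but now exploiting the finer bookkeeping provided by the $check^k$ mode to match the stronger characterisation of Lemma~\ref{lemma:uncountable} in terms of a \emph{$k$-pseudo binary tree} (rather than merely an accepting-pseudo binary tree). First I would prove the forward direction: assume \Eloise has a winning strategy $\phi$ in $\gameAccUnck{\A,t}$ from $(\qini,\epsilon,wait)$. Following the winning play, there is some point at which \Eloise commits to a mode $check^k$; I would use exactly the inductive unfolding from Theorem~\ref{thm:LAccUnc:acceptance} to build a run $\run$ together with a set $U$ of nodes, where $U$ collects precisely those nodes at which \Eloise lets \Abelard choose the direction (inside the $check^k$ phase). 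Conditions (i) and (ii) of a $k$-pseudo binary tree are immediate from the branching structure of $U$ (each such node spawns two children both of which again eventually reach a point where \Eloise defers to \Abelard, since $\phi$ is winning and hence forces infinitely many such deferrals along every play consistent with $\phi$). The crucial point is condition (iii): between two consecutive nodes of $U$ on a branch, the winning conditions (2) and (3) guarantee that colour $k$ is seen (final configurations with $\col(q)=k$ occur infinitely often) while no smaller colour is ever seen, and condition (4) states directly that the smallest colour between two deferrals is exactly $k$; this is precisely $\min\{\col(\run(w))\mid u\prefix w\prefix v\}=k$. Lemma~\ref{lemma:uncountable} then yields uncountably many accepting branches, so $t\in\LAccUnc{\A}$.

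For the converse, I would assume $t\in\LAccUnc{\A}$, so by Lemma~\ref{lemma:uncountable} some run $\run$ contains a $k$-pseudo binary tree $U$ for an even colour $k$. From $\run$, $U$ and $k$ I would build a strategy $\phi$ for \Eloise. She first plays in \emph{wait} mode, descending through $\mathrm{Pref}(U)$ until reaching the root of $U$ (the least node of $U$), at which point she switches to $check^k$ mode. Thereafter she maintains as memory a node $v\in\mathrm{Pref}(U)$ exactly as in the proof of Theorem~\ref{thm:LAccUnc:acceptance}: when $v\in U$ she defers the direction to \Abelard (satisfying condition (4)), and when $v\notin U$ she herself follows the unique child in $\mathrm{Pref}(U)$. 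Any play consistent with $\phi$ traces a branch through $\mathrm{Pref}(U)$, and condition (iii) of the $k$-pseudo binary tree guarantees that between two consecutive $U$-nodes the minimal colour is $k$, so that colour $k$ is seen infinitely often (condition (2)), no colour below $k$ is ever seen (condition (3)), and the smallest colour between deferrals is $k$ (condition (4)); the deferrals themselves occur infinitely often (condition (1) and the deferral clause of (4)) because every node of $U$ has descendants in $U$ on both sides. Hence $\phi$ is winning.

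The main obstacle I anticipate is the careful alignment of the $check^k$ colour book\-keeping with condition~(iii) of the $k$-pseudo binary tree, namely showing that the ``smallest colour between two deferrals equals $k$'' clause in the game is \emph{exactly} equivalent to the minimal-colour-equals-$k$ condition between consecutive $U$-nodes, and in particular handling the initial segment before the first $U$-node is reached and the choice of $k$ itself. In the forward direction one must also argue that the even colour $k$ chosen by \Eloise in the winning strategy is well-defined and uniform along the constructed $U$ (it is, since the mode $check^k$ is fixed once entered), and that the deferral nodes genuinely branch in both directions rather than degenerating into a single path (this follows because condition (4) forces infinitely many deferrals on \emph{every} consistent play, so each subtree rooted at a deferral node again contains a deferral on both the $0$- and $1$-successor side). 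Once these equivalences are pinned down, the rest is a routine transcription of the argument for Theorem~\ref{thm:LAccUnc:acceptance}, and Borel determinacy is not even required here since both directions are constructive.
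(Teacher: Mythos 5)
Your proposal is correct and takes essentially the same route as the paper's own proof: the forward direction builds the run and the set $U$ of deferral nodes from a winning strategy exactly as the paper does (with conditions (i)--(ii) of a $k$-pseudo binary tree immediate and (iii) extracted from clauses (2)--(4) of the winning condition), and the converse constructs \Eloise's strategy from a $k$-pseudo binary tree given by Lemma~\ref{lemma:uncountable}, with no appeal to determinacy, again just as in the paper. The only (harmless) deviation is the switch point in the converse: you enter $check^k$ mode at the minimal node $u_0$ of $U$, whereas the paper switches slightly earlier at a node $r$ determined by the finitely many nodes of $\mathrm{Pref}(U)$ carrying a colour smaller than $k$ --- both choices ensure no colour below $k$ is seen in $check^k$ mode, and yours in fact sidesteps the paper's slightly delicate bookkeeping around the set $X$ and its extremal element.
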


\begin{proof}
In the following proof for a set $X\subseteq \{0,1\}^*$ we denote by $Pref(X)$ the set of prefixes of elements in $X$, \ie $\mathrm{Pref}(X)=\{u\mid \exists v\in X\text{ s.t. }u\prefix v\}$.

Assume that \Eloise has a winning strategy $\phi$ in $\gameAccUnck{\A,t}$ from $(\qini,\epsilon,wait)$.
With $\phi$ we associate a run $\run$ of $\A$ on $t$ and a $k$-pseudo binary tree $U$ (for some $k$ to be defined later) as follows. We inductively define $U$ and $\mathrm{Pref}(U)$ and associate with any node $u\in \mathrm{Pref}(U)$ a partial play $\play_u$ where \Eloise respects $\phi$. For this we let $\epsilon\in \mathrm{Pref}(U)$ and we let $\play_\epsilon = (\qini,\epsilon,wait)$. 

Now assume that we have defined $\play_u$ for some node $u\in \mathrm{Pref}(U)$ and that the mode in $\play_u$ is always $wait$. Then let $(q,t(u),q_0,q_1)$ be the transition and let $i$ be the direction \Eloise plays from $\play_u$ when she respects $\phi$. If she does not change the mode, then one lets $ui\in \mathrm{Pref}(U)$ and defines $\play_{u\cdot i}$ as the partial play obtained by extending $\play_u$ by \Eloise choosing transition $(q,t(u),q_0,q_1)$, followed by \Eloise choosing direction $i$ and keeping the mode to $wait$. If she changes the modes to $check^k$, then one lets $ui\in \mathrm{Pref}(U)$ and defines $\play_{u\cdot i}$ as the partial play obtained by extending $\play_u$ by \Eloise choosing transition $(q,t(u),q_0,q_1)$, followed by \Eloise choosing direction $i$ and changing the mode to $check^k$ (this $k$ is the one such that $U$ is a $k$-pseudo binary tree $U$). 

Now assume that we {have} defined $\play_u$ for some node $u\in \mathrm{Pref}(U)$ and that the mode in $\play_u$ has been switched from $wait$ to $check^k$. Then let $(q,t(u),q_0,q_1)$ be the transition \Eloise plays from $\play_u$ when she respects $\phi$. Then we have two possible situations depending whether, right after playing $(q,t(u),q_0,q_1)$ and still respecting $\phi$, \Eloise chooses the direction or lets \Abelard make that choice. If she chooses the direction, let $i$ be this direction: then one lets $ui\in \mathrm{Pref}(U)$ and defines $\play_{u\cdot i}$ as the partial play obtained by extending $\play_u$ by \Eloise choosing transition $(q,t(u),q_0,q_1)$, followed by \Eloise choosing direction $i$. If she lets \Abelard choose the direction, one lets $u$ belongs to $U$ and lets both $u0$ and $u1$ belongs to $\mathrm{Pref}(U)$ and defines $\play_{u\cdot i}$ for $i\in\{0,1\}$ as the partial play obtained by extending $\play_u$ by \Eloise choosing transition $(q,t(u),q_0,q_1)$, followed by \Eloise letting \Abelard choose the direction and \Abelard picking direction $i$. Note that for any $ui\in \mathrm{Pref}(U)$, $\play_{u\cdot i}$ ends with the pebble on $u\cdot i$ with state $q_i$ attached to it, equivalently in configuration $(q_i,ui)$.

The run $\run$ is defined by letting, for any $u\in \mathrm{Pref}(U)$, $\run(u)$ be the state attached to the pebble in the last configuration of $\play_u$. For those $u\notin \mathrm{Pref}(U)$ we define $\run(u)$ so that the resulting run is valid, which is always possible as we only consider complete automata. By construction, $\run$ is a run of $\A$ on $t$. Moreover with any branch $\pi$ consisting only of nodes in $\mathrm{Pref}(U)$ one can associate a play $\play_\pi$ in $\gameAccUnck{\A,t}$ from $(\qini,\epsilon,wait)$ where \Eloise respects $\phi$ (one simply considers the limit of the \emph{increasing} sequence of partial plays $\play_u$ where $u$ ranges over nodes along branch $\pi$). As $\play_\pi$ is winning it follows easily that $U$ is a $k$-pseudo binary tree (indeed, condition $(i)$ and $(ii)$ from the definition of a $k$-pseudo binary tree are immediate, while condition $(iii)$ follows from the definition of the winning condition and of the fact that $\play_\pi$ is winning). Moreover $\pi$ is accepting as the smallest colour infinitely often visited is $k$. As there are uncountably many branches $\pi$ consisting only of nodes in $\mathrm{Pref}(U)$ we conclude that $\run$ contains uncountably many accepting branches, meaning that $t\in \LAccUnc{\A}$.

Conversely, assume that there is a run $\run$ of $\A$ on $t$ that contains uncountably many accepting branches. By Lemma~\ref{lemma:uncountable}, it follows that $\run$ contains a $k$-pseudo binary tree $U$. Let $X = \{x\in \mathrm{Pref}(U)\mid \run(x)>k\}$: then by definition of a $k$-pseudo binary tree we conclude that $X$ is finite and has a minimal element for the prefix relation (with the convention that if $X$ is empty this minimum is set to be the root $\epsilon$); call $r$ this minimum. Note that there is also a minimum element $u_0$ in $U$ (for the prefix relation) and that $r\prefixstrict u_0$.

From $\run$ and $U$ we define a strategy $\phi$ of \Eloise in $\gameAccUnck{\A,t}$ from $(\qini,\epsilon,wait)$ as follows. Strategy $\phi$ uses as a memory a node $v\in \mathrm{Pref}(U)$ and initially $v=\epsilon$; moreover as long as $v\prefix r$ the play will be in $wait$ mode. Now assume that the pebble is on some node $v$ with state $q$ attached to it (one will inductively check that $v\in \mathrm{Pref}(U)$ and that $\run(v)=q$). Then we have several possibilities.
\begin{itemize}
\item The mode is $wait$ (\ie $v\prefix r\prefixstrict u_0$): strategy $\phi$ indicates that \Eloise chooses transition $(q,t(v),\run(v0),\run(v1))$, goes to direction $i$ where $i$ is such that $vi\prefix u_0$, and stay in mode $wait$ except if $v=r$ where the mode is switched to $check^k$.
\item The mode is $check^k$ and $v\in U$. Both $v0$ and $v1$ belong to $\mathrm{Pref}(U)$: strategy $\phi$ indicates that \Eloise chooses transition $(q,t(v),\run(v0),\run(v1))$ and let \Abelard choose the direction, say $i$. Then the memory is updated to $v\cdot i$.
\item The mode is $check^k$ and $v\notin U$. Hence, $v\cdot i$ belong to $\mathrm{Pref}(U)$ for only one $i\in\{0,1\}$: strategy $\phi$ indicates that \Eloise chooses transition $(q,t(v),\run(v0),\run(v1))$ and chooses direction $i$. Then the memory is updated to $v\cdot i$.
\end{itemize}

Now consider a play $\play$ where \Eloise respects her strategy $\phi$. It is easily seen that with $\play$ one associates a branch $\pi$ in the run $\run$ and that this branch goes only through nodes in $\mathrm{Pref}(U)$. From this observation and from the definition of a $k$-pseudo binary tree, we conclude that $\play$ is winning for \Eloise hence, that strategy $\phi$ is winning from $(\qini,\epsilon,wait)$.
\end{proof}

\subsubsection{Languages of the Form $\LAccUnc{\A}$ Are Büchi Regular}
Thanks to Theorem~\ref{thm:LAccUnck:acceptance} we can easily prove that any language of the form $\LAccUnc{\A}$ can be accepted by a \emph{Büchi} automaton.

\begin{theorem}\label{thm:LAccUnc:reg}
Let $\A=\langle A, Q , q_{ini},\Delta,\col \rangle$ be a parity tree automaton using $d$ colours. Then there exists a \emph{Büchi} tree automaton $\A'=\langle A, Q' , q'_{ini},\Delta',\col' \rangle$ such that $\LAccUnc{\A}=L(\A')$. Moreover $|Q'|=\mathcal{O}(d|Q|)$.
\end{theorem}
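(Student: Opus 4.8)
**

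The plan is to convert the acceptance game $\gameAccUnck{\A,t}$ of Theorem~\ref{thm:LAccUnck:acceptance} into a classical acceptance game in the sense of Section~\ref{ssection:TARTLAG}, and then read off the Büchi tree automaton $\A'$ directly from its arena. The crucial observation is that the local structure of $\gameAccUnck{\A,t}$ (Figure~\ref{fig:acceptance-game:AccUnck}) is already \emph{positional in the tree}: the only information that the players exchange about the tree is the current node $u$ and the chosen transition, and all the mode bookkeeping depends only on the control state. Thus the product state space $Q' = Q \times \{wait, check^0,\dots,check^{2\ell}\}$ (together with the small amount of extra information needed below) is finite of size $\mathcal{O}(d|Q|)$, which will give the announced bound.

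The first step is to encode into the control state the two binary choices that \Eloise makes at each node: which transition she plays, and — in $check^k$ mode — whether she resolves the direction herself or delegates it to \Abelard. Exactly as in the proof of Theorem~\ref{thm:LAccInf:reg}, I would duplicate states into a plain and a starred copy so that \Eloise can announce a preferred direction inside the control state, with the starred copy marking the delegated moves. In $wait$ mode \Eloise plays alone and simply chooses a direction and possibly commits to some even colour $k$ by switching to $check^k$; in $check^k$ mode, when she delegates, \Abelard genuinely chooses, and the resulting vertex is an \Abelard-owned square exactly as in Figure~\ref{fig:acceptance-game:AccUnc}. After this encoding the arena has precisely the shape of the classical acceptance arena of Figure~\ref{fig:acceptance-game}, so it \emph{is} the acceptance game $\game_{\A',t}$ for the tree automaton $\A'$ whose transitions mirror these moves.

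The second step is to realise the winning condition of $\gameAccUnck{\A,t}$ as a \emph{Büchi} condition on $Q'$, which is where the main work lies. The condition has four clauses: the play eventually enters some $check^k$ mode; it visits colour-$k$ states infinitely often; it never visits a state of colour $<k$ while in $check^k$; and \Eloise delegates infinitely often with smallest colour $k$ between consecutive delegations. Once $k$ is fixed in the control state, clause (3) is a safety constraint that I enforce by simply \emph{deleting} from $\Delta'$ every transition that would enter a state of colour $<k$ in $check^k$ mode (a forbidden move is never available, so no run can violate it). Clause (1) is satisfied by declaring that a play staying in $wait$ forever is losing, which I realise by making $wait$-states non-final and requiring visits to the Büchi set infinitely often. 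The combination of clauses (2) and (4) — infinitely many colour-$k$ visits, infinitely many delegations, and smallest colour exactly $k$ between delegations — is the delicate point: I would add a one-bit flag recording whether a colour-$k$ state has been seen since the last delegation, reset it at each (starred) delegation move, and put into the final set $F$ exactly those delegation vertices reached with the flag set. A play visits $F$ infinitely often precisely when infinitely many delegations occur and each inter-delegation segment contains colour $k$; combined with the safety constraint forbidding colours $<k$, this forces the smallest colour seen infinitely often in $check^k$ to be exactly $k$, which is the even colour witnessing acceptance. Thus the whole condition collapses to a single Büchi set $F$.

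The main obstacle, then, is checking that this Büchi encoding is \emph{faithful}: that a play satisfies the four-clause condition of $\gameAccUnck{\A,t}$ if and only if the corresponding play in $\game_{\A',t}$ visits $F$ infinitely often. The forward direction is routine given the flag construction; the reverse direction requires arguing that if $F$ is hit infinitely often then clauses (2) and (4) hold simultaneously and, using the deleted-transition safety constraint, that the smallest infinitely-visited colour is indeed $k$ and hence even. Once this equivalence is established, \Eloise wins $\game_{\A',t}$ from $(\qini,\epsilon,wait)$ iff she wins $\gameAccUnck{\A,t}$, which by Theorem~\ref{thm:LAccUnck:acceptance} holds iff $t\in\LAccUnc{\A}$; by the bijection between \Eloise's strategies and runs recalled after Figure~\ref{fig:acceptance-game}, this says exactly $t\in L(\A')$. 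Reading $\A'$ off the arena gives $|Q'|=\mathcal{O}(d|Q|)$ (the factor $d$ coming from the choice of the even colour $k$ together with the flag) and a Büchi acceptance condition, completing the proof.
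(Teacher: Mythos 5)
Your overall route is the same as the paper's: start from the refined game $\gameAccUnck{\A,t}$ of Theorem~\ref{thm:LAccUnck:acceptance}, push \Eloise's transition/direction/delegation choices into (starred copies of) the control states exactly as in the proof of Theorem~\ref{thm:LAccInf:reg}, treat clause (3) as a safety constraint, and collapse the remaining requirements into a single Büchi set via a flag; the paper's own proof is precisely this sketch (``the winning condition in $\gameAccUnck{\A,t}$ is a conjunction of Büchi conditions, hence can be rephrased as a Büchi condition (up to adding some flags)'') with all details omitted, and your size bound $\mathcal{O}(d|Q|)$ arises in the same way, the factor $d$ coming from the choice of the even colour $k$.

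There is, however, one genuine (though repairable) flaw in your key step. You place in $F$ the delegation vertices reached with the flag set and claim that $F$ is visited infinitely often ``precisely when infinitely many delegations occur and each inter-delegation segment contains colour $k$''. Only ``infinitely many segments contain colour $k$'' follows: a play whose delegation segments \emph{alternately} do and do not contain a colour-$k$ visit hits $F$ infinitely often, yet violates the ``always'' in clause (4) of the winning condition of $\gameAccUnck{\A,t}$ (between any two consecutive delegations the smallest colour visited is always $k$ --- this is what mirrors condition (iii) of a $k$-pseudo binary tree). So your claimed play-level faithfulness is false. Two repairs are available. (a) Make the ``always'' a safety condition, like clause (3): when a delegation vertex is reached with the flag unset, move to a rejecting sink; then visiting $F$ infinitely often is, given the safety constraints, exactly equivalent to clauses (1), (2) and (4), and the rest of your argument goes through verbatim. (b) Keep your weaker Büchi condition but drop the play-level equivalence with $\gameAccUnck{\A,t}$ and argue winning-region equivalence instead: a strategy winning your weak game still yields an accepting-pseudo binary tree (by your own liminf observation, every consistent play never sees colour $<k$ and sees colour $k$ infinitely often, so the associated branch is accepting), whence Lemma~\ref{lemma:uncountable_weak} gives $t\in\LAccUnc{\A}$; conversely any strategy winning $\gameAccUnck{\A,t}$ also wins the weak game. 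Finally, two small points on your safety step: the deletion for clause (3) must forbid colour $<k$ only on sons that are actually visited (both sons for delegated moves, but only the chosen son for self-directed moves, the other son going to the accepting dummy), and since deletions may leave \Eloise with no available transition you should restore completeness with a rejecting sink rather than leave dead-ends.
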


\begin{proof}
One can easily transform game $\gameAccUnck{\A,t}$ to obtain an equivalent game that is the acceptance game of some tree automaton $\A'$ with the classical semantics. The construction is very similar to the one we had for the other cases and we omit the details here. It simply suffices to notice that the winning condition in $\gameAccUnck{\A,t}$ is a conjunction of Büchi conditions, hence can be rephrased as a Büchi condition (up to adding some flags).
\end{proof}

\section{Checking Topological Largeness of Accepting Branches}\label{sec:results-topo}

We now consider the case of automata with topological bigness constraints and we prove that languages of the form $\LLarge{\A}$ are always $\omega$-regular (Theorem~\ref{thm:LLarge:reg}). This acceptance condition is referred to as the best model of a \emph{fair adversary} in~\cite{VVK05}, and finite games where \Eloise plays against such an adversary has been studied and solved in~\cite{ACV10}. We first characterise large set of branches (Lemma~\ref{lemma:largeVSdense}), then based on this, we define an acceptance game for $\LLarge{\A}$ and finally we transform it so that to obtain an \emph{equivalent} game that has the form of a classical acceptance game from which one extracts $\A'$.

\os{Voir si on peut pas rentrer tout ça dans la preuve du lemme}
Banach-Mazur theorem gives a game characterization of large and meager sets of branches (see for instance \cite{Kechris,Graedel08}). The \defin{Banach-Mazur game} on a tree $t$,  is a two-player game where \Abelard and \Eloise choose alternatively a node in the tree, forming a branch: \Abelard chooses first a node and then \Eloise chooses a descendant of the previous node and \Abelard chooses a descendant of the previous node and so on forever. In this game it is \textbf{always \Abelard that starts} a play. 

Formally a \defin{play} is an infinite sequence $u_1,u_2,\ldots$ of words in $\{0,1\}^+$, and the branch associated with this play is $u_1u_2\cdots$. A \defin{strategy} for \Eloise is a mapping $\phi : (\{0,1\}^+)^+ \rightarrow \{0,1\}^+$ that takes as input a finite sequence of words, and outputs a word. A play $u_1,u_2,\ldots$ \defin{respects} $\phi$ if for all $i\geq 1$, $u_{2i}= \phi(u_1,\ldots,u_{2i-1})$. We define $Outcomes(\phi)$ as the set of plays that respect $\phi$ and $\mathcal{B}(\phi)$ as the set branches associated with the plays in $Outcomes(\phi)$. 

The Banach-Mazur theorem {(see \footnote{In \cite{Graedel08} the players of the Banach-Mazur game are called $0$ and $1$ and Player $0$ corresponds to \Abelard while player $1$ corresponds to \Eloise. Hence, when using a statement from \cite{Graedel08} for our setting one has to keep this in mind as well as the fact that one must replace the winning condition by its complement (hence, replacing “meager” by “large”).}
 \eg \cite[Theorem~4]{Graedel08})} states that a set of branches $B$ is large if and only if there exists a strategy $\phi$ for \Eloise such that $\mathcal{B}(\phi)\subseteq B$. 

Furthermore a folk result {(see \eg \cite[Theorem~9]{Graedel08})} about Banach-Mazur games states that {when $B$ is Borel}\footnote{{This statement holds as soon as the Banach-Mazur games are determined and hence, in particular for Borel sets.}} one can look only at “simple” strategies, defined as follows. A \defin{decompo\-si\-tion-invariant strategy} is a mapping $f:\{0,1\}^* \rightarrow \{0,1\}^+$ and we associate with $f$ the strategy $\phi_f$ defined by $\phi_f(u_1,\ldots,u_{k})=f(u_1\cdots u_k)$. Finally, we define $Outcomes(f) = Outcomes(\phi_f)$ and $\mathcal{B}(f)=\mathcal{B}(\phi_f)$. The folk result states that for any {Borel} set of branches $B$,  there exists a strategy $\phi$ such that $Outcomes(\phi)\subseteq B$  if and only if there exists a decomposition-invariant strategy $f$ such that $\mathcal{B}(f)\subseteq B$.

Call a set of nodes $W\subseteq \{0,1\}^*$ \defin{dense} if  $\forall u\in \{0,1\}^*$, $\exists w\in W$ such that $u\prefix w$. Given a dense set of nodes $W$, the set of \defin{branches supported by $W$}, $\mathcal{B}(W)$ is the set of branches $\pi$ that have infinitely many prefixes in $W$. Using the existence of decomposition-invariant winning strategies in Banach-Mazur games, the following lemma characterises large sets of branches.

\begin{lemma}\label{lemma:largeVSdense}
A set of branches $B\subseteq \{0,1\}^\omega$ is large if and only if there exists a dense set of nodes  $W\subseteq \{0,1\}^*$ such that $\mathcal{B}(W) \subseteq B$.
\end{lemma}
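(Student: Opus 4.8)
The plan is to prove both directions of the equivalence in Lemma~\ref{lemma:largeVSdense} by translating between the Banach--Mazur game characterisation of large sets (already recalled in the excerpt) and dense sets of nodes. The bridge in both directions is the correspondence between a decomposition-invariant strategy $f:\{0,1\}^*\rightarrow\{0,1\}^+$ for \Eloise and a dense set of nodes $W$. The key observation is that from $f$ one builds the dense set $W_f$ consisting of all the ``reachable meeting points'', i.e. nodes of the form $u_1 f(u_1) u_2 f(u_1 f(u_1) u_2)\cdots$ obtained by composing $f$ with arbitrary \Abelard moves, and conversely from a dense $W$ one reads off a strategy that, given the current node, appends the shortest continuation leading into $W$.

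For the forward direction, suppose $B$ is large. Since $B$ is large and (in our setting) Borel, the Banach--Mazur theorem together with the folk result on decomposition-invariant strategies gives a decomposition-invariant winning strategy $f$ with $\mathcal{B}(f)\subseteq B$. I would then define $W=\{u_1 f(u_1)\,u_2 f(u_1 f(u_1)u_2)\cdots u_k f(u_1 f(u_1)\cdots u_k) \mid k\geq 1,\ u_1,\dots,u_k\in\{0,1\}^+\}$, the set of all nodes at which a play consistent with $f$ can stand right after one of \Eloise's moves. First I would check $W$ is dense: given any node $u$, writing $u$ itself as an \Abelard move, the node $u\,f(u)$ lies in $W$ and extends $u$, so every node has a descendant in $W$. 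Then I would argue $\mathcal{B}(W)\subseteq \mathcal{B}(f)\subseteq B$: a branch with infinitely many prefixes in $W$ can be read as the branch of a play in which \Abelard fills the gaps between consecutive $W$-prefixes and \Eloise responds by $f$, so the branch is in $\mathcal{B}(f)$, hence in $B$.

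For the converse, suppose $W$ is dense with $\mathcal{B}(W)\subseteq B$. Density lets me define, for each node $u$, a word $g(u)\in\{0,1\}^+$ such that $u\cdot g(u)\in W$ (pick any witness of density at $u$). Setting $f(u)=g(u)$ gives a decomposition-invariant strategy, and I would verify $\mathcal{B}(f)\subseteq \mathcal{B}(W)$: in any play respecting $\phi_f$, after each of \Eloise's moves the current node lies in $W$, and since infinitely many such moves occur, the resulting branch has infinitely many prefixes in $W$, so it belongs to $\mathcal{B}(W)\subseteq B$. Thus $\mathcal{B}(f)\subseteq B$, and by the Banach--Mazur theorem $B$ is large.

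The main obstacle I anticipate is purely bookkeeping rather than conceptual: making the index-shuffling in the forward direction airtight, namely verifying that every branch with infinitely many prefixes in $W$ genuinely arises from \emph{some} legal interleaving of \Abelard-moves and $f$-responses. The subtlety is that the prefixes in $W$ need not be exactly the positions where \Eloise moves for a \emph{single} fixed play, so one must exhibit the interleaving explicitly: between two consecutive $W$-prefixes $w_i\prefix w_{i+1}$ of the branch, let \Abelard play the gap word so that \Eloise's $f$-response lands exactly on the next $W$-node. Invoking the decomposition-invariance of $f$ (so that $f$ depends only on the concatenated node, not on how it was cut into moves) is what makes this interleaving well-defined and consistent, and is the crucial point where the folk result on simple strategies is used.
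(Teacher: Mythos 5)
Your route is the paper's own: you build the same dense set (your set of ``reachable meeting points'' collapses to exactly $\{v\,f(v)\mid v\in\{0,1\}^+\}$, since \Abelard's first move is arbitrary, which is the paper's $W=\{v f(v)\mid v\in\{0,1\}^*\}$ up to the single node $f(\epsilon)$), and your converse direction is the paper's as well and is correct --- modulo one cosmetic point: density at $u$ only provides $w\in W$ with $u\prefix w$, possibly $w=u$, so to guarantee a \emph{nonempty} $g(u)$ you should apply density to a proper extension such as $u0$ (the paper insists on a strict extension for exactly this reason).

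However, the step you yourself flag as the crucial verification in the forward direction fails as you state it, and this is a genuine gap. Between two \emph{consecutive} $W$-prefixes $w_i\prefixstrict w_{i+1}$ of the branch $\pi$ there need not be any legal \Abelard move realising your interleaving: writing $w_{i+1}=v\,f(v)$, the decomposition point $v$ may be a prefix of $w_i$ (or equal to it), in which case the ``gap word'' \Abelard would have to play from $w_i$ to reach $v$ is empty or simply does not exist, and no $f$-response from a strict extension of $w_i$ lands on $w_{i+1}$. Concretely, take $f(0)=01$ and $f(00)=110$ (with $f$ chosen elsewhere so that no other $v f(v)$ is a prefix of the branch): the branch $00110\cdots$ has consecutive $W$-prefixes $w_1=0\,f(0)=001$ and $w_2=00\,f(00)=00110$, yet $w_2$ decomposes only with $v=00\prefixstrict w_1$, so \Abelard cannot ``fill the gap''. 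Decomposition-invariance does not rescue this; it only guarantees that $f$'s response depends on the concatenated node, not that the needed cut points are available along the branch. The missing idea --- and the one line the paper devotes to it --- is a subsequence extraction: fix for each $W$-prefix $w$ of $\pi$ one representation $w=v_w f(v_w)$; since $v\mapsto v f(v)$ is a function, distinct $w$'s yield distinct $v_w$'s, all prefixes of $\pi$, so their lengths are unbounded, and one can extract $W$-prefixes $w_{i_1}\prefixstrict w_{i_2}\prefixstrict\cdots$ with $w_{i_j}\prefixstrict v_{w_{i_{j+1}}}$ for all $j$. Interleaving along \emph{this} subsequence (\Abelard plays from $w_{i_j}$ to $v_{w_{i_{j+1}}}$, \Eloise answers according to $f$) produces a play respecting $\phi_f$ whose associated branch is $\pi$, giving $\pi\in\mathcal{B}(f)\subseteq B$ and closing the argument.
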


\begin{proof}
Assume that $B$ is large and let $f$ be a decomposition-invariant strategy for \Eloise in the associated Banach-Mazur game. Consider the set:
\[
 W = \{ vf(v) \mid v \in \{0,1\}^* \}.
\]
The set $W$ is dense (as for all $v \in t$, $v \prefixstrict vf(v) \in W$). We claim that $\mathcal{B}(W)$ is included in $B$. Let $\pi$ be a branch 
in $\mathcal{B}(W)$. As $\pi$ has infinitely many prefixes in $W$, there exists
a sequence of words $u_{1},u_{2},\cdots$ such that $u_{1}f(u_{1}) \prefixstrict u_{2}f(u_{2}) \prefixstrict \cdots \prefixstrict \pi$. As the lengths of the $u_{i}$ are strictly increasing, there exists a sub-sequence $(v_i)_{i\geq 1}$ of $(u_i)_{i\geq 1}$ such that for all
$i \geq 1$, $v_{i}f(v_{i}) \prefixstrict v_{i+1}$. Now, consider the play in the Banach-Mazur game where \Abelard first move to $v_{1}$ and then \Eloise responds by going to $v_{1}f(v_{1})$. Then \Abelard moves to $v_{2}$ (which is possible as $v_{1}f(v_{1}) \prefixstrict v_{2}$) and \Eloise moves to $v_{2}f(v_{2})$. An so on. In this play \Eloise respects the strategy $f$ and therefore wins. Hence, the branch $\pi$ associated to this play belongs to $B$.

Conversely let $W$ be a dense set of nodes such that $\mathcal{B}(W) \subseteq B$. To show that $B$ is large, we define a decomposition-invariant strategy $f$ for \Eloise in the associated Bannach-Mazur game.
 For all node $u$ we pick  $v$ of $W$ such that $u$ is a strict prefix of $v$ (since $W$ is dense there must always exist such a $v$). Let $v=uu'$ and fix $f(u)=u'$. A play where \Eloise respects $f$ goes through infinitely many 
nodes in $W$ (as $f$ always points to an element in $W$). Hence, the branch associated with the play belongs to $\mathcal{B}(W)\subseteq B$ which shows that $f$ is winning for \Eloise.\end{proof}

Fix a tree $t$ and define
an acceptance game $\gameAccLarge{\A,t}$ for $\LLarge{\A}$. In this game, \Eloise describes a run $\run$ together with a dense set $U$ of nodes while \Abelard tries either to prove that $U$ is not dense or that there is a rejecting branch in $\mathcal{B}(U)$. The way \Eloise describes a run is as usual (she proposes valid transitions); the way she describes $U$ is by \begin{inparaenum}[(1)] \item indicating explicitly when a node is in $U$ and; \item at each node giving a direction $i$ that  should {lead (by iteratively following the directions) to a node in $U$}. 
\end{inparaenum} \Abelard chooses the direction: if it does not select $i$ and does not go to a node in $U$ the colour is a large even one (preventing him not to follow \Eloise forever); if he chooses $i$ but does not go to a node in $U$ the colour is a large odd one (forcing \Eloise to describe a dense set $U$); and if he chooses $i$ and goes to a node in $U$ the colour is the smallest one seen since the last visit to a node in $U$ (and it is computed in the game).

Before formally constructing the game we need the following lemma. A \defin{direction mapping} is a mapping $d:\{0,1\}^* \rightarrow \{0,1\}$, and given a set of nodes $U$, we say that $d$ \defin{points to $U$} if for every node $v$ there exists $i_1,\ldots,i_k\in\{0,1\}$ such that $v i_1\cdots i_k \in U$ and for all $1\leq j\leq k$, $i_j = d(v i_1 \cdots i_{j-1})$.

\begin{lemma}\label{directions}
A  set of nodes $U$ is dense if and only if there exists a direction mapping that points to $U$.
\end{lemma}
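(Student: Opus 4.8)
The plan is to prove the two implications separately. The implication ``pointing $\Rightarrow$ dense'' is immediate, while the content lies in the converse, where the difficulty is to make the local direction choices \emph{globally consistent}, so that iterating them from \emph{any} node eventually lands in $U$.

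For the direction from right to left, suppose a direction mapping $d$ points to $U$ and fix an arbitrary node $u$. Applying the definition of pointing to $v=u$ yields bits $i_1,\dots,i_k$ with $u\,i_1\cdots i_k\in U$; setting $w=u\,i_1\cdots i_k$ gives $u\prefix w$ with $w\in U$, which is exactly density.

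For the converse, the naive idea—pick for each node $u$ a witness $w_u\in U$ with $u\prefix w_u$ and let $d(u)$ be the first bit of $w_u$ after $u$—fails, because the witnesses chosen at successive nodes along a path need not be compatible, so the generated path may never stabilise on a single element of $U$. The key device to repair this is a distance function. For a node $u$ set
\[
\mathrm{dist}(u)=\min\{\,|w|-|u| \ :\ w\in U \text{ and } u\prefix w\,\},
\]
which is well defined and finite precisely because $U$ is dense, and which satisfies $\mathrm{dist}(u)=0$ if and only if $u\in U$. The only facts I need are the trivial inequality $\mathrm{dist}(u)\le\mathrm{dist}(ux)+1$ for every node $u$ and bit $x$ (any witness above $ux$ is a witness above $u$), together with its sharp instance: if $\mathrm{dist}(u)\ge 1$ and $w^{\ast}$ realises the minimum, then $w^{\ast}$ extends $ux$ for $x$ its first bit past $u$, giving $\mathrm{dist}(ux)=\mathrm{dist}(u)-1$.

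With this in hand I define $d$ by setting, for each $u\notin U$, $d(u)$ to be a bit $x$ achieving $\mathrm{dist}(ux)=\mathrm{dist}(u)-1$ (such an $x$ exists by the sharp instance above), and choosing $d(u)$ arbitrarily when $u\in U$. Then from any node $v$ the iterated path $v,\ v\,d(v),\ \dots$ decreases $\mathrm{dist}$ by exactly one at each step until it reaches value $0$, so after exactly $\mathrm{dist}(v)$ steps it hits a node of $U$; this is precisely the statement that $d$ points to $U$. I expect the only subtle point to be the verification that $\mathrm{dist}$ strictly decreases along the chosen directions, i.e.\ the sharp inequality above; once that is established, the monovariant argument closes the proof immediately.
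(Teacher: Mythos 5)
Your proof is correct, but it takes a genuinely different route from the paper's. The paper also proves the easy direction the same way you do, but for the hard direction it defines $d$ by induction over nodes: for each node $v$ at which $d$ is not yet defined, it picks a full witness $v\,i_1\cdots i_k\in U$ and defines $d$ along the \emph{entire segment} at once, setting $d(v\,i_1\cdots i_{j-1})=i_j$ for all $j\leq k$. Consistency of the iterated directions is thus built into the construction (a segment defined later can never overwrite earlier values, since any previously defined chain entering the subtree below $v$ would have had to pass through $v$ itself); note that this is exactly how the paper escapes the failure of the ``naive first-bit'' idea that you correctly diagnose, so your critique, while sound, does not apply to the paper's actual argument. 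You instead repair the naive idea locally, via the potential $\mathrm{dist}(u)=\min\{|w|-|u| : w\in U,\ u\prefix w\}$, choosing $d(u)$ to realise the sharp decrease $\mathrm{dist}(u\,d(u))=\mathrm{dist}(u)-1$ whenever $u\notin U$; the verification then reduces to a one-line monovariant argument (exactly $\mathrm{dist}(v)$ steps land in $U$). What each approach buys: the paper's segment-wise induction is short to state but leaves implicit the bookkeeping that no two segments conflict (and that $d$ ends up total, including at nodes of $U$); your definition is pointwise and canonical, needs no processing order, and makes correctness fully explicit at the cost of introducing the distance function. Both arguments, like the paper's, tacitly use the convention that the empty sequence ($k=0$) witnesses pointing at nodes $v\in U$, which is the natural reading of the definition.
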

\begin{proof}
Assume that $U$ is dense. We define $d(v)$ by induction on $v$ as follows. Let $v$ such that $d(v)$ is not yet defined, we pick a node $v i_1\cdots i_k \in U$ (there must exists one since $U$ is dense), and for all $j\leq k$ we define 
\[d(v i_1 \cdots i_{j-1})= i_j.\] The mapping is defined on every node and satisfies the requirement by definition. The other implication is straightforward (for all node $v$, there exists $v i_1\cdots i_k \in U$).
\end{proof}

Fix a tree $t$ and define an acceptance game for $\LLarge{\A}$ as follows. The game is played along a tree, \Eloise chooses the transitions of the automaton and \Abelard chooses the directions. Furthermore, at each node \Eloise proposes a direction that \Abelard may or may not follow, and possibly marks some of the sons of the current state. We keep track in \Eloise's vertices of informations about the choice of \Abelard in his previous move differentiating three possible situations:
\begin{enumerate}
\item[($\star$)] \Abelard has picked a son that \Eloise has marked,
\item[($\circ$)] \Abelard has not picked a marked son, but he has followed the direction that \Eloise has given,
\item[($\square$)] \Abelard has not picked a marked son and has not followed the direction given by \Eloise.
\end{enumerate}

Therefore \Eloise's vertices will be of the form $(q,u,\textit{symb})$ with $q$ a state, $u$ a node, and $\textit{symb}\in\{\star,\circ,\square\}$, and we define the colour of this vertex as the colour of $q$, and \Abelard's state will be of the form $(q,u,q_0,q_1,i,S)$ where $(q,t(u),q_0,q_1)$ is a transition of the automaton, $i\in\{0,1\}$ is the direction that \Eloise has proposed in the previous turn and $S\subseteq \{0,1\}$ describes which sons of $u$ she marked (see Figure~\ref{largeGame} for the local structure of the game).

The accepting condition for \Eloise is described as follows. She wins a play if and only if one of the following occurs.
\begin{itemize}
\item There are infinitely many $\star$-vertices and the smallest colour appearing infinitely often is even
\item \oschanged{Eventually there are no more  $\star$-vertices but there are infinitely often  $\square$-vertices, \ie \Abelard stop visiting marked nodes and avoids infinitely often the directions given by \Eloise.}
\end{itemize}

Call $\gameAccLarge{\A,t}$ this game.

Intuitively a strategy of \Eloise is a run of the automaton over the tree, along with a set $U$ of marked nodes and directions on each of the nodes, and \Abelard chooses a branch along the tree. If at some point \Abelard follows forever the directions given by \Eloise without going through a marked node, then  \Abelard wins. If \Abelard goes infinitely often through a marked node, then the smallest colour seen infinitely often is the one of the branch in the run of \Eloise, therefore \Eloise wins if this branch is accepting. These two remarks intuitively mean that if \Eloise has a winning strategy, then the set $U$ of marked nodes implied by this strategy must be a dense set and $\Brchs(U)$ must consist only of accepting branches of the run therefore the set of accepting branches of the run is large.

On the other hand, if there exists a run whose set of accepting branches is large, there exists a dense set of nodes $U$ such that all branches in $\Brchs(U)$ are accepting (Lemma~\ref{lemma:largeVSdense}), and directions on each nodes that leads to  nodes in $U$ (Lemma~\ref{directions}). If \Eloise plays according to them, she wins in the game. Indeed, if \Abelard follows infinitely often the nodes in $U$ then the branch is an accepting branch, therefore \Eloise wins the game.  His only option to avoid the nodes of $U$ is to infinitely often go in the opposite direction than the one given by \Eloise, in which case \Eloise also wins.

\begin{figure}
\begin{center}
\begin{tikzpicture}[scale=1,transform shape]
\tikzset{>=stealth}

\node[draw,circle] (star) at (-5,0) {$q,u,\star$};
\node[draw,circle] (circ) at (0,0) {$q,u,\circ$};
\node[draw,circle] (box) at (5,0) {$q,u,\square$};

\node[draw] (abelard) at (0,-3) {$q,u,q_0,q_1,i,S$};

\node[right,text width = 5cm] at (0,-2.15) {\footnotesize for any $(q,t(u),q_0,q_1)\in \Delta$, any $i\in \{0,1\}$ and any $S\subseteq \{0,1\}$};

\node[right] at (0,-3.7) {\footnotesize for any $j\in \{0,1\}$};

\node[above] at (-2.5,-4.5) {\scriptsize if $j\in S$};

\node[above] at (1.9,-4.5) {\scriptsize if $j\not\in S$};

\node[above] at (1.25,-5.5) {\scriptsize if $j=i$};

\node[above] at (3.75,-5.5) {\scriptsize if $j\neq i$};

\draw [rounded corners,->] (star) -- (-5,-1.5) -- (0,-1.5) -- (abelard);
\draw [rounded corners,->] (circ) --  (abelard);
\draw [rounded corners,->] (box) -- (5,-1.5) -- (0,-1.5) -- (abelard);

\node[draw,ellipse] (starj) at (-5,-6.5) {$q_j,u\cdot j,\star$};
\node[draw,ellipse] (circj) at (0,-6.5) {$q_j,u\cdot j,\circ$};
\node[draw,ellipse] (boxj) at (5,-6.5) {$q_j,u\cdot j,\square$};

\draw [rounded corners,->] (abelard) -- (0,-4.5) -- (-5,-4.5) -- (starj);
\draw [rounded corners,->] (abelard) -- (0,-4.5) --  (2.5,-4.5) -- (2.5,-5.5) -- (0,-5.5) -- (circj);
\draw [rounded corners,->] (abelard) -- (0,-4.5) --  (2.5,-4.5) -- (2.5,-5.5) -- (5,-5.5) -- (boxj);

\end{tikzpicture}
	\caption{Local structure of the arena of the acceptance game $\gameAccLarge{\A,t}$.\label{largeGame}}
\end{center}
\end{figure}

The next theorem states that $\gameAccLarge{\A,t}$ is an acceptance game for $\LLarge{\A}$.

\begin{theorem}\label{theo:LLarge_AcceptanceGame}
One has $t\in \LLarge{\A}$ if and only if \Eloise wins in $\gameAccLarge{\A,t}$ from $(\qini,\epsilon,\circ)$.
\end{theorem}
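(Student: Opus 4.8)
The plan is to prove both implications directly, without appealing to determinacy: from a winning strategy of \Eloise I extract a run together with a dense marked set, and conversely from a suitable run I build an explicit winning strategy. The two bridges between the game and the topological notion of largeness are Lemma~\ref{lemma:largeVSdense} (largeness of $B$ is equivalent to the existence of a dense $W$ with $\mathcal{B}(W)\subseteq B$) and Lemma~\ref{directions} (density of $U$ is equivalent to the existence of a direction mapping pointing to $U$).

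For the forward direction, assume \Eloise has a winning strategy $\phi$ from $(\qini,\epsilon,\circ)$. As in the earlier acceptance-game proofs, I would attach to every node $u\in\{0,1\}^*$ the unique partial play $\play_u$ in which \Abelard navigates straight to $u$ (picking at step $k$ the direction prescribed by the $k$-th letter of $u$) and \Eloise respects $\phi$; these partial plays form an increasing family. I then define the run by letting $\run(u)$ be the state in the last configuration of $\play_u$, the marked set $U=\{u\cdot j \mid j\in S_u\}$ where $S_u$ is the marking played by \Eloise from $\play_u$, and the direction mapping $d(u)$ to be the direction she proposes from $\play_u$. By construction $\run$ is a valid run, and two facts remain to be checked. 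First, $U$ is dense: if some node $u$ had no descendant in $U$, then \Abelard could reach $u$ and from then on always follow $d$, producing a play with finitely many $\star$-vertices and only finitely many $\square$-vertices, which is losing for \Eloise, contradicting that $\phi$ is winning. Second, every $\pi\in\mathcal{B}(U)$ is accepting: the play in which \Abelard follows $\pi$ selects marked sons infinitely often, hence visits infinitely many $\star$-vertices, so the first clause of \Eloise's winning condition forces the smallest colour seen infinitely often, which is exactly $\liminf_k \col(\run(\pi_{\leq k}))$, to be even. With $U$ dense and $\mathcal{B}(U)$ contained in the accepting branches, Lemma~\ref{lemma:largeVSdense} yields that the accepting branches of $\run$ form a large set, i.e. $t\in\LLarge{\A}$.

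For the converse, assume $t\in\LLarge{\A}$ via a run $\run$ whose accepting branches form a large set. Lemma~\ref{lemma:largeVSdense} supplies a dense $U$ with $\mathcal{B}(U)$ contained in the accepting branches, and Lemma~\ref{directions} supplies a direction mapping $d$ pointing to $U$. I would let \Eloise play, at node $u$, the transition $(\run(u),t(u),\run(u0),\run(u1))$, propose direction $d(u)$, and mark $S_u=\{j\mid u\cdot j\in U\}$. Fix a play consistent with this strategy and let $\pi$ be the branch \Abelard describes. If \Abelard selects a marked son infinitely often, then $\pi\in\mathcal{B}(U)$ is accepting and there are infinitely many $\star$-vertices, so the first winning clause holds. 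Otherwise there are finitely many $\star$-vertices; I then argue that \Abelard must deviate from $d$ infinitely often, since if he eventually always followed $d$ from some node $v$ he would, because $d$ points to $U$, reach a node of $U$ in finitely many steps and hence hit a marked son, creating a fresh $\star$-vertex, a contradiction. Thus the play has only finitely many $\star$-vertices but infinitely many $\square$-vertices, so the second winning clause holds. In either case \Eloise wins, so $\phi$ is winning.

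The only genuinely delicate point is the forward direction's density argument together with the correct bookkeeping of the three symbols $\star,\circ,\square$: one must ensure that ``following the direction $d$'' never accidentally coincides with picking a marked son (which would produce a $\star$-vertex and break the argument), and that the priority of $\star$ over the other two symbols is respected, so that the failure of \emph{both} winning clauses is correctly forced when $U$ fails to be dense. Everything else is the routine translation between strategies and runs already carried out in the preceding acceptance-game theorems.
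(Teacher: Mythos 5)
Your proposal is correct and takes essentially the same route as the paper's proof: in both directions you translate between a strategy and the triple consisting of a run, a marked set $U$, and a direction mapping $d$, you invoke Lemma~\ref{lemma:largeVSdense} and Lemma~\ref{directions} as the bridges to largeness, and you perform the same case analysis on the symbols $\star,\circ,\square$ (correctly respecting the priority of $\star$). The only cosmetic difference is that you prove density of $U$ directly, whereas the paper first shows that $d$ points to $U$ and then appeals to Lemma~\ref{directions}; the underlying contradiction argument --- \Abelard following $d$ forever while avoiding $U$ produces a play with only $\circ$-vertices, which is losing for \Eloise --- is identical.
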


\begin{proof}
Assume that \Eloise has a winning strategy $\phi$ in $\gameAccLarge{\A,t}$ from $(\qini,\epsilon,\circ)$. With $\phi$ we associate a run $\run$ of $\A$ on $t$ as follows. We inductively associate with any node $u$ a partial play $\play_u$ where \Eloise respects $\phi$. For this we let $\play_\epsilon = (\qini,\epsilon,\circ)$. Now assume that we defined $\play_u$ for some node $u$ and let $(q,t(u),q_0,q_1)$ be the transition \Eloise plays from $\play_u$ when she respects $\phi$. 

For $j\in\{0,1\}$, one defines $\play_{u\cdot j}$ as the partial play obtained by extending $\play_u$ by \Eloise choosing transition $(q,t(u),q_0,q_1)$, followed by \Abelard choosing direction $j$. Note that for $j\in\{0,1\}$, $\play_{u\cdot j}$ ends in configuration $(q_j,uj,\textit{symb})$ for some $\textit{symb} \in \{\star,\circ,\square\}$.

The run $\run$ is defined by letting $\run(u)$ be the state $q$  in the last configuration $(q,u,\textit{symb})$ of $\play_u$. By construction, $\run$ is a valid run of $\A$ on $t$ and moreover with any branch $\pi$ in $\run$ one can associate a play $\play_\pi$ in $\gameAccLarge{\A,t}$ from $(\qini,\epsilon,\circ)$ where \Eloise respects $\phi$ (one simply considers the limit of the \emph{increasing} sequence of partial plays $\play_u$ where $u$ ranges over those nodes along branch $\pi$). By construction $\pi$ is accepting if and only if $\play_\pi$  fulfils the parity condition.

We define $s(u)$ as the symbol $\textit{symb}$ in the last configuration $(q,u,\textit{symb})$ of $\play_u$. Furthermore, we define a direction mapping $d$ and a set of nodes $U$ as follows: for all $u$, $d(u) = i$ with $\phi(\play_u) = (q,u,q_0,q_1,i,S)$; and for all $u$, $u\in U$ if and only if $s(u)=\star$. Notice that if $d(u)=i$ then $s(u\cdot i) = \circ$ or $s(u\cdot i) = \star$.

Given a branch $\pi=i_1 i_2 \cdots$ we define $s(\pi)$ as the infinite sequence of $s(\epsilon) s(i_1) s(i_1i_2) \cdots$, and $\col(\pi)$ as the smallest colours appearing infinitely often in $\run(\pi)$.  Note that since \Eloise wins the play $\play_\pi$, $\star$ appears infinitely often in $s(\pi)$ and $\col(\pi)$ is even, or $\star$ does not appear infinitely often in $s(\pi)$ but $\square$ does.

First let us show that $d$ points to $U$. Suppose by contradiction that this is not the case, \ie  there exists a branch $\pi = u i_1 i_2 \cdots $, with $i_j=d(u i_1 \cdots i_{j-1})$ for all $j\geq 1$, such that for all $k\geq 1$, $ui_1 \cdots i_k\not\in U$. Then for all $k\geq 1$, $s(ui_1 \cdots i_k)=\circ$, therefore $\play_\pi$ is loosing. This raises a contradiction since $\phi$ is a winning strategy.

Now, let us show that all branches in $\Brchs(U)$ are winning in $\run$. Let $\pi\in \Brchs(U)$. Then by definition, $\star$ appears infinitely often in $s(\pi)$. Then since $\play_\pi$ is winning we have that $\col(\pi)$ is even, then $\pi$ is an accepting branch in $\run$.

Conversely let $\run$ be a run whose set of accepting branches is large. From Lemma~\ref{directions} there exist  a direction mapping $d$ and a set of nodes $U$ such that $d$ points to $U$, and every branch $\pi \in \Brchs(U)$ is accepting in $\run$. Define the strategy $\phi$ of \Eloise as follows. For all partial play $\play$ ending in $(\run(u),u,\textit{symb})$  
\[\phi(\play) = (\run(u),u,\run(u0),\run(u1), d(u), \{ j \mid uj\in U\}),\]
and for all other plays, we do not give any restriction on $\phi(\play)$ (assuming that the automaton is complete, \Eloise can always play something). Let us show that $\phi$ is a winning strategy for \Eloise.

As for the other direction, we inductively associate with any node $u$ a partial play $\play_u$ where \Eloise respects $\phi$. For this we let $\play_\epsilon = (\qini,\epsilon,\circ)$. Now assume that we defined $\play_u$ for some node $u$ and let $(q,t(u),q_0,q_1)$ be the transition \Eloise plays from $\play_u$ when she respects $\phi$.  For $j\in\{0,1\}$, one defines $\play_{u\cdot j}$ as the partial play obtained by extending $\play_u$ by \Eloise choosing transition $(q,t(u),q_0,q_1)$, followed by \Abelard choosing direction $j$. Note that for $j\in\{0,1\}$, $\play_{u\cdot j}$ ends in configuration $(q_j,uj,\textit{symb})$ for some $\textit{symb} \in \{\star,\circ,\square\}$. 

Moreover with any branch $\pi$ in $\run$ one can associate a play $\play_\pi$ in $\gameAccLarge{\A,t}$ from $(\qini,\epsilon,\circ)$ where \Eloise respects $\phi$ (one simply considers the limit of the \emph{increasing} sequence of partial plays $\play_u$ where $u$ ranges over those nodes along branch $\pi$). By construction $\pi$ is accepting in $\run$ if and only if $\play_\pi$  fulfils the parity condition. Furthermore observe that any play that respects $\phi$ is equal to $\play_\pi$ for some branch $\pi$. Again, we define $s(u)$ as the symbol $\textit{symb}$ in the last configuration $(q,u,\textit{symb})$ of $\play_u$. Observe that if $s(u\cdot i) = \circ$ for some node $u$ and $i\in \{0,1\}$ then $i = d(u)$.

Let $\play_\pi$ be a play that respects $\phi$. Note that \Eloise wins the play $\play_\pi$ if and only if $\star$ appears infinitely often in $s(\pi)$ and $\col(\pi)$ is even, or $\star$ does not appear infinitely often in $s(\pi)$ but $\square$ does. First observe that $u\in U$ if and only if $s(u) = \star$. If $\star$ appears infinitely often in $s(\pi)$ then $\pi$ is in $\Brchs(U)$ therefore it is accepting, thus $\play_\pi$ is winning. If $\star$ does not appears infinitely often in $s(\pi)$ let $u$  and $i_1,i_2,\ldots$ be such that $\pi=u i_1 i_2 \cdots$ and for all $k$, $s(u i_1 \cdots i_k)\neq \star$. Assume by contradiction that $\square$ does not appears infinitely often in $s(\pi)$. Therefore there exists $\ell$ such that for all $k\geq \ell$, $s(u i_1 \cdots i_k) = \circ$, therefore $k = d(u i_1 \cdots i_{k-1})$. Thus $\pi$ is a branch where at some point $d$ is followed, but no node in $U$ is eventually reached, which means that $d$ does not points to $U$ hence, raises a contradiction.

Therefore $\play_\pi$ is a winning play, thus $\phi$ is a winning strategy.
\end{proof}

Thanks to Theorem~\ref{theo:LLarge_AcceptanceGame} we can now easily prove that language of the form $\LLarge{\A}$ are always $\omega$-regular.

\begin{theorem}\label{thm:LLarge:reg}
Let $\A=\langle A, Q , q_{ini},\Delta,\col \rangle$ be a parity tree automaton using $d$ colours. Then there exists a parity tree automaton $\A'=\langle A, Q' , q'_{ini},\Delta',\col' \rangle$ such that $\LLarge{\A}=L(\A')$. Moreover $|Q'|=\mathcal{O}(d|Q|)$ and $\A'$ uses $d+2$ colours.
\end{theorem}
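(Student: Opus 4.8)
The plan is to run the same recipe as in the previous sections. By Theorem~\ref{theo:LLarge_AcceptanceGame} one has $t\in\LLarge{\A}$ if and only if \Eloise wins $\gameAccLarge{\A,t}$ from $(\qini,\epsilon,\circ)$, so it suffices to reshape $\gameAccLarge{\A,t}$ into a game of the exact form of the classical acceptance game of Section~\ref{ssection:TARTLAG} for a suitable parity tree automaton $\A'$; the characterisation of $L(\A')$ by its acceptance game then gives $\LLarge{\A}=L(\A')$. The first observation is that the local structure of $\gameAccLarge{\A,t}$ (Figure~\ref{largeGame}) depends on $t$ only through the current letter $t(u)$, and that all of \Eloise's moves — the choice of a transition $(q,t(u),q_0,q_1)\in\Delta$, of a proposed direction $i\in\{0,1\}$ and of a marked set $S\subseteq\{0,1\}$ — are made at a single vertex and can therefore be absorbed into the nondeterminism of $\A'$. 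Indeed, once $i$ and $S$ are fixed, the direction $j$ chosen by \Abelard determines the symbol attached to the son $u\cdot j$: it is $\star$ if $j\in S$, otherwise $\circ$ if $j=i$, and otherwise $\square$. Thus \Abelard's direction becomes the reading direction of the tree automaton.

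Concretely, I would take $Q'=Q\times\{\star,\circ,\square\}\times\colors$, where the last component stores the least colour of $\A$ seen since the last $\star$-vertex; this gives $|Q'|=\mathcal{O}(d|Q|)$, with initial state $(\qini,\circ,\col(\qini))$. A transition of $\A'$ from a state with first component $q$ on letter $a$ selects some $(q,a,q_0,q_1)\in\Delta$, some $i\in\{0,1\}$ and some $S\subseteq\{0,1\}$, and sends son $j$ to the state $(q_j,\textit{symb}_j,c_j)$, where $\textit{symb}_j$ is read off $(i,S,j)$ as above and $c_j$ is obtained from the stored colour by the usual update: it is reset to $\col(q_j)$ when the source of the transition is a $\star$-state, and set to the minimum of its current value and $\col(q_j)$ otherwise. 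Completeness of $\A'$ is inherited from that of $\A$.

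The only real work is to express the winning condition of $\gameAccLarge{\A,t}$ as a genuine parity condition, and this is where the two extra colours are spent. I would let $\col'$ output, at every $\star$-state, the stored colour (a colour of $\A$), at every $\square$-state a fixed \emph{even} colour $e$, and at every $\circ$-state a fixed \emph{odd} colour $o$, chosen larger than every colour used by $\A$ and with $e<o$; this adds exactly two colours, for a total of $d+2$. Correctness is then checked on each branch $\pi$ of the game according to how often $\star$ is met. When $\star$ occurs infinitely often, the $\star$-vertices cut $\pi$ into consecutive segments, the value output at each $\star$-vertex is the least colour of the segment it closes, and a routine $\liminf$ argument shows that the least such value occurring infinitely often equals the least colour occurring infinitely often along $\pi$ in $\run$; since $e$ and $o$ exceed all colours of $\A$, the colours of the $\circ$- and $\square$-states cannot lower this $\liminf$, so the parity condition given by $\col'$ holds exactly when the original one does, matching the first disjunct. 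When $\star$ occurs only finitely often no stored colour is output infinitely often, so the least colour seen infinitely often is $e$ if $\square$ occurs infinitely often — a win, as required by the second disjunct — and is $o$ if eventually only $\circ$ is seen — a loss; the inequality $e<o$ is precisely what lets infinitely many $\square$-vertices beat the $\circ$-vertices. Hence $\game_{\A',t}$ and $\gameAccLarge{\A,t}$ have the same winner from the initial vertex and the claim follows. The main obstacle is exactly this last verification: forcing a single parity condition to simultaneously recover the parity of $\run$ but only along the $\star$-vertices, and, for branches with finitely many $\star$-vertices, implement the alternative in which infinitely many $\square$-vertices override the $\circ$-vertices, all with only two fresh priorities.
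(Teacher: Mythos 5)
Your proof is correct and takes essentially the same route as the paper's: the paper likewise turns $\gameAccLarge{\A,t}$ into a classical acceptance game by storing in the state the smallest colour seen since the last $\star$-state, giving $\square$-states a large even colour $e$, $\circ$-states the odd colour $e+1$, and $\star$-states the stored colour, then reading $\A'$ off the resulting game. Your explicit state space $Q\times\{\star,\circ,\square\}\times\colors$ and the $\liminf$ verification for branches with infinitely versus finitely many $\star$-vertices merely spell out details the paper leaves implicit.
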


\begin{proof}
The  game $\gameAccLarge{\A,t}$ can be transformed into a standard acceptance game for $\omega$-regular language (as defined in Section~\ref{ssection:TARTLAG}) by the following trick (this is the same as the one for $\gameRejCountB{\A,t}$). One adds to states an integer where one stores the smallest colour seen since the last $\star$-state was visited (this colour is easily updated); whenever a starred state is visited the colour is reseted to the colour of the state. Now $\square$-states are given an even colour $e$ that is greater or equal than all colour previously used (hence, it ensures that if finitely many $\star$-states but infinitely many $\square$-states are visited then \Eloise wins), $\circ$-states are given the odd colour $e+1$ (hence it ensures that if at some points only $\circ$-states are visited, \Eloise looses) and starred states are given the colour that was stored (hence, if infinitely many starred states are visited we retrieve the previous parity condition).
It should then be clear that the latter game is a classical acceptance game, showing that $\LLarge{\A}$ is $\omega$-regular. 

The construction of $\A'$ is immediate from the final game and the size is linear in $d|Q|$ due to the fact that one needs to compute the smallest colour visited between to starred states.
\end{proof}

\newcommand{\noopsort}[1]{} \newcommand{\singleletter}[1]{#1}
  \newcommand{\etal}{et al.}

\end{document}